\newcommand{\mn}[1]{\ensuremath{\mathsf{#1}}}
\newcommand{\saveproblem}{\vspace{-3pt}}
\newcommand{\nexptime}{\textsc{NExpTime}\xspace}
\newcommand{\conexptime}{\textsc{coNExpTime}\xspace}
\newcommand{\exptime}{\textsc{ExpTime}\xspace}
\newcommand{\ptime}{\textsc{PTime}\xspace}
\newcommand{\twoexptime}{\textsc{2ExpTime}\xspace}
\newcommand{\np}{\textsc{NP}\xspace}
\newcommand{\conp}{\textsc{coNP}\xspace}
\newcommand{\nlogspace}{\textsc{NL}\xspace}
\newcommand{\Ff}{{\mathcal{F}}}
\newcommand{\Ii}{{\mathcal{I}}}
\newcommand{\Jj}{{\mathcal{J}}}
\newcommand{\Mm}{{\mathcal{M}}}
\newcommand{\Oo}{{\mathcal{O}}}
\newcommand{\Tt}{{\mathcal{T}}}
\newcommand{\dlname}[1]{\ensuremath{\mathcal{#1}}\xspace}
\newcommand{\ALCI}{\dlname{ALCI}}
\newcommand{\ALCO}{\dlname{ALCO}}
\newcommand{\ELIbot}{\dlname{ELI}_{\!\bot}}
\newcommand{\ALCOI}{\dlname{ALCOI}}
\newcommand{\ALCIF}{\dlname{ALCIF}}
\newcommand{\ALCHIF}{\dlname{ALCHIF}}
\newcommand{\ALCHOI}{\dlname{ALCHOI}}
\newcommand{\ALCHOIF}{\dlname{ALCHOIF}}
\newcommand{\DLLITEBOOL}{\mathit{DL\mbox{-}Lite}_{\mathrm{Bool}}^{\mathcal{H}\mathcal{O}\mathcal{F}}}
\newcommand{\DLLITEOBOOL}{\mathit{DL\mbox{-}Lite}_{\mathrm{Bool}}^{\mathcal{O}}}
\newcommand{\DLLITEFUNC}{\mathit{DL\mbox{-}Lite}^{\mathcal{H}\mathcal{F}}}
\newcommand{\DLLITE}{\mathit{DL\mbox{-}Lite}}
\newcommand{\ISA}{\ensuremath{\sqsubseteq}}
\newcommand{\tp}{\mn{tp}}
\newcommand{\adom}[1]{\mn{adom}(#1)}
\newtheorem{definition}{Definition}
\newtheorem{theorem}{Theorem}
\newtheorem{corollary}{Corollary}
\newtheorem{lemma}{Lemma}
\newtheorem{example}{Example}
\newcommand{\fix}{\mathsf{FIX}}
\newcommand{\cwa}{\mathsf{CL}}
\renewcommand{\det}{\mathsf{DET}}
\newcommand{\rest}{\cwa}
\newcommand{\Qcwa}{Q_{\cwa}}
\newcommand{\Qdet}{Q_{\det}}
\newcommand{\Qfix}{Q_{\fix}}
\newcommand{\Lth}{\mathcal{L}_{\mathsf{TH}}}
\newcommand{\Lcwa}{\mathcal{L}_{\cwa}}
\newcommand{\Ldet}{\mathcal{L}_{\det}}
\newcommand{\Lfix}{\mathcal{L}_{\fix}}
\newcommand{\Lq}{\mathcal{L}_{\mathsf{Q}}}
\newcommand{\const}{\mathbf{Const}}
\newcommand{\rels}{\mathbf{Rel}}
\newcommand{\theories}{\mathbf{T}}
\newcommand{\queries}{\mathbf{Q}}
\newcommand{\arity}{\mathit{arity}}
\newcommand{\semantics}{\mathsf{MOD}}
\newcommand{\answer}[2]{\llbracket #1 \rrbracket_{#2}}
\newcommand{\mcI}{\mathcal{I}}
\newcommand{\mcJ}{\mathcal{J}}
\newcommand{\mcF}{\mathcal{F}}
\newcommand{\dlkb}{\mathcal{O}}
\newcommand{\conceptnames}{\mathsf{N_C}}
\newcommand{\simpleconcepts}{\mathsf{N_C^{+}}}
\newcommand{\rolenames}{\mathsf{N_R}}
\newcommand{\roles}{\mathsf{{N}_R^{+}}}
\newcommand{\alnull}{\aleph_0}
\newcommand{\nat}{\mathbb{N}}
\newcommand{\natstar}{\mathbb{N}^{*}}
\newcommand{\tiles}{\mathsf{Tiles}}
\newcommand{\func}{\mathsf{func}}
\newcommand{\UCQ}{\mathcal{UCQ}}
\newcommand{\CQ}{\mathcal{CQ}}
\newcommand{\AQ}{\mathcal{AQ}}
\newcommand{\IQ}{\mathcal{IQ}}
\newcommand{\parTELIbot}{\mathsf{T}{:}\,\ELIbot}
\newcommand{\parTALCHOI}{\mathsf{T}{:}\,\ALCHOI}
\newcommand{\parTALCO}{\mathsf{T}{:}\,\ALCO}
\newcommand{\parTALCOI}{\mathsf{T}{:}\,\ALCOI}
\newcommand{\parTALCHOIF}{\mathsf{T}{:}\,\ALCHOIF}
\newcommand{\parTALCHIF}{\mathsf{T}{:}\,\ALCHIF}
\newcommand{\parTLITEBOOL}{\mathsf{T}{:}\,\DLLITEBOOL}
\newcommand{\parTLITEOBOOL}{\mathsf{T}{:}\,\DLLITEOBOOL}
\newcommand{\parTLITEFUNC}{\mathsf{T}{:}\,\DLLITEFUNC}
\newcommand{\parTL}{\mathsf{T}{:}\,\Lth}
\newcommand{\parCCQ}{\mathsf{C}{:}\,\CQ}
\newcommand{\parCL}{\mathsf{C}{:}\,\mathcal{L}_{\cwa}}
\newcommand{\parCAQ}{\mathsf{C}{:}\,\AQ}
\newcommand{\parCIQ}{\mathsf{C}{:}\,\IQ}
\newcommand{\parCempty}{\mathsf{C}{:}\,\emptyset}
\newcommand{\parDCQ}{\mathsf{D}{:}\,\CQ}
\newcommand{\parDL}{\mathsf{D}{:}\,\mathcal{L}_{\det}}
\newcommand{\parDempty}{\mathsf{D}{:}\,\emptyset}
\newcommand{\parDany}{\mathsf{D}{:}\,\queries}
\newcommand{\parFL}{\mathsf{F}{:}\,\mathcal{L}_{\fix}}
\newcommand{\parFAQ}{\mathsf{F}{:}\,\AQ}
\newcommand{\parFempty}{\mathsf{F}{:}\,\emptyset}
\newcommand{\parQCQ}{\mathsf{Q}{:}\,\CQ}
\newcommand{\parQL}{\mathsf{Q}{:}\,\mathcal{L}_{\mathsf{Q}}}
\newcommand{\parQAQ}{\mathsf{Q}{:}\,\AQ}
\newcommand{\parQIQ}{\mathsf{Q}{:}\,\IQ}
\newcommand{\recognition}{\mathsf{FOCUS}}
\newcommand{\entailment}{\mathsf{ENTAILMENT}}
\newcommand{\consistency}{\mathsf{CONSISTENCY}}
\newcommand{\emptiness}{\mathsf{EMPTINESS}\xspace} 
\newcommand{\mixedsat}{\mathsf{MIXED\textsf{-}SAT}}
\newcommand{\mixedunsat}{\mathsf{MIXED\textsf{-}UNSAT}}
\newcommand{\mixedqa}{\mathsf{MIXED\textsf{-}OMQA}}
\newcommand{\qempty}{\mathsf{NULLABILITY}}
\newcommand{\problemdef}[3]{
  \smallskip
\begin{center}
\begin{tabular}{|r p{19.9em} |}
  \hline
\multicolumn{2}{|l|}{\rule{0pt}{2.6ex}\textbf{#1}} \\[1ex]
\textit{Input:}     & #2\\[1ex]
\textit{Question:}  & #3 \\[.5ex]
\hline
\end{tabular}
\end{center}
\smallskip
}
\newcommand{\problemdefprom}[4]{
  \smallskip
  \begin{center}
\begin{tabular}{|r p{19.9em} |}
  \hline 
\multicolumn{2}{|l|}{\rule{0pt}{2.6ex}\textbf{#1}} \\[1ex]
\textit{Input:}   & #2\\[1ex]
                  &  \vspace*{-10pt}  #3 \vspace*{-10pt}\\[1ex]
  \textit{Question:}  & #4
  \\ 
\hline
\end{tabular}
\end{center}
\smallskip
}
\begin{document}
\title{Ontology Focusing: Know\-ledge-enriched Databases on Demand}



\author{Tomasz Gogacz}
\affiliation{%
  \institution{University of Warsaw}
}
\email{t.gogacz@mimuw.edu.pl}

\author{V\'ictor Guti\'errez-Basulto}
\affiliation{%
  \institution{Cardiff University}
}
\email{gutierrezbasultov@cardiff.ac.uk}

\author{Yazm\'in
Ib\'a\~nez-Garc\'ia}
\affiliation{%
  \institution{TU Wien}
}
\email{yazmin.garcia@tuwien.ac.at }

\author{Filip Mur\-lak}
\affiliation{%
  \institution{University of Warsaw}
}
\email{fmurlak@mimuw.edu.pl}

\author{Magdalena Ortiz}
\affiliation{%
 \institution{TU Wien}
}
\email{ortiz@kr.tuwien.ac.at}

\author{Mantas \v{S}imkus}
\affiliation{%
  \institution{TU Wien}
}
\email{simkus@dbai.tuwien.ac.at}

\renewcommand{\shortauthors}{T. Gogacz et al.}

\begin{abstract}
We propose a novel framework to facilitate the on-demand design of data-centric systems by exploiting domain knowledge from an existing ontology. Its key ingredient is a process that we call focusing, which allows to obtain a schema for a (possibly knowledge-enriched) database semi-automatically, given an ontology and a specification of the scope of the desired system. We formalize the inputs and outputs of focusing, and identify relevant computational problems: finding a schema via focusing, testing its consistency, and answering queries in the knowledge-enriched databases it produces. These definitions are fully independent from the ontology language. We then instantiate the framework using selected description logics as ontology languages, and popular classes of queries for specifying the scope of the system. For several representative combinations, we study the decidability and complexity of the identified computational problems. As a by-product, we isolate (and solve) variants of classical decision problems in description logics, that are  interesting in their own right.
\end{abstract}

%
%
\begin{CCSXML}
<ccs2012>
<concept>
<concept_id>10002951.10002952.10002953</concept_id>
<concept_desc>Information systems~Database design and models</concept_desc>
<concept_significance>500</concept_significance>
</concept>
<concept>
<concept_id>10010147.10010178.10010187.10003797</concept_id>
<concept_desc>Computing methodologies~Description logics</concept_desc>
<concept_significance>500</concept_significance>
</concept>
</ccs2012>
\end{CCSXML}

\begin{CCSXML}
<ccs2012>
</ccs2012>
\end{CCSXML}

\ccsdesc[500]{Information systems~Database design \& models} 
\ccsdesc[500]{Computing methodologies~Description logics}

\keywords{Ontologies; Description logics; Database Schemas; Open and
  Clo\-sed World Assumption}

\maketitle

\section{Introduction} 
In the design of data-centric systems, coming up with the right data
organization (in terms of database schemas, integrity
constraints, conceptual models, etc.) is of paramount importance. If
well-chosen, it can make the implementation of the remaining \linebreak
functionality more evident, as it binds the developers to one shared
and unambiguous view of the data to be managed by the target
system. Unfortunately, coming up with the right data organization
remains challenging and time-consuming, despite the many techniques
and tools that are available to aid the design of data-centric
systems. In addition, modern systems face further challenges, like
incompleteness of information, or the need for interoperability with
multiple other
systems~\cite{DBLP:journals/dagstuhl-manifestos/AbiteboulABBCD018}. 
 


We propose a novel way to exploit domain knowledge
captured in ontologies in the design of data-centric systems.
Ontologies, understood here as logical theories
expressing domain knowledge, provide a shared understanding of
the domain to different users and applications; justified by expected
reusability, considerable resources have been invested in constructing
high-quality ontologies for many domains~\cite{Horrocks08}.   In data management they
have already proved to be a powerful tool.  Successful applications
include data integration and querying incomplete data sources, where
they are used on-line during the system operation to infer additional
facts from incomplete data, and to provide a unified view of
heterogenous data sources~\cite{PoggiLCGLR08,BenediktGK18,XiaoCKLPRZ18}.
%
%
%
Here, we would like to use an existing ontology to produce, quickly
and with moderate effort, data-centric systems on demand.
To achieve this goal, two key challenges need to be overcome.
\begin{description}
\item[Specificity.] Ontologies are typically
  broad, containing many terms irrelevant for the intended
  application.  We need methods to restrict their scope to  
  obtain more manageable conceptualizations. 
\item[Data completeness.] Ontologies have an  \emph{open-world} semantics that
  treats data as incomplete, while every meaningful data-centric application
  will call for some completeness assumptions on (parts of) the data. 
\end{description} 
%
As a response to these challenges, we propose a process called \emph{focusing} that allows
us to trim away irrelevant information, and establish completeness
assumptions.  The goal of this process is to find so-called
\emph{focusing solutions}.  Syntactically, a focusing solution
provides a schema for a database and specifies how its instances are
enriched with the knowledge from the ontology, by prescribing which
`parts' of the ontology are relevant and which are complete.
Semantically, focusing solutions define a set of \emph{intended
models} for these knowledge-enriched databases: those that give the
expected answers for the relevant queries.

Our main contribution is formalizing focusing solutions.
The notion is independent from the ontology language and gives several
options for specifying the scope of the system. 
We 
 identify key
computational problems relevant for obtaining and using focusing
solutions. As an advanced proof of concept, we instantiate our general
notions by considering a few choices of ontology languages and
scope specifications. For these combinations, we study the
decidability and complexity of the introduced computational problems.
As a by-product, we isolate (and solve) variants of classical
reasoning tasks that seem to be interesting in their own right.


\section{General Framework}

We now discuss the key features and design choices in the
framework we propose. We begin from a motivating use case.

\subsection{Emergency response in a smart city} 
City authorities rely on an ontology to quickly build
situation-specific applications supporting response to emergency 
events. \linebreak The ontology contains, among others: 
\begin{inparaenum}
\item\label{city} Data about the city, e.g., districts, population, 
  facilities such as hospitals, schools, and sport centers, together
  with associated details like their capacity, size, and existing services;
\item\label{risks} 
The city's risk assessment data, detailing possible emergencies;
\item\label{emergency} 
Knowledge about public health emergencies, e.g., types of emergencies include 
disease outbreaks, radiation emergencies,
and disasters and weather emergencies; the latter
 include extreme heat, floods, hurricanes, and wild fires;
\item
Emergency response knowledge, like types of responders (paramedics,
firefighters, police officers, etc.); 
\item\label{gral} 
General knowledge about relevant topics like weather, buildings, or cities.
\end{inparaenum}

Having such an ontology for this purpose is possible and
realistic. In fact, many cities now compile and store  data  concerning (\ref{city})
and (\ref{risks}), and there are vast repositories of
online resources and readily available ontologies for
(\ref{emergency})--(\ref{gral}); for example, see \cite{smartcity}.
Moreover, maintaining such an ontology can be part of the routine 
preparation measures carried out in emergency response departments
when there are no emergencies.

When a disaster happens, an automated focusing engine is fed with
relevant parameters, to obtain focusing solutions that can help
quickly build tools specific to this kind of disaster. For example, in
many emergencies, so-called `community assessment' questionnaires are
used to gather critical public health data such as availability of
drinking water and electricity in households. Existing guidelines
suggest to first design the database that will be used to gather and
analyze the data, and to guide the questionnaire design with it. Using
an ontology like the one we have described, one could instead propose
the database tables automatically. In the event of a disaster like a
flood, one could quickly create an application for storing, for
example, the current alert level in the different districts of the
city and the complete list of operating shelters together with
currently available resources, while disregarding any knowledge in the
ontology related to other possible disasters, such as wild fires or
extreme heat.

\subsection{What should focusing be?} 

The starting point is an ontology, expressed as a theory in some
logical formalism over a relational signature.  Based on some
specification of the intended scope of the system, we need to decide
which predicates are to be supported by the system and how 
the database will interact with the ontology at run time. Let us
assume that we only allow ourselves two make two decisions about each
predicate: whether it will evolve during the run of the system and so
should be stored in the database (dynamic vs static predicates), and
whether it should be viewed as a complete representation of the data
it represents (closed vs open predicates). The focusing solutions are
designed to support these decisions, as well as the specification of
the scope of the system. We now informally describe the four
components of focusing solutions.

\smallskip

A bare-bones focusing solution is a \textbf{database schema}, collecting all
dynamic predicates. Such system allows storing relevant data and
updating it as the reality evolves, but neither gives any
completeness guarantees, nor improves the specificity.

\smallskip

We address the \textbf{data completeness} issue by declaring some
predicates as \emph{closed}. Semantically, these declarations trim the
set of represented models, keeping only those that agree with the
current database instance on the closed predicates. In particular, if
we decide to close all dynamic predicates, we end up with a standard
database, with the ontology acting as a set of integrity
constraints.

The actual intention behind declaring a predicate as closed, is 
to commit the system to keep the stored content of the predicate
identical to its real-world interpretation. Thus, closed dynamic predicates
should be used to represent changing aspects of reality that are fully
observable to the system. In our motivating example, these could
include the precise list of districts for which an evacuation was
ordered, the open shelters, assignments of personal to tasks,
shelters, etc. Some of these aspects may be fully observable because
they are actually controlled by the system (maybe the evacuation orders are
issued by the system itself), for others we might rely on updates from
some other trusted system.

\smallskip

We address the \textbf{specificity} issue by declaring some predicates as
\emph{fixed}. When we fix a predicate, we require its extension to be
determined by the ontology alone. That is, in an intended model, a
fixed predicate will contain a tuple of constants only if the ontology
alone entails the corresponding atom. When we fix a predicate, it
effectively becomes static and closed: even if it is stored in the
database, it has only one allowed extension.

If a predicate is \emph{not} populated by the ontology (which is quite
common since most ontologies focus on \emph{terminological} rather
than \emph{assertional} knowledge), fixing it enforces its extension
to be empty. By fixing irrelevant predicates, we avoid reasoning about
them and make the ontology more specific for the current situation.
This reflects the intuition that the more specific our knowledge of
the situation is, the more inferences we can draw from the ontology
about our situation.

Fixed predicates will typically include aspects of reality that are captured in
the ontology but irrelevant to our application, like the specifics of
other types of emergencies not related to the current one. We can also
fix predicates that are relevant, but correspond to immutable aspects
of reality, and their extension is uniquely (and accurately)
determined by the ontology. For instance, all the districts of a city,
or all hospitals in a district.

A fixed predicate can in principle act as an additional integrity
constraint. This happens if declaring the predicate as fixed discards
all intended models for some instance, thus making the instance
\emph{inconsistent}. This is undesired, and will be explicitly
forbidden in the definition of focusing solution.  



\smallskip

Our focusing solutions already provide a database schema and an
interface to the ontology that addresses the specificity and data
completeness issues, but so far we have been assuming that the
designer makes all the choices, as appropriate for the intended scope
of the system. In order to support these choices using automated
reasoning, or at least check that they are correct, we need the
designer to provide a formal \textbf{specification of the scope}. We
propose to specify the scope in terms of queries that will be posed
when the system is running. We shall have three families of such queries.

\emph{Determined queries} are the ones for which we want a guarantee
that the answers entailed by the data and the ontology are complete
with respect to all possible models. Equivalently, the answers do not
depend on the concrete model, as long as it is compatible with the
database contents and the ontology. Declaring a query as determined
should be viewed as a demand of the designer:  this query needs to be
determined, how do we guarantee this?

\emph{Fixed queries} generalize fixed predicates: complete answers to
these queries are entailed by the ontology alone. Declaring a query
as fixed is a decision rather than a demand: We freeze the answers as
the ones entailed by the ontology alone, and models yielding more answers
should be discarded. The more assumptions of this kind, the easier it
is to make other queries determined.

\emph{Closed queries} generalize closed (dynamic) predicates. For
these queries, we are making the assumption that complete answers can
be obtained from the data alone, by directly evaluating the
query. Declaring a query as closed is a promise made by the designer:
I am prepared to maintain the data in such a way that this query is
closed. Again, the more assumptions of this kind, the easier it is to
make other queries determined.

Note that, in fact, all three families of queries can be viewed as
completeness assertions: closed queries talk about completeness of the
data, fixed queries talk about completeness of the ontology, and
determined queries talk about completeness of the combination of
both. Allowing fixed and closed queries, rather than just predicates,
gives a bit more flexibility to the designer. For example, it might
not be reasonable to assume complete knowledge about all buildings in
the city, or all hospitals in the country, but maintaining up-to-date
information about these buildings in the city that are hospitals is a
perfectly reasonable requirement.

\smallskip

The \emph{focusing} problem is to find a \emph{focusing solution} that 
guarantees that certain queries are determined,
assuming that certain other queries are closed or fixed.
These solutions are in general not unique and there are
many trade-offs involved. For example, the more predicates are closed,
the easier it is for a query to be determined, but we must pay the
maintenance costs for each predicate we close. It seems desirable to
have as few closed predicates as possible, provided we can guarantee
that the suitable queries are determined. On the other hand, if
suitable queries are already determined, it may be desirable
to fix as many predicates as is possible without making any instances
inconsistent. 
In the following subsection we formalize the outcome of this
discussion.

\subsection{What is a focusing solution?} \label{sec:what-focusing}

We shall keep our notions independent from the
specific forma\-lisms used to express ontologies and queries.

 We assume an infinite set $\const$ of \emph{constants}
and  an infinite set $\rels$ of \emph{relation symbols}.
 Each relation symbol $r \in \rels$ has a non-negative integer
  \emph{arity}, denoted by $\arity(r)$.
 An \emph{atom} is an expression of the form $r(c_1,\ldots,c_n)$,
  where $r\in \rels$, $\{c_1,\ldots,c_n\}\subseteq\const$, and $n=\arity(r)$.
 A \emph{(database) instance} $\mcI$ is any set of atoms.
 A \emph{signature} $\Sigma$ is any set of relation symbols.
 An instance $\mcI$ is \emph{over} a signature $\Sigma$, if
  $r(\vec{t})\in \mcI$ implies $r\in\Sigma$.
 The \emph{active domain} of $\Ii$, denoted by $\adom{\Ii}$, is
  the set of all constants in the atoms of $\Ii$.
     
 We assume an infinite set $\theories$ of \emph{theories}.
 Each theory $\varphi\in \theories$ is associated with a set of
  instances that are called \emph{models} of $\varphi$.
 We write $\mcI\models \varphi$ if $\mcI$ is a model of
  $\varphi$.
 
 We assume an infinite set $\queries$ of \emph{queries}.
 Each query $q\in \queries$ has a non-negative integer arity,
  denoted $\arity(q)$.
 Each query $q\in\queries$ is associated with a function
  $\answer{\cdot}{q}$ that maps every instance $\mcI$ to an $n$-ary
  relation $\answer{\mcI}{q}\subseteq \const^{n}$, where
  $n=\arity(q)$.
   Queries of arity 0 are called \emph{Boolean}; their associated
    functions map instances to subsets of $\const^{0} =
    \{\varepsilon\}$, where $\varepsilon$ is the empty tuple. A
    Boolean query \emph{holds} in an instance $\mcI$, if
    $\answer{\mcI}{q} = \{\varepsilon\}$.
 We need the notion of \emph{certain answers}. Given an $n$-ary
  query $q\in \queries$, a theory $\varphi$ and an instance $\mcI$, we let
  $\answer{\varphi,\mcI}{q}$ denote the set of $n$-tuples $\vec{u}\in \const^{n}$
  satisfying the following implication: if
  $\mcJ\models \varphi$ and $\mcI\subseteq \mcJ$, then $\vec{u}\in \answer{\mcJ}{q}$.

A \emph{focusing configuration} is a database schema together with \linebreak
three sets of queries representing completeness assertions about the data
and about the theory, and determinacy assertions.

\begin{definition}[Focusing configuration]
  \label{def:focusing-configuration}
  A (focusing) configuration is a tuple
  $\mcF=(\Sigma,\Qcwa,\Qfix,\Qdet)$, where $\Sigma\subseteq \rels$ is
  a signature, and $\Qcwa,\Qfix,\Qdet\subseteq \queries$ are sets of
  queries.  An instance $\mcI$ is \emph{legal} for $\mcF$ in case it
  is over $\Sigma$.
\end{definition}

Let us explain how the four ingredients of a focusing configuration
$\mcF=(\Sigma,\Qcwa,\Qfix,\Qdet)$ work. First, the signature $\Sigma$
is a database schema that determines database instances of
interest: an instance is \emph{legal} for $\Ff$ it is over the
signature $\Sigma$. 


The queries from $\Qcwa$ specify completeness assertions
about data, effectively restricting the set of models represented by a
legal instance as follows.

\begin{definition}[Query-based Completeness] \label{def:closed-queries}
  Given a theory $\varphi$, an instance $\mcI$, and a set $Q$ of
  queries, we let $\rest(\varphi,\mcI,Q)$ be the set of all instances
  $\mcJ$ such that
  \begin{enumerate}
  \item $\mcI\subseteq\mcJ$,
  \item $\mcJ\models \varphi$, and
  \item $\answer{\mcI}{q}=\answer{\mcJ}{q}$ for all $q\in Q$.
  \end{enumerate}
\end{definition}

Intuitively, $\rest(\varphi,\mcI,Q)$ contains exactly the models of
$\varphi$ and $\mcI$ that provide no new information about the queries
in $Q$ compared to the information given by $\mcI$ alone. 

The queries in $\Qfix$ specify completeness assertions about the theory,
further restricting the set of represented models.

\begin{definition}[Query-based Fixing]
  For a theory $\varphi$, an instance $\mcI$, and a set $Q$ of
  queries, we let $\fix(\varphi,\mcI,Q)$ be the set of all instances
  $\mcJ$ such that
  \begin{enumerate}
  \item $\mcI\subseteq\mcJ$,
  \item $\mcJ\models \varphi$, and
  \item $\answer{\mcJ}{q}=\answer{\varphi,\emptyset}{q}$ for all
    $q\in Q$.
  \end{enumerate}
\end{definition}

Intuitively, $\fix(\varphi,\mcI,Q)$ contains exactly the models of
$\varphi$ and $\mcI$ that provide no new information about the queries
in $Q$ compared to the information given by 
$\varphi$ alone.


Thus, a configuration $\mcF=(\Sigma,\Qcwa,\Qfix,\Qdet)$ tells us to
consider only database instances $\mcI$ over $\Sigma$ as legal. For
each such instance $\mcI$, we are to assume that the information
retrieved from $\mcI$ alone by the queries in $\Qcwa$ is complete, and 
we are to restrict our attention to models where 
the answers to queries in $\Qfix$ are frozen to the facts inferred
from the initial theory $\varphi$ (without any data). These are the
intended models represented by a concrete legal database instance
$\mcI$.

\begin{definition}[Intended models]
  For  a theory $\varphi $, a configuration
  $\mcF= (\Sigma,\Qcwa,\Qfix,\Qdet)$, and an instance
  $\mcI$ over $\Sigma$, let  \[\semantics(\varphi,\mcF,\mcI)=\rest(\varphi,\mcI,\Qcwa)\cap
  \fix(\varphi,\mcI,\Qfix)\,.\]
\end{definition}

We are now ready to provide the definition of focusing, which makes
the role of $\Qdet$ precise: we shall demand that freezing answers to
the queries from $\Qfix$ does not affect consistency of instances and
that the answers to the queries in $\Qdet$ coincide over all intended
models.

\begin{definition}[Focusing]\label{def:focusing} 
  A \emph{focusing solution} for a theory $\varphi $ is a
  configuration $\mcF=(\Sigma,\Qcwa,\Qfix,\Qdet)$ such that the
  next two conditions are obeyed \emph{for all} instances $\mcI$~over~$\Sigma$:
  \begin{enumerate}
  \item if $\rest(\varphi,\mcI,\Qcwa)\neq \emptyset$, then
    $\semantics(\varphi,\mcF,\mcI) \neq \emptyset$;
  \item for all
    $\mcJ_1,\mcJ_2\in \semantics(\varphi,\mcF,\mcI)$, and all $q\in \Qdet$,  we have
    $\answer{\mcJ_1}{q}=\answer{\mcJ_2}{q}$.
  \end{enumerate}
 \end{definition}

Let us see how this definition captures the scenario discussed in the
previous subsection. Suppose that the designer specifies a set $\Qdet$
determined queries that need to be guaranteed as well as sets
$\Qcwa^u$ of queries that are promised to be closed, and $\Qfix^u$ of
queries are chosen to be interpreted as fixed.  We now want to find a
correct focusing solution of the form
 \[ \mcF=\left(\Sigma, \, \Qcwa, \Qfix,\, \Qdet\right)\] with $\Qcwa^u
\subseteq \Qcwa$ and $\Qfix^u \subseteq \Qfix$.  There may be many
such focusing solutions, and they are all good in the sense that they
guarantee correct answers to $\Qdet$.  This leaves some space to
accommodate additional preferences of the designer; we discuss it in
the following subsection.

\subsection{Getting and using focusing solutions}

We shall now identify key reasoning problems crucial in obtaining and
using focusing solutions. For each problem the input includes a
focusing configuration $\mcF$; some problems additionally input
theories, database instances, and queries. To be able to speak about
concrete formalisms, we parameterize the problems by query and
ontology languages. We write $\parTL$ to indicate that the language
used for expressing theories is $\Lth \subseteq
\theories$. Similarly, we use $\parCL$, $\parFL$, $\parDL$, and
$\parQL$ for $\Lcwa, \Lfix, \Ldet, \Lq\subseteq \queries$.

The main problem is recognizing focusing solutions among
focusing configurations. 

\problemdefprom{$\recognition(\parTL,\parCL,\parFL,\parDL)$}{A pair
  $(\varphi,\mcF)$ with  $\mcF=
    (\Sigma,\Qcwa,\Qfix,\Qdet)$, and}{
    \begin{enumerate}[-]
  \item $\varphi\,{\in}\, \Lth$,
    \item $ \Sigma\subseteq \rels$, 
      \item $\Qcwa \subseteq \Lq$, 
        $\Qfix\subseteq \Lfix$,
        $\Qdet \subseteq \Ldet$.
      \end{enumerate}
      \saveproblem
}{Is $\mcF$ a focusing solution to $\varphi$?}

Thus, in the focusing problem a candidate focusing configuration is
given, and the task is to decide if it is a focusing solution. In the
scenario discussed previously, the input consists of closed queries
$\Qcwa^u$, fixed queries $\Qfix^u$, and determined queries $\Qdet$,
and the output is a focusing solution $\left(\Sigma, \,
\Qcwa, \Qfix,\, \Qdet\right)$ with $\Qcwa^u \subseteq \Qcwa$ and
$\Qfix^u \subseteq \Qfix$.
If we consider for $\Qcwa$ and $\Qfix$ a query language that gives us
a finite number of candidates, like atomic queries (which covers the
basic scenario with closed and fixed predicates), the recognition
problem can be used directly in the search for such a solution by
applying exhaustive search. This search can be guided by some
preferences of the designer. For example, a basic strategy could be to
minimize the set of closed queries, and then maximize the set of fixed
ones. More sophisticated strategies could involve a specified order in
which the designer prefers to close predicates, reflecting, for
instance, the cost of maintaining them (size statistics, availability,
acquisition costs, etc). One could also consider semi-automated
approaches, like a dialog approach where successive solutions are
proposed to the designer, who adjusts the specification, or even the
ontology, and accepts some suggested choices while rejecting others,
thus converging to a satisfactory focusing solution. We leave
investigating such strategies to future research.



An additional criterion that might be useful in the search for 
suitable focusing solutions is the existence of consistent
database instances. This is embodied in the following decision
problem.\linebreak Here $\mathcal{P}$ stands for a collection of parameters.
\problemdef{$\emptiness(\mathcal{P})$}{A triple $(\varphi,\mcF)$,
  where $(\varphi,\mcF)\in\recognition(\mathcal{P})$}
{Is $\semantics(\varphi,\mcF,\mcI) = \emptyset $ for each $\mcI$ legal for $\mcF$?}
Obviously a single consistent database instance does not make a
focusing solution very useful, but the criterion can help eliminate
some utterly useless solutions. 

Assuming that a satisfactory focusing solution is found, how do we use
it? Two reasoning tasks are crucial in the operation of the system
resulting from focusing. First, whenever a tuple is inserted, deleted,
or updated, the system needs to check that the resulting database
instance is still consistent. This makes the feasibility of the
following problem essential.
\problemdefprom{$\consistency(\mathcal{P})$}{A triple $(\varphi,\mcF,\mcI)$, where}{
  \begin{enumerate}[-]
  \item $\mcI$ is legal for $\mcF$,
  \item $(\varphi,\mcF)\in\recognition(\mathcal{P})$.
  \end{enumerate}
  \saveproblem
}{$\semantics(\varphi,\mcF,\mcI)\neq \emptyset $ ? }

Finally, the system is there to be queried. This makes the following
entailment problem relevant. 
\problemdefprom{$\entailment(\parQL, \mathcal{P})$}{A tuple $(\varphi,\mcF,\mcI,q)$, where}{  
  \begin{enumerate}[-]
    \item $\mcI$ is legal for $\mcF$,
    \item $q\in \Lq$ is Boolean,
    \item $(\varphi,\mcF)\in\recognition(\mathcal{P})$.
    \end{enumerate}
    \saveproblem
}{Is $q$ true in all $\mcJ\in \semantics(\varphi,\mcF,\mcI)$?}
Note that $\consistency$ is a special case of non-$\entailment$, but
not conversely, so complexity of the two problems might differ.


\section{Concrete problems} 
\label{sec:instantiation}

To illustrate some concrete settings that may be useful, in this
section we instantiate the reasoning problems with some selected
languages.  We discuss how the problems can be solved in these
concrete cases, and provide complexity results for them.

\subsection{Ontology and Query Languages}


Here we recall the concrete formalisms for expressing theories and
queries that we study. As ontology languages we look at description
logics (DLs), and as query languages we consider instance queries, atomic queries, 
and conjunctive queries (CQs).


\subsubsection*{Description Logics (DLs)}\label{sec:descr-logics-dls}
DLs is a family of logics, specifically tailored
for writing ontologies~\cite{DLNewBook}. Most DLs are fragments of first-order logic,
which allow only unary and binary relation symbols. This and other
restrictions allow DLs to be equipped with a special syntax, which
allows to write formulas in a more concise way. We now introduce 
the main DL of this paper, called $\ALCHOIF$.

Let $\conceptnames\subseteq\rels$ be a countably infinite set of unary
relation symbols, called \emph{concept names}, and let
$\rolenames\subseteq\rels$ be a countably infinite set of binary
relation symbols, called \emph{role names}. If $r\in \rolenames$, then
$r$ and the expression $r^{-}$ are \emph{roles} ($r^{-}$ is also
called the \emph{inverse of} $r$). For a role $r^{-}$, we let
$(r^{-})^{-}=r$. We let
$\roles = \rolenames\cup \{r^-\mid r\in\rolenames\}$.  The set of
$\ALCHOIF$ \emph{concepts} is defined inductively as follows:
\begin{inparaenum}[(a)]
\item every concept name $A\in \conceptnames$ is a concept;
\item the  expressions $\top$ and $\bot$ are concepts;
\item the expression $\{c\}$, where  $c\in \const$, is also a concept (called \emph{nominal});
\item if $C,D$ are concepts, and $p$ is a role, then $\neg C$,
  $C\sqcap D$, $C\sqcup D$, $\exists p.C$ and $\forall p.C$ are also
  concepts. 
\end{inparaenum} 
A \emph{concept inclusion} is an expression of the form $C\ISA D$,
where $C,D$ are concepts. A \emph{role inclusion} is an expression of
the form $r\ISA p$, where $r,p$ are roles. A \emph{functionality
  assertion} is an expression of the form $\func(p)$, where $p$ is a
role. An ($\ALCHOIF$) \emph{ontology} $\Oo$ is a finite set of concept
inclusions, role inclusions, and functionality assertions. The DL that
is obtained by disallowing functionality assertions, role inclusions,
or nominals is indicated by, respectively, dropping `$\mathcal{F}$',
`$\mathcal{H}$', or `$\mathcal{O}$' from its name.
An ontology is in the DL $\ELIbot$, if it contains only
concept inclusions, and they are built using only the constructs
$\sqcap$ and $\exists$ (i.e., nominals, $\sqcup$, and $\forall$, as
well as role inclusions and functionality assertions, are forbidden).
We use $\conceptnames(\Oo)$ and $\rolenames(\Oo)$ to denote the sets
of concept names and role names that appear in $\Oo$, respectively.  We let
$\roles(\Oo)=\rolenames(\Oo)\cup \{r^-\mid r\in \rolenames(\Oo) \}$.

The semantics to ontologies is given using instances, defined in
Section~\ref{sec:what-focusing}. Assume an instance $\mcI$ and an
ontology $\Oo$. We define a function $\cdot^{\mcI}$ that maps every
concept $C$ to a set $C^{\mcI}\subseteq \const$, and every role $p$ to
a relation $p^{\mcI}\subseteq \const\times \const$. For a concept name
$A$, and a role name $r$, we let $A^\mcI = \{c\mid A(c)\in \mcI\}$ and
$r^\mcI = \{(c,d)\mid r(c,d)\in \mcI\}$. The function $\cdot^{\mcI}$
is then extended to the remaining concepts and roles as follows:
\begin{gather*}
  \top^{\mcI} = \adom{\Ii},\,\qquad
  \bot^{\mcI} = \emptyset\,, \qquad
  \{c\}^{\mcI}  =\{c\}\,, \\
  (C\sqcup D)^{\mcI}   = C^{\mcI} \cup D^{\mcI} \,, \qquad
  (C\sqcap D)^{\mcI}   = C^{\mcI} \cap D^{\mcI}\,, \\
  \begin{align*}
      (\lnot C)^{\mcI}   & = \adom{\Ii} \setminus C^{\mcI}\,,\\
      (\exists p.C)^{\mcI}  &= \{c\mid \exists d: (c,d)\in p^\mcI \mbox{ and } d\in C^{\mcI}\} \,,\\
    (\forall p.C)^{\mcI}  &= \{c\mid \forall  d: (c,d)\in p^\mcI \mbox{ implies }d\in C^{\mcI}\} \,,\\
    (r^{-})^{\mcI}  &= \{(c,d)\mid (d,c)\in r^{\mcI}\} \,.
\end{align*}
\end{gather*}
  We say $\mcI$ is a \emph{model} of $\Oo$, if the following are satisfied:
  \begin{inparaenum}[(1)]
  \item $\alpha^{\mcI}\subseteq \beta^{\mcI}$ for all concept and role
    inclusions $\alpha\ISA \beta\in \Oo$, and
  \item   $(c,d_1)\in p^{\mcI}$ and $(c,d_2)\in p^{\mcI}$ imply
    $d_1=d_2$, for all $c\in \const$ and $\func(p)\in \Oo$. 
  \end{inparaenum}
We note that by defining the semantics using database instances,
instead of general first-order structures (or \emph{interpretations},
as they are called in DLs), we are effectively interpreting ontologies
under the \emph{Standard Name Assumption (SNA)}. That is, the domain
of interpretation is always the set $\const$, and the interpretation
of constants is given by the identity function. However, the \emph{active
domain} of the instance, as defined in Section \ref{sec:what-focusing},
may well be a proper subset of $\const$. 

In order to simplify presentation, when providing upper bounds we
can concentrate wlog. on ontologies in \emph{normal form}, which
is defined as following.  A \emph{simple concept} $B$ is any concept
in $\const\cup\{\top,\bot\}\cup\{\{c\}\mid c\in\const\}$. We use
$\simpleconcepts(\Oo)$ to denote the set of simple concepts that
appear in $\Oo$.  An ontology $\Oo$ is in \emph{normal form} if all
its statements are of one of the following forms:
\begin{gather*}
  B_1\sqcap \ldots \sqcap B_{k-1} \ISA B_{k}\sqcup\ldots \sqcup B_m\,,\\
  B_1\ISA \exists r. B_2\,, \qquad  B_1\ISA \forall r. B_2\,, \qquad  r\ISA s\,, \qquad \func(r)\,,
\end{gather*}
where $B_1,\ldots,B_m$ are simple concepts, $k > 1$, $m\geq k$, and $r,s$ are roles. It
is well known that any ontology $\Oo$ can be transformed into an
ontology $\Oo'$ in normal form such that $\Oo$ and $\Oo'$ have the
same models up to the signature of $\Oo$.

We will also study  DLs whose definition is based on the above normal form. In particular,  $\DLLITEBOOL$ ontologies are ontologies in normal form that additionally satisfy the following:
\begin{enumerate}[(i)]
\item for all  $B_1\ISA \exists r.B_2$, we have  $B_2=\top$, 
\item for all $B_1\ISA \forall r.B_2$, we have $B_1=\top$, and
 \item  functional roles have no subroles: $\func(s)\in \dlkb$ and
$r \ISA s \in \Oo$ imply $r=s$. 
\end{enumerate}
The DL $\DLLITEFUNC$ is obtained by applying the above restrictions, but
additionally prohibiting nominals, as well as $\sqcup$ and $\sqcap$;
that is, we only have concept inclusions of the form $B_1\ISA B_2$
and the two forms (i) and (ii) shown above. Observe that
\[\DLLITEFUNC\subseteq \DLLITEBOOL\subseteq \ALCHOIF\,.\]

   \subsubsection*{Queries} 

We let $\CQ$ be the class of conjunctive queries (primitive positive
first-order formulas) over the signature $\conceptnames
\cup\rolenames$, with the usual semantics. Occasionally we talk about
$\UCQ$, the class of unions of conjunctive queries (UCQs),
corresponding to positive existential first-order formulas.  We also
consider the class $\AQ \subseteq \CQ$ of atomic queries, and the
class $\IQ \subseteq \AQ$ of instance queries, that is, atomic queries
over concepts. If $Q\subseteq \AQ$, we can view $Q$ as a set of
predicates.


\subsection{Recognizing focusing solutions}\label{sec:recogn}

In this section we deal with recognizing focusing solutions for
$\ALCHOI$ ontologies. The two conditions in the definition of focusing
solutions (Definition~\ref{def:focusing}) are closely related to
natural variants of two classical problems in description logics.

The first condition of Definition~\ref{def:focusing} boils down to a
variant of the classical query entailment problem that only considers
models where selected predicates have finite extensions (the
remaining predicates may have a finite or an infinite extension). 

\problemdefprom{$\mixedqa(\parTL,\parQL)$}{A tuple
  $(\varphi,\Sigma,\mcI, q)$ where
  }{
  \begin{enumerate}[-]
    \item $\varphi \in  \Lth$,
    \item $ \Sigma \subseteq \rels$,
    \item $\mcI$ is an instance,
      \item $q \in\Lq$ is Boolean.
      \end{enumerate}
      \saveproblem
}{Is $q$ satisfied in every model $\mcJ$ of $\varphi$ such that
 \begin{asparaenum}[-]
 \item $\mcI\subseteq \mcJ$, and
 \item $R^{\mcJ}$ is finite for all $R\in \Sigma$?
 \end{asparaenum}
  \vspace*{-10pt}}
This mixed variant generalizes the usual finite and unrestricted
variants of OMQA. Recall that a logic has finite controllability if,
for each theory expressed in this logic, non-entailed queries have
finite counter-models. For logics enjoying finite controllability, the
three variants of OMQA coincide. This is the case for $\ALCHOI$, which makes
our problem $\twoexptime$-complete \cite{CalvaneseEO09}.

In the second condition of Definition~\ref{def:focusing}, the
computational crux of the matter is the following problem.
\problemdefprom{$\qempty(\parTL,\parCL,\parQL)$}{A tuple
  $(\varphi,\Sigma, \Qcwa, q)$ where
}{
  \begin{enumerate}[-]
    \item $\varphi \in  \Lth$,
    \item $ \Qcwa \subseteq \Lcwa$,
      \item $q \in\Lq$.
      \end{enumerate}
      \saveproblem
}{
  Do all $\mcI$ over  $\Sigma$ with $\rest(\varphi,\mcI,
  \Qcwa)\neq \emptyset$
  admit $\mcJ \in \rest(\varphi,\mcI, \Qcwa)$ with
  $\answer{\mcJ}{q} = \emptyset$? 
}
%
The nullability problem is closely related to the query emptiness
problem, which is known to be $\nexptime$-complete for atomic queries, 
$\ALCI$ ontologies, and no closed predicates
\cite{DBLP:journals/jair/BaaderBL16}. We show that nullability of
atomic queries for $\ALCHOI$ with closed instance queries (i.e.,
closed concepts) is in $\conexptime^\np$. Allowing closed roles 
rather quickly leads to undecidability, but for sufficiently restricted DLs
we regain decidability.

\begin{theorem} \label{thm:nullability} The following hold:
  \begin{enumerate}
  \item 
    $\qempty(\parTALCHOI, \parCIQ, \parQAQ)$ is in \\
    $\conexptime^\np$;
  \item \label{thm:nullability:undec}
    $\qempty(\parTELIbot, \parCAQ, \parQAQ)$ is undecidable;
  \item $\qempty(\parTLITEOBOOL,\parCAQ,\parQAQ)$ is in \\ $\conexptime^\np$.
  \end{enumerate}
\end{theorem}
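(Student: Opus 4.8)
All three claims concern the universally quantified problem $\qempty$, so I would work throughout with its complement: there is an instance $\mcI$ over $\Sigma$ with $\rest(\varphi,\mcI,\Qcwa)\neq\emptyset$ such that the Boolean query $\exists\vec{x}\,q(\vec{x})$ holds in \emph{every} $\mcJ\in\rest(\varphi,\mcI,\Qcwa)$ — that is, a legal, $\rest$-consistent instance that forces the answer to $q$ to be nonempty. This is a query-emptiness question in the style of \cite{DBLP:journals/jair/BaaderBL16}, with two twists: we search for an instance making $q$ \emph{certainly} nonempty, and the admissible models are constrained by the closed predicates $\Qcwa$. Showing this existential problem to be in $\nexptime^{\np}$ (claims 1 and 3) places $\qempty$ in $\conexptime^{\np}$; for claim 2 I would instead reduce an undecidable tiling problem to it.

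For the upper bounds I would adapt the type-elimination argument behind the $\nexptime$ bound for query emptiness. The first step is a \emph{small-instance property}: if a witnessing $\mcI$ exists, then one exists with $\adom{\mcI}$ of size exponential in $|\varphi|+|q|$. The only forced named elements are the nominals of $\varphi$; beyond them any witnessing instance can be pruned, and any counter-model folded into exponentially many realizable types by a filtration argument — for claim~1 using that $\ALCHOI$ enjoys finite controllability, so the relevant models may be taken finite, and for claim~3 using that $\DLLITEOBOOL$ admits only unqualified existentials $\exists r.\top$ and range restrictions $\top\ISA\forall r.B$, with no role hierarchy or functionality, which keeps the anonymous part flat.

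The complement algorithm then runs in $\nexptime^{\np}$: guess the exponential instance $\mcI$, and verify (a) $\rest(\varphi,\mcI,\Qcwa)\neq\emptyset$, and (b) that every $\mcJ\in\rest(\varphi,\mcI,\Qcwa)$ satisfies $\exists\vec{x}\,q(\vec{x})$. Both are closed-predicate reasoning tasks relative to the fixed (exponential) $\mcI$; by the small-type property their certificates are polynomial in $|\mcI|$, so each reduces to an $\np$ oracle call over $\mcI$ — for (b) the oracle is asked the complementary question ``some $\mcJ\in\rest(\varphi,\mcI,\Qcwa)$ makes $q$ empty'' and the answer is negated. The main obstacle for claims 1 and 3 is the interaction of closed predicates with nominals and inverses: pinning a concept (claim 1) or a role (claim 3) to $\mcI$ creates long-range dependencies, and the small-instance/small-type argument must show that these are still captured by exponentially many types.

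For claim 2, closed \emph{roles} become available, and I would reduce from the undecidable problem of whether a tile set $T$ tiles $\mathbb{N}\times\mathbb{N}$ — equivalently, by compactness, every finite $n\times n$ grid. For each $n$, an instance $\mcI_n$ encodes the grid skeleton through two closed roles $h,v$. Since $\ELIbot$ lacks $\sqcup$ and $\forall$, the key device is to simulate the tile choice with a closed ``selection'' role: each node is linked to one pinned token per tile type, each token carrying a closed tile-concept, and the axiom $\mathit{Node}\ISA\exists\mathit{sel}.\mathit{Chosen}$ forces at least one token into the \emph{open} concept $\mathit{Chosen}$, so that distinct models realize distinct tile choices. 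Adjacency violations are caught by axioms such as $\exists\mathit{sel}.(\mathit{Chosen}\sqcap T_a)\sqcap\exists h.\exists\mathit{sel}.(\mathit{Chosen}\sqcap T_b)\ISA\mathit{Bad}$ for each incompatible pair $(a,b)$, and symmetrically for $v$, with $q=\mathit{Bad}(x)$; then some $\mcJ\in\rest(\varphi,\mcI_n,\Qcwa)$ has $\answer{\mcJ}{q}=\emptyset$ iff the $n\times n$ grid is tileable, so $\qempty$ holds iff $T$ tiles all finite grids. The main obstacle is \emph{robustness}, since $\qempty$ ranges over \emph{all} instances over $\Sigma$: I must ensure that malformed grids are either made $\rest$-inconsistent (hence filtered out by the $\rest\neq\emptyset$ condition) or always admit a $\mathit{Bad}$-empty model, so that only the honest $\mcI_n$ govern the answer; this I would arrange by structural axioms forcing the intended grid shape and by keeping $\mathit{Bad}\notin\Sigma$ so that no instance can assert a defect directly.
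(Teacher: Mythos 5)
Your upper-bound argument for claim~1 is essentially the paper's: shrink a hypothetical witnessing instance to exponential size by identifying constants that realize the same concept type in some member of $\rest(\varphi,\mcI,\Qcwa)$, then universally guess the exponential instance and discharge the two remaining checks ($\rest(\varphi,\mcI,\Qcwa)\neq\emptyset$, and consistency of the ontology extended with an axiom forcing $q$ to be empty) with $\np$ oracle calls, since closed-predicate model existence is $\np$ in the size of the instance. (Finite controllability plays no role here; the filtration acts on the finite instance, not on the models.) For claim~3 you propose to push the same filtration through directly with closed roles, whereas the paper gives a polynomial reduction to claim~1: because $\DLLITEOBOOL$ has only unqualified existentials, a closed role $r$ can be traded for a closed concept $A_r\equiv\exists r.\top$ together with $A_r\ISA\exists r.A_{r^-}$, leaving only closed concepts. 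Your route must argue that identifying constants is compatible with closed \emph{roles}, whose semantics constrains specific pairs rather than types; you name this obstacle but do not resolve it, and the paper's reduction exists precisely to avoid it.

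The genuine gap is in claim~2. Your query $\mathit{Bad}(x)$ detects tiling violations, and you rely on ``structural axioms forcing the intended grid shape'' to neutralize malformed instances. No such axioms are expressible in $\ELIbot$: enforcing that the $h$- and $v$-successors of a closed grid skeleton commute (or even excluding an $h$-self-loop) requires comparing the endpoints of two paths, which needs functionality, nominals, counting, or universal restrictions---none available. Consequently, when $T$ does tile the plane, an adversarial instance whose closed $h/v$-structure is an arbitrary graph (e.g.\ a single node carrying an $h$-loop, for a tile set with no horizontally self-compatible tile) is $\rest$-consistent yet forces $\mathit{Bad}$ to be nonempty in every model, so $\qempty$ would be answered negatively even though your reduction needs it to hold. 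The paper resolves exactly this by inverting the role of the query: with \emph{open} symbols $X,Y_1,Y_2,Z,r$, every element is required to have an $r$-successor in $X$ that has an up-then-right successor in $Y_1$ and a right-then-up successor in $Y_2$, and $Y_1\sqcap Y_2\ISA Z$ with query $Z(x)$. Since the grid roles are closed, on a non-grid-like instance the model builder can park $X$ at a defect and keep $Y_1$ and $Y_2$ apart, so $Z$ stays empty; only on grid-like instances do the two paths converge and force $Z$. Grid-likeness is thus certified by the query rather than axiomatized, and the closed simulation predicates then tie consistency of a grid-like instance to a halting Turing-machine run. Without a device of this kind your reduction does not go through.
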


\begin{proof}
Given an $\ALCHOI$ ontology $\Oo$, 
$\Sigma\subseteq\conceptnames(\Oo) \cup \rolenames(\Oo)$,
$\Qcwa\subseteq\IQ$, and $q\in\AQ$, we need to check that for each
$\mcI$ over $\Sigma$ with non-empty $\rest(\Oo, \mcI, \Qcwa)$
there exists $\mcJ \in \rest(\Oo, \mcI, \Qcwa)$ with $\answer{\mcJ}{q} = 
\emptyset$. We first show that if this is not the case, then there
is a witnessing instance $\Ii'$ over $\Sigma$ with 
$|\adom{\Ii'}| \leq 2^{|\conceptnames^+(\Oo)|}$.


In what follows, a type $T$ is any subset of $\conceptnames^+(\Oo)$
such that $\top\in T$ and $\bot\not \in T$. For an instance $\Jj$ and
$c\in\const$, the \emph{type of $c$ in $\Jj$} is the set of concepts
$B \in \conceptnames^+(\Oo)$ such that $c\in B^\Jj$.

Let $\mcI$ be an instance over $\Sigma$ such that $\rest(\Oo, \mcI,
\Qcwa) \neq \emptyset$ and $\answer{\mcJ}{q} \neq \emptyset$ for all
$\mcJ \in \rest(\Oo, \Qcwa)$. Let $\mcJ \in \rest(\Oo, \mcI, \Qcwa)$.
Let $\mcI'$ be obtained from $\mcI$ by identifying constants that have
the same type in $\Jj$; when some constants are identified, all
edges incident with them become incident with the resulting
constant; nominals remain themselves.

It is easy to see that $\rest(\Oo, \mcI', \Qcwa)\neq \emptyset$. A
suitable witness $\Jj'$ is obtained from $\mcJ$ by identifying
constants in $\adom{\Ii}$ that have the same type in $\Jj$. By 
construction, $\Jj'$ is an extension of $\mcI'$ that is a model of
$\Oo$ and preserves closed concepts. 

Suppose that $\answer{\Mm'}{q} = \emptyset$ for some $\Mm'\in
\rest(\Oo, \mcI', \Qcwa)$. We shall turn $\Mm'$ into $\Mm \in
\rest(\Oo,\mcI, \Qcwa)$ such that $\answer{\Mm}{q}=\emptyset$.  Let us
replace the copy of $\mcI'$ contained in $\Mm'$ by a copy of
$\mcI$. That is, $\adom{\Mm} =\left ( \adom{\Mm'} \setminus \adom{\Ii'}
\right)\uplus \adom{\Ii}$. We now lift the interpretation of concepts
and roles from $\Mm'$. For $c\in\adom{\mcI}$, let $c'$ be the unique
constant in $\adom{\mcI'}$ that has the same type in $\Jj$ as
$c$. For $c\in\adom{\Mm'} \setminus \adom{\Ii'}$, let $c'=c$. We let
$c\in A^\Mm$ iff $c'\in A^{\Mm'}$ for $A\in\conceptnames(\Oo)$ and  
$(c,d)\in r^\Mm$ iff $(c',d')\in A^{\Mm'}$ for $r \in\rolenames(\Oo)$.
By construction, $\Mm$ is an extension of $\Ii$ that is a
model of $\Oo$, preserves closed concepts, and realizes the same
types. It follows that $\Mm\in \rest(\Oo,\mcI, \Qcwa)$ and
$\answer{\Mm}{q} = \emptyset$.

We can now describe the algorithm. We guess universally an instance
$\Ii$ over $\Sigma$ with $|\adom{\Ii}| \leq
2^{|\conceptnames^+(\Oo)|}$.  We should accept if either $\rest(\Oo,
\Ii, \Qcwa) = \emptyset$ or $\rest(\Oo\cup\{\alpha_q\}, \Ii,\Qcwa)
\neq \emptyset$, where $\alpha_{A(x)}$ is $A\sqsubseteq \bot$ and
$\alpha_{r(x,y)}$ is $\exists r. \top \sqsubseteq \bot$.  Both these
checks amount to deciding if there is a model $\Jj$ of a given ontology
$\Oo'$ that extends a given instance $\Ii$ while preserving a given
set of closed concepts. For $\ALCHOI$ ontologies this problem is in $\np$ in
terms of data complexity~\cite{DBLP:conf/ijcai/LutzSW15}. More
precisely, it can be solved by a nondeterministic algorithm with
running time $\textsf{poly}(|\Ii|)\cdot 2^{\textsf{poly}(|\Oo'|)}$,
which is sufficient to guarantee the $\conexptime^\np$ upper bound. One
way to do this is as follows. Guess the restriction $\Jj_0$ of $\Jj$
to $\adom{\Ii}$. Use type elimination to compute realizable
types. Start from the set of types that are realized in $\Jj_0$ or do
not contain closed concepts. Iteratively remove types that cannot have
their existential restrictions satisfied using available types; for
types realized in $\Jj_0$ ignore existential restrictions that are
already fulfilled in $\Jj_0$. When the set of available types
stabilizes, accept iff it contains each type realized in $\Jj_0$.

Undecidability is obtained via a reduction from the halting problem
for deterministic Turing machines, relying on the ability enforce that
the counter witness to nullability is a grid, and the upper bound for
the third problem is obtained by a a polynomial reduction to the 
first problem (see Appendix for details). 
\end{proof}

Let us now see that the two introduced problems indeed help solve the focusing
problem, and that lower bounds and undecidability propagate to focusing as well. 

\begin{theorem} \label{thm:recognition}
The problems 
\begin{gather*}
  \recognition(\parTALCHOI,\parCIQ,\parFAQ,\parDCQ)\,, \\
  \recognition(\parTALCHOI,\parCAQ,\parFempty,\parDCQ)\,, \\
  \recognition(\parTALCO,\parCempty,\parFempty,\parDCQ)
\end{gather*}
are $\twoexptime$-complete, and
\[\recognition(\parTELIbot,\parCAQ,\parFAQ, \parDempty)\] is
undecidable. 
\end{theorem}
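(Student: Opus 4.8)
I treat the two obligations of Definition~\ref{def:focusing} separately. Condition~(2), determinacy of $\Qdet$, I reduce to $\mixedqa$; condition~(1), that fixing preserves consistency, I reduce to $\qempty$. Membership then combines the $\twoexptime$ bound for $\mixedqa(\parTALCHOI,\parQCQ)$ with the $\conexptime^\np$ bound of Theorem~\ref{thm:nullability}; the lower bound comes from CQ entailment via determinacy, and undecidability from $\qempty$ via fixing.

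\textbf{Membership in $\twoexptime$.} For condition~(1): when $\Qfix=\emptyset$ (the second and third DL cases) we have $\fix(\varphi,\mcI,\emptyset)=\{\mcJ\mid \mcI\subseteq\mcJ,\ \mcJ\models\varphi\}$, so $\semantics=\rest$ and the condition holds vacuously. For $\parFAQ$, fixing an atomic query $a$ forces its extension to equal $\answer{\varphi,\emptyset}{a}$, a finite set of nominal constants computable by ordinary certain-answer reasoning; condition~(1) then fails exactly when some legal $\mcI$ with $\rest\neq\emptyset$ admits no model in $\rest$ avoiding the \emph{excess} of every fixed predicate. Collecting all excesses into one fresh atomic query by auxiliary inclusions (still in $\ALCHOI$, keeping the closed part an instance query) turns this into $\qempty(\parTALCHOI,\parCIQ,\parQAQ)$, in $\conexptime^\np$ by Theorem~\ref{thm:nullability}. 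For condition~(2): a CQ $q$ is determined over $\semantics(\varphi,\mcF,\mcI)$ iff every \emph{possible} answer (holding in some intended model) is already a \emph{certain} answer. Both directions are mixed-finite reasoning tasks — the closed queries pin the predicates in their scope to the finite $\mcI$, which is precisely the regime of $\mixedqa$ — and since $\mixedqa(\parTALCHOI,\parQCQ)$ is $\twoexptime$-complete, each check costs $\twoexptime$. A type-collapsing argument as in the proof of Theorem~\ref{thm:nullability} bounds a witnessing $\mcI$ (and the relevant tuple) doubly-exponentially, so one universally guesses $\mcI$ and runs polynomially many $\mixedqa$ checks, giving $\twoexptime$ overall and subsuming the $\conexptime^\np$ cost of condition~(1).

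\textbf{$\twoexptime$-hardness.} I reduce Boolean CQ entailment, already $\twoexptime$-hard for $\ALCO$ and for $\ALCHOI$, to recognition. Given $(\varphi,\mcI,q)$, I encode the facts of $\mcI$ by nominal inclusions $\{a\}\sqsubseteq A$ and $\{a\}\sqsubseteq\exists r.\{b\}$, and add a gadget driven by a fresh nominal $o$ and concept $G$: existential inclusions guarded by $\{o\}\sqcap G$ create an anonymous witness pattern for $q$, while $G(o)$ stays otherwise unconstrained. Setting $\Sigma=\emptyset$ and $\Qcwa=\Qfix=\emptyset$, $\Qdet=\{q\}$, the only legal instance is empty and $\semantics$ is the set of all models of the augmented ontology. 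If $\varphi,\mcI\models q$ then $q$ holds in every such model (determined); otherwise a model with $o\notin G$ refutes $q$ while the model with $o\in G$ satisfies it (not determined). Hence the configuration is a focusing solution iff $\varphi,\mcI\models q$. The gadget uses only $\sqcap$, $\sqcup$, $\exists$, and nominals, so it remains in $\ALCO$, covering all three $\twoexptime$ cases.

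\textbf{Undecidability.} For $\recognition(\parTELIbot,\parCAQ,\parFAQ,\parDempty)$, condition~(2) is vacuous as $\Qdet=\emptyset$. Since $\ELIbot$ has no nominals and the empty instance is a model, $\answer{\varphi,\emptyset}{q}=\emptyset$ for every atomic $q$, so fixing $\Qfix=\{q\}$ means demanding an empty extension for $q$. Condition~(1) then reads: every $\mcI$ over $\Sigma$ with $\rest(\varphi,\mcI,\Qcwa)\neq\emptyset$ admits $\mcJ\in\rest(\varphi,\mcI,\Qcwa)$ with $\answer{\mcJ}{q}=\emptyset$ — verbatim the nullability problem. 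Thus $(\varphi,(\Sigma,\Qcwa,\{q\},\emptyset))$ is a focusing solution iff $(\varphi,\Sigma,\Qcwa,q)$ is a positive instance of $\qempty(\parTELIbot,\parCAQ,\parQAQ)$, which is undecidable by Theorem~\ref{thm:nullability}(\ref{thm:nullability:undec}); undecidability of recognition follows. The main obstacle is the condition~(2) upper bound: determinacy is a $\Pi_2$-shaped property (for all instances, possible implies certain), and reducing it to finitely many $\mixedqa$ calls hinges on a small-model property that survives the interaction of the finite, $\mcI$-pinned closed predicates with the anonymous part of intended models; verifying that the possible-answer direction is genuinely a mixed-finite (non-)entailment question, rather than a new problem, is the delicate step, whereas the hardness gadget and the undecidability transfer are routine.
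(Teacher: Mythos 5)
Your handling of condition~(1) via $\qempty$ (collecting the excesses of the fixed atomic queries into one fresh concept), your hardness gadget, and your undecidability transfer all line up with the paper or are harmless variants of it (the paper relativizes the ontology to a fresh concept $A$ so that outside $A$ a $q$-match can always be added, rather than using a nominal-guarded switch, but both achieve the same effect). The genuine gap is in the upper bound for condition~(2), and it is exactly the step you flag as ``delicate'' at the end. Your plan is to prove a small-model property for determinacy, universally guess a doubly-exponential witnessing instance $\mcI$, and run $\mixedqa$ checks. First, no such small-model lemma is established: the type-collapsing argument of Theorem~\ref{thm:nullability} bounds a witness for the \emph{nullability} condition, where only one model $\Jj$ and one model $\Mm$ interact with $\mcI$; for determinacy you must control a pair $\mcJ_1,\mcJ_2$ of intended models simultaneously, and collapsing $\mcI$ by types taken in which of the two models is not addressed. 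Second, even granting the bound, universally guessing an object of doubly-exponential size and then running a $\twoexptime$ check yields a co-nondeterministic doubly-exponential procedure, not $\twoexptime$; the arithmetic in ``giving $\twoexptime$ overall'' does not go through.

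The paper's proof avoids quantifying over instances altogether. After eliminating the single fixed query by adding $B\sqsubseteq\bot$, it introduces fresh duplicates $A'$, $r'$ of every predicate \emph{not} occurring in $\Qcwa$, builds the copy $\Oo'$ of $\Oo$ and the copy $q'$ of $q$ over the duplicated signature, and observes that $\answer{\mcJ_1}{q}=\answer{\mcJ_2}{q}$ holds for \emph{all} legal $\mcI$ and all $\mcJ_1,\mcJ_2\in\semantics(\Oo,\mcF,\mcI)$ iff the containment $q\subseteq q'$ holds over all models of $\Oo\cup\Oo'$ in which the (shared, undupli\-cated) $\Qcwa$-predicates are finite: any two intended models over the same $\mcI$ agree on the closed predicates, so they amalgamate into one joint model of $\Oo\cup\Oo'$, and conversely a joint model splits into two intended models over the instance read off from its finite closed part. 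Containment then reduces to $|\Qdet|$ polynomial-size instances of $\mixedqa$ by folding $q$ into the ontology with nominals, which is $\twoexptime$ for $\ALCHOI$. This duplication-and-containment step is the missing idea; without it (or an equivalent device) your condition~(2) argument does not close.
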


\begin{proof}
We begin with the upper bound for the first problem. 
Let $\Oo$ be an $\ALCHOI$ ontology  and let $\mcF=
(\Sigma,\Qcwa,\Qfix,\Qdet)$ with $\Sigma \subseteq\conceptnames(\Oo)
\cup \rolenames(\Oo)$, $\Qcwa\subseteq\IQ$, $\Qfix \subseteq \AQ$, and
$\Qdet\subseteq \CQ$.
As a first step,  we reduce to the case with only one fixed
query $q_\fix$, that additionally satisfies $\answer{\Oo,
  \emptyset}{q_\fix}=\emptyset$. 
Let $\Oo'$ be the following extension of $\Oo$. For each
$q\in\Qfix$, introduce a fresh concept name $A_q$. If $q=A(x)$ and
$\answer{\Oo, \emptyset}{q} = \left\{(a_1),  \dots,
  (a_k)\right\}$, axiomatize $A_q$ as
\[A_q\equiv A \sqcap \lnot \{a_1, \dots, a_k\}\,.\] 
If $q=r(x,y)$ and 
$\answer{\Oo, \emptyset}{q} = \bigcup _{i=1}^k \left\{(a_i, b_{i,1}),
  \dots, (a_i, b_{i, j_i}) \right\}$, axiomatize $A_q$ as
\[A_q \equiv \left (\lnot \{a_1, \dots, a_k\} \sqcap \exists
  r. \top \right ) \sqcup\bigsqcup_i \{a_i\} \sqcap \exists
  r. \lnot \{b_{i,1},  \dots, b_{i,j_k}\}\,.\] 
Finally, add yet another fresh concept name $B$, axiomatized as
\[ B \equiv \bigsqcup_q A_q\,.\]
Let $q_\fix=B(x)$. By construction, $\mcF$ is a focusing solution
for $\Oo$ iff $\mcF'=(\Sigma,\Qcwa, \{q_\fix\},\Qdet)$ is a focusing
solution for $\Oo'$. Notice that $\Oo'$ 
has polynomial size, but computing it involves finding
$\answer{\Oo, \emptyset}{q}$ for each $q\in\Qfix$.
As we have already mentioned, for $\ALCHOI$ this
is in $\twoexptime$ \cite{CalvaneseEO09}, so we are within 
the intended complexity bounds.

Thus, without loss of generality we can assume that our input is $\Oo$
and $\mcF= (\Sigma,\Qcwa,\{B(x)\},\Qdet)$ such that
$B\in\conceptnames(\Oo)$ and
$\answer{\Oo,\emptyset}{B(x)}=\emptyset$. For such inputs, the first
condition of Definition~\ref{def:focusing} is an instance of
$\qempty(\parTALCHOI, \parCAQ, \parQAQ)$, which we shall prove to be
in $\conexptime^\np \subseteq \twoexptime$
(Theorem~\ref{thm:nullability}).  For the second condition of
Definition~\ref{def:focusing}, we first observe that we can eliminate
the single fixed query $B(x)$ without affecting the set of intended
models: it suffices to add the concept inclusion $B \sqsubseteq\bot$
to $\Oo$. Then, to reduce the second condition to mixed query
answering, we first introduce fresh duplicates $A'$ and $r'$ for all
$A\in\conceptnames(\Oo)$ and $r\in\rolenames(\Oo)$ that do not occur
in $\Qcwa$. Next, we construct $\Oo'$ and $q'$ for all $q\in\Qdet$ by
replacing original predicates in $\Oo$ and $q$ with their
duplicates. It is straightforward to see that $\answer{\mcJ_1}{q} =
\answer{\mcJ_2}{q}$ for all $\mcI$ and $\mcJ_1,\mcJ_2 \in
\semantics(\Oo, \mcF, \mcI)$ iff $q\subseteq q'$ over all models of
$\Oo\cup\Oo'$ with finite predicates $\Qcwa$. Query containment can be
reduced to query answering in the usual way, by incorporating the
query $q$ into the ontology using nominals; the presence of finiteness assumptions (in
both problems) does not affect the construction. This way we have
reduced the second condition to $|\Qdet|$ polynomial-size instances of
mixed query answering. As we have argued, the latter is in
$\twoexptime$ for $\ALCHOI$, which gives the desired upper bound for
the focusing problem.

To obtain the upper bound for the second problem, note that 
in the absence of fixed predicates the first condition of
Definition~\ref{def:focusing} trivializes and the second condition can
be checked exactly like before, but without the first preprocessing step.


For the third problem the upper bound follows directly from either of
the previous cases. To show the lower bound, we reduce from the
standard unrestricted query answering problem for $\ALCO$
\cite{NgoOS16}. Let $\Oo$ be an $\ALCO$ ontology and let $q\in \CQ$. Let
$\Oo'$ be the ontology obtained from $\Oo$ by introducing a fresh concept
name $A$, axiomatized with
$\{c\} \sqsubseteq A$ and $A \sqsubseteq \forall
r. A$ for all $\{c\}\in\conceptnames^+(\Oo)$,
$r\in\rolenames^+(\Oo)$, and replacing each concept inclusion $C
\sqsubseteq D$ with $C \sqcap A \sqsubseteq D$. Each model $\mcI$
of $\Oo$ can be turned into a model of $\Oo'$ by letting $A^\Ii =
\adom{\Ii}$;  each model $\Ii'$ of $\Oo'$ can be turned into a
model of $\Oo$ by restricting it to $A^{\Ii'}$. Consequently,
$\answer{\Oo,\emptyset}{q} = \answer{\Oo',\emptyset}{q}$, but there
is a model $\Ii'$ of $\Oo'$ that satisfies $q$, because outside of
$A$ no restrictions are imposed by $\Oo'$. It follows that
$(\emptyset, \emptyset, \emptyset, \{q\})$ is a focusing solution for
$\Oo'$ iff $\Oo$ entails $q$.

For the last claim, note that if $\answer{\Oo, \emptyset}{q} =
\emptyset$, then the following are equivalent:
\begin{itemize}
\item $(\Oo, \Sigma,  \Qcwa, q)$ is a positive instance of $\qempty$;
\item $(\Oo, (\Sigma,  \Qcwa,  \{q\}, \emptyset))$ is a positive
  instance of $\recognition$.
\end{itemize}
Because the reduction in the proof of
Theorem~\ref{thm:nullability}~\eqref{thm:nullability:undec} gives a
query that has the above property, we can conclude that this variant
of the focusing problem is undecidable.  
\end{proof}


\subsection{Entailment}\label{sec:entailment}

The goal of this section is to prove the following theorem.

\begin{theorem} The problem 
\[\entailment(\parTALCHOI,\parCCQ, \parFempty, \parQCQ)\] is
\twoexptime-complete in combined complexity, and \conp-complete in
data complexity.
\end{theorem}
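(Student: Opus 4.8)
Since $\Qfix=\emptyset$, the fixing component is vacuous ($\fix(\Oo,\mcI,\emptyset)$ is just the set of all models of $\Oo$ containing $\mcI$), so $\semantics(\Oo,\mcF,\mcI)=\rest(\Oo,\mcI,\Qcwa)$, and the task is to decide whether the Boolean CQ $q$ holds in every model $\mcJ\models\Oo$ with $\mcI\subseteq\mcJ$ that creates no new answers to the closed CQs in $\Qcwa$, i.e.\ with $\answer{\mcJ}{p}=\answer{\mcI}{p}$ for all $p\in\Qcwa$. Both lower bounds come for free: taking $\Qcwa=\Qdet=\emptyset$ makes $\mcF=(\Sigma,\emptyset,\emptyset,\emptyset)$ a focusing solution for \emph{every} $\Oo$ (the two conditions of Definition~\ref{def:focusing} hold vacuously), and then $\rest(\Oo,\mcI,\emptyset)$ is exactly the set of models of $\Oo$ extending $\mcI$, so the problem collapses to ordinary certain CQ answering over $\ALCHOI$ under SNA. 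The latter is already $\twoexptime$-hard in combined complexity and $\conp$-hard in data complexity for the fragments $\ALCI$ and $\ALC$, so these hardness results transfer directly.

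For the combined upper bound I would decide the complement: whether there is $\mcJ\models\Oo$ with $\mcI\subseteq\mcJ$ that has no match of $q$ and, for each $p\in\Qcwa$, no answer tuple outside $\answer{\mcI}{p}$. I would use the forest-model property of $\ALCHOI$: it suffices to search among models that consist of a bounded core over the named elements (the constants of $\mcI$ together with the nominals of $\Oo$) with trees attached. The plan is automata-theoretic: build a two-way alternating parity tree automaton of exponential size recognising encodings of forest models of $\Oo$ extending $\mcI$; build, for $q$ and for each closed $p$, an alternating automaton that detects a (new) match by existentially guessing how the query folds onto the core and the trees, and take its dual to \emph{forbid} such matches; then intersect everything. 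Since alternating automata combine additively, the result still has exponentially many states, and non-emptiness of two-way alternating tree automata is decidable in time exponential in the number of states, hence in $\twoexptime$; as $\twoexptime$ is closed under complement, entailment is in $\twoexptime$. This adapts the standard $\ALCHOI$ CQ-answering and closed-predicate machinery (cf.\ \cite{CalvaneseEO09,NgoOS16,DBLP:conf/ijcai/LutzSW15}). The main obstacle is precisely the closed queries: ruling out \emph{all} new matches of each $p$ --- including matches that weave between the non-tree-shaped core (where nominals and inverses live) and the tree parts --- while simultaneously keeping $q$ unmatched, and making sure the tree completion itself never inadvertently introduces a forbidden $p$-match.

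For the data upper bound I fix $\Oo$, $\Qcwa$ and $q$ and vary only $\mcI$. The sets $\answer{\mcI}{p}$ are computable in polynomial time, and I would give an $\np$ procedure for the complement: guess a polynomial-size certificate consisting of the types of the active-domain elements of $\mcI$ together with a bounded stub of the forest completion, and verify in polynomial time that it extends to a full model $\mcJ\models\Oo$ with $\mcI\subseteq\mcJ$ that matches neither $q$ nor any forbidden (new) instance of a closed $p$. Because $\Oo$, $\Qcwa$ and $q$ are all of constant size, the canonical tree completion has bounded shape, so ``no $q$-match'' and ``no new $p$-match'' become local, polynomially checkable conditions; the extra closure constraints keep the problem within $\conp$ rather than pushing it higher, since deciding existence of a model of an $\ALCHOI$ ontology that extends a given instance while respecting closed predicates is already in $\np$ in data complexity~\cite{DBLP:conf/ijcai/LutzSW15}. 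Thus the complement is in $\np$ and the entailment problem is $\conp$-complete in data complexity.
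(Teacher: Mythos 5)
Your proposal is correct in outline and its skeleton largely matches the paper's. The lower bounds are obtained exactly as you describe: with $\Qcwa=\Qdet=\Qfix=\emptyset$ the configuration is trivially a focusing solution and the problem degenerates to certain CQ answering under SNA, so \twoexptime-hardness (from CQ entailment in \ALCI) and \conp-hardness in data complexity (the paper cites instance checking in a sublogic of \ALCHOI) transfer directly. Your data-complexity certificate---types of the active-domain elements plus bounded forest stubs---is precisely the paper's ``coherent set of $n$-types'' decomposition borrowed from the knot technique of Eiter--Ortiz--Simkus and Lutz--Seylan--Wolter, so that part is essentially the same argument, just stated less precisely. Where you genuinely diverge is the combined upper bound. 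The paper does not build automata: after an unravelling lemma reducing to tree extensions of $\Ii$, it packages the obstacle you flag---ruling out all new matches of the closed CQs---into a single UCQ $\widehat Q$ of ``bad matches'' (one disjunct $\exists\vec y\,\varphi(\vec a,\vec y)$ per tuple $\vec a\notin\answer{\Ii}{p}$, plus disjuncts forcing an answer variable onto an anonymous element marked by $\bar A_0$), so that $\Jj\in\rest(\Oo,\Ii,\Qcwa)$ iff $\Jj$ models $\Oo$ and $\Ii$ and $\Jj\nvDash\widehat Q$. The whole problem then becomes non-entailment of the exponential-size UCQ $\widehat Q\lor q$, and the known $2^{(|\Oo|+|\Ii|+|Q|)^{\mathrm{poly}(n)}}$ bound for \ALCHOI UCQ entailment (with $n$ the per-CQ variable count) gives \twoexptime as a black box; the same $\widehat Q$ is reused in the \conp data bound, where it has polynomial size for fixed $\Qcwa$. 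Your automata route should also work---it essentially re-derives that black box with the bad-match conditions compiled into the transition structure---but you never actually discharge the obstacle you name: if you keep this route you still owe the construction of the match-forbidding automaton for CQs weaving between the non-tree core and the trees (the splitting/rolling-up machinery of Calvanese--Eiter--Ortiz), whereas the reduction to $\widehat Q$ disposes of it in one step and unifies the two upper bounds.
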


Let $\dlkb$ be an $\mathcal{ALCHOI}$ KB, $\Qcwa$ a
set of CQs, $q$ a Boolean CQ, and $\Ii$ an instance.  Unless specified
otherwise, we will always consider $\Sigma_\dlkb$-instances where
$\Sigma_\dlkb$ is the signature of $\dlkb$, i.e., the set of concept
and role names occurring in $\dlkb$.  Our goal is to decide if for
every instance $\Jj \in \rest(\dlkb,\Ii,\Qcwa)$, $\Jj \models q$.
This is clearly equivalent to the problem of finding a counter model
for $q$: an instance $\Jj \in \rest(\dlkb,\Ii,\Qcwa)$ such that
$\Jj \nvDash q$.  Moreover, it suffices to consider counter models
that are almost forests. An instance $\Jj$ is a \emph{tree
  extension} of $\Ii$ if $\Jj = \Jj_0 \cup \Jj_1$ such that
$\adom{\Jj_0} = \adom{\Ii}$ and $\Jj_1$ is a collection of trees of
bounded degree in which elements of $\adom{\Ii}$ occur only in
leaves (and never in roots, even if they are also leaves). Note that
the partition of $\Jj$ into $\Jj_0$ and $\Jj_1$ is unique. We 
call $\Jj_1$ the \emph{forest} of $\Jj$.

\begin{lemma}\label{lem:unravelling} The following are equivalent:
 \begin{enumerate} 
  \item There exists  $\Jj \in \rest(\dlkb,\Ii,\Qcwa)$ such that
    $\Jj \nvDash q$;
  \item There exists $\Tt  \in \rest(\dlkb,\Ii,\Qcwa)$  such that $\Tt
    \nvDash q$ and $\Tt$ is a tree extension of $\Ii$.   
  \end{enumerate}
\end{lemma}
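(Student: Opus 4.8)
The implication $(2)\Rightarrow(1)$ is immediate, since a tree extension of $\Ii$ is in particular an instance, so any witness for the second statement already witnesses the first. The plan is therefore to establish $(1)\Rightarrow(2)$ by turning an arbitrary counter-model into an almost-forest one through a careful unravelling that leaves the data part untouched.

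Given $\Jj\in\rest(\dlkb,\Ii,\Qcwa)$ with $\Jj\nvDash q$, I would keep the core $\Jj_0:=\Jj|_{\adom{\Ii}}$ unchanged and unravel the remainder into a forest. Concretely, I would trace the role paths of $\Jj$ that leave the core, introducing a fresh copy of every \emph{anonymous} element (an element of $\adom{\Jj}\setminus\adom{\Ii}$) met along a path, but reusing the genuine constant---never a fresh copy---whenever a path reaches an element of $\adom{\Ii}$, at which point the path stops and that constant is recorded as a leaf. Each copied edge carries the full set of roles relating the two origin elements in $\Jj$, and the out-degree at each node is bounded by the number of existential axioms of $\dlkb$ (assumed in normal form); this yields trees of bounded degree in which data constants occur only as leaves, exactly as demanded. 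Writing $\Tt$ for the result, there is a natural folding $g\colon\Tt\to\Jj$ sending every node to its origin and fixing $\adom{\Ii}$ pointwise; by construction $g$ is a homomorphism and $\Ii\subseteq\Jj_0\subseteq\Tt$.

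Four properties then need checking. First, $\Tt\models\dlkb$, because every node realises the concept type of its origin: concept inclusions are type conditions and hence transfer; existential axioms are witnessed by the children or leaves created during unravelling; universal axioms and role inclusions are preserved since every edge of $\Tt$ projects under $g$ to a genuine edge of $\Jj$ carrying the same roles; functionality does not occur in $\ALCHOI$; and each nominal $\{c\}$ stays a singleton precisely because the constant $c$ is shared rather than duplicated. Second, $\Tt\nvDash q$ follows from $g$: any match of $q$ in $\Tt$ composes with $g$ to a match in $\Jj$, contradicting $\Jj\nvDash q$. Third, for the closed queries $q'\in\Qcwa$, monotonicity gives $\answer{\Ii}{q'}\subseteq\answer{\Tt}{q'}$, while for the converse I would use $g$ again: a match of $q'$ in $\Tt$ with answer $\vec u$ yields $g(\vec u)\in\answer{\Jj}{q'}=\answer{\Ii}{q'}\subseteq\adom{\Ii}^{\,n}$, so no answer variable can be sent to a fresh copy (whose image under $g$ is anonymous); hence every answer variable lands in the core, $g$ fixes it, $\vec u=g(\vec u)\in\answer{\Ii}{q'}$, and therefore $\answer{\Tt}{q'}=\answer{\Ii}{q'}$. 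Together with $\Ii\subseteq\Tt$ this gives $\Tt\in\rest(\dlkb,\Ii,\Qcwa)$.

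The main obstacle is carrying out the unravelling soundly with inverse roles, role hierarchies, and nominals present simultaneously. Nominals are what force the core-plus-forest shape rather than a plain forest, and keeping them singletons is the reason data constants must be reused as shared leaves instead of being copied; inverse roles require each node to remember the edge through which it was reached, so that the inverse edge is present and universal restrictions along inverse roles hold; and role hierarchies require copying all super-role memberships of every edge. The delicate point is that these choices must at the same time preserve the closed-query answers \emph{exactly}, which is precisely where the decision never to duplicate a data constant pays off.
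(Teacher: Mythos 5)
Your proof is correct and follows essentially the same route as the paper's: the trivial direction $(2)\Rightarrow(1)$, and for $(1)\Rightarrow(2)$ a standard unravelling that keeps the restriction of $\Jj$ to $\adom{\Ii}$ untouched, introduces fresh copies only for anonymous elements, reuses data constants as shared leaves, and comes equipped with a folding homomorphism $g\colon\Tt\to\Jj$ used to transfer both $\Tt\nvDash q$ and the closed-query conditions. If anything, your check that answer variables of queries in $\Qcwa$ cannot land on fresh copies is spelled out more carefully than in the paper, which simply chains $\answer{\Ii}{q'}\subseteq\answer{\Tt}{q'}\subseteq\answer{\Jj}{q'}$ via monotonicity and the homomorphism.
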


Next we observe that $ \answer{\Jj}{q'} =\answer{\Ii}{q'}$ for every
query $q' \in \Qcwa$ can be reformulated as a non-entailment problem
of a UCQ capturing 'bad matches' of queries in $\Qcwa$.  Intuitively,
a match $\pi$ of a query $q'(\vec{x})$ in $\Qcwa$ is \emph{bad} if
$\pi(\vec{x}) \notin \answer{\Ii}{q'}$, or if it maps an answer
variable to some $d$ not in $\adom{\Ii}$.
To this aim, we assume below that we have two concept names $A_0$
and $\bar A_0$ such that for each instance $\Jj$ in
$\rest(\dlkb,\Ii,\Qcwa)$ we have $A_0 ^\Jj= \adom{\Ii}$ and $\bar A_0
^\Jj= \adom{\Jj}\setminus\adom{\Ii}$.
 \begin{lemma} 
For every $q(\vec{x}) = \exists \vec{y}\,\varphi(\vec{x},\vec{y})$ in
$\Qcwa$,
let \[\widehat{q} = \!\!\bigvee_{\vec{a} \notin \answer{\Ii}{q}} \!\!\!
\exists\vec{y}\,\varphi(\vec{a},\vec{y}) \; \lor \bigvee_{x \in
  \vec{x}} \exists \vec{x}\,\exists \vec{y}\,
\bar{A}_0(x) \land \varphi(\vec{x},\vec{y}) \;\; \text{and} \;\;
\widehat{Q} = \!\!\bigvee_{q \in \Qcwa} \!\!\widehat{q}\,.\]   Then $\Jj \in
\rest(\dlkb,\Ii,\Qcwa)$ iff $\Jj \models (\dlkb, \Ii)$ and $\Jj
\nvDash \widehat{Q}$.
\end{lemma}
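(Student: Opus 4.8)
The plan is to unfold Definition~\ref{def:closed-queries} and observe that, among the three defining conditions of $\rest(\dlkb,\Ii,\Qcwa)$, the inclusion $\Ii\subseteq\Jj$ together with $\Jj\models\dlkb$ are exactly captured by $\Jj\models(\dlkb,\Ii)$. Hence the whole statement reduces to showing that the remaining condition, namely $\answer{\Jj}{q'}=\answer{\Ii}{q'}$ for every $q'\in\Qcwa$, is equivalent to $\Jj\nvDash\widehat{Q}$ for instances $\Jj$ already satisfying $\Ii\subseteq\Jj$. Since $\widehat{Q}=\bigvee_{q'\in\Qcwa}\widehat{q'}$, it suffices to prove, for each fixed $q'\in\Qcwa$, that $\Jj\models\widehat{q'}$ holds exactly when $q'$ acquires a \emph{new} answer in $\Jj$, i.e.\ when $\answer{\Jj}{q'}\neq\answer{\Ii}{q'}$.

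The one structural fact I would use throughout is monotonicity of conjunctive queries: because $\Ii\subseteq\Jj$, every match of $q'$ in $\Ii$ survives in $\Jj$, so $\answer{\Ii}{q'}\subseteq\answer{\Jj}{q'}$, and equality fails precisely when some $\vec b\in\answer{\Jj}{q'}\setminus\answer{\Ii}{q'}$ exists. To match such new answers against the disjuncts of $\widehat{q'}$ I would split on whether $\vec b$ stays inside the old active domain. If every component of $\vec b$ lies in $\adom{\Ii}$, then $\vec b$ is one of the finitely many tuples $\vec a$ over $\adom{\Ii}$ with $\vec a\notin\answer{\Ii}{q'}$, and the witnessing match of $q'$ makes the Boolean disjunct $\exists\vec y\,\varphi(\vec b,\vec y)$ true. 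Otherwise some component $b_i$ of $\vec b$ lies outside $\adom{\Ii}$; using the assumed axiomatization $\bar{A}_0^{\Jj}=\adom{\Jj}\setminus\adom{\Ii}$, the answer variable $x_i$ at position $i$ witnesses the guarded disjunct $\exists\vec x\,\exists\vec y\,\bar{A}_0(x_i)\land\varphi(\vec x,\vec y)$. The converse is symmetric: a satisfying match of a first-type disjunct exhibits an answer $\vec a\in\answer{\Jj}{q'}$ explicitly excluded from $\answer{\Ii}{q'}$, while a satisfying match of a $\bar{A}_0$-guarded disjunct exhibits an answer with a component outside $\adom{\Ii}$, which cannot belong to $\answer{\Ii}{q'}$ since the latter consists only of tuples over $\adom{\Ii}$. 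Summing over $q'\in\Qcwa$ then yields $\Jj\nvDash\widehat{Q}$ iff condition~(3) holds.

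The point needing the most care — and the place where the statement must be read correctly — is that the first disjunct ranges over tuples $\vec a$ \emph{over $\adom{\Ii}$}, not over all constants; this is what keeps each $\widehat{q'}$ a genuine finite UCQ and what makes the inside/outside split exhaustive, with the $\bar{A}_0$-guarded disjuncts uniformly absorbing every answer that escapes $\adom{\Ii}$. Once this quantification domain is pinned down, I expect no real difficulty: both directions follow immediately from CQ monotonicity and the defining property of $\bar{A}_0$, so the only genuine bookkeeping is checking that the two families of disjuncts jointly and non-redundantly cover all new answers.
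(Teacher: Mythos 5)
Your proof is correct and follows exactly the argument the paper intends: the paper states this lemma without proof, relying on the preceding remark that $\widehat{q}$ enumerates the ``bad matches'' ($\pi(\vec x)\notin\answer{\Ii}{q}$ or an answer variable mapped outside $\adom{\Ii}$), and your case split via CQ monotonicity and the $\bar A_0$ axiomatization is precisely that observation made rigorous. Your clarification that the first disjunction ranges over tuples $\vec a$ over $\adom{\Ii}$ is the right reading --- it is what the paper itself relies on when bounding $|\widehat q|$ by $O(|q|\cdot|\adom{\Ii}|^n + |q|\cdot n)$.
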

Putting all pieces together, the problem thus reduces to deciding the
existence of a tree-like model of $\dlkb$ that extends $\Ii$ and is a
counter model for the UCQ $\widehat{Q} \lor q$.

Using an approach similar to that used
e.g. in~\cite{DBLP:journals/jcss/EiterOS12,DBLP:conf/ijcai/LutzSW15}
we will start by establishing the \conp upper bound in data complexity
by decomposing counter models and then use a guess and check algorithm
for finding such decompositions.

 
Given an instance $\Jj$ and an element $d \in \const$, the
\emph{$\dlkb$-type of $d$ in $\Jj$} is defined as $\mathsf{tp}_\Jj(d)
= \left\{ A \in \conceptnames(\dlkb) \mid d \in A^\Jj \right\}$.  A
\emph{(realizable) unary type} $\tau$ for $\dlkb$ is a subset of
$\conceptnames(\dlkb)$ such that there is a model $\Jj$ of $\dlkb$ and
$d \in  \const$ with $\tau = \mathsf{tp}_\Jj(d)$.  Further, for
unary types $\tau, \tau'$ and role $r$ we write $\tau \leadsto_r \tau'
$ if there is a model $\Jj$ of $\dlkb$ and $d, e\in\const$ such that
$(d,e) \in r^\Jj$, $\mathsf{tp}_\Jj(d) = \tau$ and
$\mathsf{tp}_\Jj(e)= \tau'$.  We now extend the notion of types to
capture small substructures of tree extensions of $\Ii$.
 
\begin{definition} Let $n \geq 1$.  A $n$-type $\Mm$ for $(\dlkb,\Ii)$
  is a finite tree  extension of $\Ii$ such that 
\begin{enumerate}
\item the forest of $\Mm$ is a single tree of out-degree at most $|\dlkb|$ and  depth at most~$n$;
\item for each role $r$, and  all $(d,d') \in r^\Mm$, $\mn{tp}_{\Mm}(d) \leadsto_r \mn{tp}_{\Mm}(d')$; 
\item for each $d \in A^\Mm \setminus \adom{\Ii}$ at depth at most
  $n-1$ and each $A \sqsubseteq \exists r. B$ in $\dlkb$ there
  exists $e \in B^{\Mm}$ such that $(d,e) \in r^\Mm$; 
\item $r^\Mm \subseteq s^\Mm$  for all $r \sqsubseteq s$ in $\dlkb$.
\end{enumerate}
We write $\mn{root}(\Mm)$ for the root of the unique tree in the
forest of $\Mm$.
\end{definition}

For an element $d \in \adom{\Mm}$, we use $\Mm(d)_{k}$ to denote the
subinstance of $\Mm$ induced by $\adom{\Ii}$ and the subtree of depth
at most $k$ of the only tree in the forest of $\Mm$, rooted at $d$.  We write $\Mm(d)_{ k} \simeq
\Mm'(e)_ {k}$ if there is an isomorphism $g$ from $\Mm(d)_{k}$ to
$\Mm'(e)_ {k}$ such that $g(d)= e$.

\begin{definition} A set $\Gamma$ of $n$-types for $(\dlkb,\Ii)$ is
  \emph{coherent} if the following  are satisfied: 
\begin{enumerate}
\item for all $\Mm, \Mm' \in \Gamma$, $\adom{M} \cap \adom{M'} =
\adom{\Ii}$ and $\Mm |_{\adom{\Ii}} = \Mm' |_{\adom{\Ii}}$;
\item for every $c \in A^\Mm\cap \adom{\Ii}$ and $A \sqsubseteq
\exists r. B$ in $\dlkb$ there exists $\Mm \in \Gamma$ such that there
is an $r$-edge from $c$ to $\mn{root}(\Mm)$ and $\mn{root}(\Mm) \in B^{\Mm}$;
\item for every $\Mm \in \Gamma$ and every successor $d$ of
$\mn{root}(\Mm)$, there is $\Mm' \in \Gamma$ such that $\Mm(d)_{ n-1}
\simeq \Mm'(e)_ {n-1}$ with $e = \mn{root}(\Mm')$.
\end{enumerate}
\end{definition}

\begin{lemma}[\cite{DBLP:journals/jcss/EiterOS12,DBLP:conf/ijcai/LutzSW15}]
 Let $Q$ be union of Boolean CQs, each using at most $n$
 variables. Then 
 $(\dlkb,\Ii) \models Q$ iff  for each coherent set $\Gamma$ of
 $n$-types for $(\dlkb,\Ii)$ it holds that  $\bigcup_{\Mm \in \Gamma} \Mm \models Q$.
\end{lemma}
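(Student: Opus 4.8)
The plan is to prove the stated equivalence by the standard mosaic/type technique of \cite{DBLP:journals/jcss/EiterOS12,DBLP:conf/ijcai/LutzSW15}, relying throughout on two facts: a UCQ is preserved under homomorphisms, and a Boolean CQ with at most $n$ variables can only be matched inside a neighbourhood of radius $n-1$. It is cleanest to argue the contrapositive on both sides, i.e.\ to show that $(\dlkb,\Ii)\not\models Q$ holds iff some coherent set $\Gamma$ of $n$-types satisfies $\bigcup_{\Mm\in\Gamma}\Mm\not\models Q$.

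For the direction from a counter model to a coherent set, suppose $(\dlkb,\Ii)\not\models Q$. By Lemma~\ref{lem:unravelling} I may take the counter model to be a tree extension $\Tt$ of $\Ii$ that is a model of $\dlkb$ with $\Tt\not\models Q$. I would let $\Gamma$ be the set of all $n$-types realised in $\Tt$: for every node $d$ in the forest of $\Tt$ (every root and every successor) take the $n$-type consisting of $\adom{\Ii}$ together with the depth-$n$ subtree of $\Tt$ rooted at $d$, with all concept and role memberships inherited from $\Tt$. Coherence is then read off directly: condition~(1) holds because every such $n$-type restricts to $\Tt|_{\adom{\Ii}}$; condition~(2) holds because $\Tt$ satisfies the existentials at the $\adom{\Ii}$ constants, the witnessing successor being $\mn{root}(\Mm)$ for some realised $\Mm$; condition~(3) holds because the depth-$(n-1)$ subtree below a successor of $\mn{root}(\Mm)$ is itself a realised $n$-type. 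Finally, mapping each $\Mm\in\Gamma$ isomorphically onto (one of) the subtrees of $\Tt$ from which it was extracted, together with the identity on $\adom{\Ii}$, yields a homomorphism $\bigcup_{\Mm\in\Gamma}\Mm\to\Tt$; since $Q$ is preserved under homomorphisms and $\Tt\not\models Q$, we conclude $\bigcup_{\Mm\in\Gamma}\Mm\not\models Q$.

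For the converse, given a coherent $\Gamma$ with $\bigcup_{\Mm\in\Gamma}\Mm\not\models Q$, I would build an actual model $\Jj$ of $\dlkb$ by unfolding $\Gamma$: start from the common restriction to $\adom{\Ii}$, attach at each constant the root $n$-types provided by coherence condition~(2), and then repeatedly attach, below each freshly created successor $d$, an $n$-type $\Mm'\in\Gamma$ whose root realises the $(n-1)$-subtree that $d$ already carries (available by condition~(3)). The bounded out-degree of $n$-types keeps $\Jj$ a tree extension of $\Ii$. The crucial point is that $\Jj\models\dlkb$: existential axioms are met by $n$-type condition~(3) for interior forest nodes and by coherence condition~(2) at $\adom{\Ii}$, while universal restrictions, inverse roles, and the role hierarchy are forced globally because every edge is $\leadsto_r$-consistent between realisable unary types (conditions~(2) and~(4) of the $n$-type definition), so each edge is certified to occur in some genuine model of $\dlkb$. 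For the query, every depth-$n$ neighbourhood in $\Jj$ is isomorphic to a substructure of $\bigcup_{\Mm\in\Gamma}\Mm$, and since each CQ in $Q$ uses at most $n$ variables, a match in $\Jj$ would localise to such a neighbourhood (routing through the shared $\adom{\Ii}$ when it touches the ABox part); this gives a homomorphism $\Jj\to\bigcup_{\Mm\in\Gamma}\Mm$ and hence a match in $\bigcup_{\Mm\in\Gamma}\Mm$, contradicting the assumption. Thus $\Jj\not\models Q$ and $(\dlkb,\Ii)\not\models Q$.

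The delicate step, and the one I expect to be the main obstacle, is exactly this model construction in the converse direction: local consistency of each $n$-type only guarantees correctness within depth $n$, so global satisfaction of the $\forall$-restrictions, the inverse roles, the role inclusions, and the nominal constraints of the full $\ALCHOI$ ontology must be recovered purely from the fact that the $n$-types are assembled from realisable unary types joined by $\leadsto_r$-consistent edges. Making this airtight, while simultaneously maintaining the invariant that every $\le n$-variable CQ match stays inside a realised neighbourhood (so that non-satisfaction of $Q$ transfers along the homomorphism $\Jj\to\bigcup_{\Mm\in\Gamma}\Mm$), is where the real work lies; the model-to-set direction, by contrast, is essentially bookkeeping once $\Gamma$ is defined as the set of realised $n$-types.
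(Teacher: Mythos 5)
The paper itself does not prove this lemma; it is imported from the two cited works, so there is no in-paper argument to compare against. Your reconstruction follows the standard decomposition used there (unravel to a tree extension, extract the realized $n$-types for one direction, unfold a coherent set into a tree-shaped model for the other), and your first direction, including the observation that the extracted types agree on $\adom{\Ii}$ and are otherwise disjoint so that the partial maps glue into a single homomorphism into $\Tt$, is correct. One claim in your converse direction is wrong as literally stated, although your own argument one sentence earlier does not need it: there is in general \emph{no} homomorphism from the unfolded model $\Jj$ to $\bigcup_{\Mm\in\Gamma}\Mm$. When a successor $d$ created inside a copy of $\Mm$ is re-identified with $\mn{root}(\Mm')$ via coherence condition~(3), the edge from $d$'s parent lives in $\Mm$ while the subtree below $d$ lives in $\Mm'$, so mapping nodes to their origins breaks edges across the seam. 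What does hold, and what you should state instead, is a per-match localization: the image of each connected component of a CQ with at most $n$ variables spans depth at most $n-1$ below its topmost forest node or routes through $\adom{\Ii}$, hence embeds into a single $n$-type of $\Gamma$, and the component images reassemble into a match in $\bigcup_{\Mm\in\Gamma}\Mm$ because all $n$-types coincide on $\adom{\Ii}$. With that correction, and filling in the global-consistency and nominal issues you rightly identify as the locus of the real work (which is precisely what the cited proofs do), the argument is sound.
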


Thus, we can test whether $(\dlkb, \Ii) \models Q$, by universally
guessing a set of coherent types $\Gamma$ and verifying that
$\bigcup_{\Mm \in \Gamma} \Mm \nvDash Q$.
The size of an $n$-type for $(\dlkb, \Ii)$ is at most $|\dlkb|^n +
|\Ii|$. Because all $n$-types in a coherent set coincide over
$\adom{\Ii}$, up to isomorphism there is at most $ (2+|\Ii|)^{|\dlkb|^{\mn{poly}(n)}}$
different $n$-types. Hence, we can impose the same bound on the size the
sets $\Gamma$ guessed in the algorithm.  
Further, checking whether a coherent set of $n$-types satisfies a
union $Q$ of Boolean CQs, each using at most $n$ variables can be done
in time $\mn{poly}(|Q|, |\Ii|^n)$.
We apply this algorithm to $Q=\widehat Q$. We can take for $n$ 
the maximal number of variables in queries from $\Qcwa$. It then
follows that the size of each $\widehat{q}$  is $O(|q|\cdot
|\adom{\Ii}|^n + |q|\cdot n)$, and $\widehat Q$ is the 
union of $|\Qcwa|$ such queries. This gives the $\conp$ upper bound
for the data complexity of the entailment problem. The lower bound follows 
from IQ entailment in a sublogic of $\mathcal{ALCHOI}$~\cite{DBLP:journals/jiis/Schaerf93}.

To establish the \twoexptime-completeness in combined complexity,
observe that the algorithm in~\cite{CalvaneseEO09} for deciding
entailment of a UCQ $Q$ in $\ALCHOI$ runs in time $2^{(|\Oo| + |\Ii| +
|Q|)^{\mn{poly}(n)}}$, where $n$ is the maximal number of variables in any
CQ constituting $Q$. The lower bound follows from CQ entailment in
$\ALCI$~\cite{DBLP:conf/cade/Lutz08}.


\subsection{Emptiness and Mixed
  Satisfiability}\label{sec:emptiness}

In this section we present a collection of complexity results for the
$\emptiness$ problem.  As a technical tool, we introduce another
closely related problem, called \emph{mixed satisfiability}, or
$\mixedsat$. Similarly to $\mixedqa$, $\mixedsat$ corresponds to a stronger
version of unrestricted satisfiability of a theory, but a relaxed version
of finite satisfiability: we are looking for models of the input
theory in which all predicates from a given set have finite extensions (the
remaining predicates may have a finite or an infinite extension). This
problem will not only make the characterization of $\emptiness$
clearer, but it is interesting in its own right. For instance, given a
theory $\varphi$ and a set $\Sigma$ of predicate symbols, we can use
$\mixedsat$ to check whether there exists a finite database $D$ over
$\Sigma$ such that $D$ can be extended to a model of $\varphi$ while
preserving $D$, i.e., by finding a finite or an infinite extension for
all predicates of $\varphi$ that do not appear in $\Sigma$. Formally,
$\mixedsat$ is defined as follows.

\problemdef{$\mixedsat(\Lth)$}{A pair $(\varphi,\Sigma)$ with
  $\varphi\,{\in}\, \Lth$ and $ \Sigma\subseteq \rels$.}{Does exist
  a model $\mcJ$ of $\varphi$ such that $R^{\mcJ}$ is finite for all
  $R\in \Sigma$?}

We let $\mixedunsat(\Lth)$ be the complement of the decision problem
$\mixedsat(\Lth)$.

\smallskip
We next observe that if we consider only atomic closed queries (closed
predicates) and we do not require predicate fixing, then $\mixedunsat$ and
$\emptiness$ collapse into one problem.
 
\begin{proposition}\label{prop:mixed-reduction} For each $\Lth$ there exist polynomial
  reductions
  \begin{enumerate}    
  \item from $\emptiness(\parTL,\parCAQ, \parDany,\parFempty)$ \\ to
    $\mixedunsat(\Lth)$, and \smallskip
  \item from $\mixedunsat(\Lth)$\\ to
    $\emptiness(\parTL,\parCAQ, \parDempty,\parFempty)$.
  \end{enumerate}
\end{proposition}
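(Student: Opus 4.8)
The plan is to exploit the fact that, once $\Qfix=\emptyset$, the third clause in the definition of $\fix$ becomes vacuous, so $\fix(\varphi,\mcI,\emptyset)$ is just the set of all models of $\varphi$ extending $\mcI$; since $\rest(\varphi,\mcI,\Qcwa)$ already imposes $\mcJ\models\varphi$ and $\mcI\subseteq\mcJ$, we obtain $\semantics(\varphi,\mcF,\mcI)=\rest(\varphi,\mcI,\Qcwa)$. Moreover, a closed atomic query $A(x)$ (resp.\ $r(x,y)$) forces $A^{\mcI}=A^{\mcJ}$ (resp.\ $r^{\mcI}=r^{\mcJ}$) in every $\mcJ\in\rest(\varphi,\mcI,\Qcwa)$, so the set $\Qcwa$ is nothing but a set of \emph{closed predicates}; write $\SigmaCWA$ for this set. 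The whole proposition then rests on the following equivalence, which I would isolate as the core lemma: there is a legal (finite) instance $\mcI$ with $\rest(\varphi,\mcI,\Qcwa)\neq\emptyset$ if and only if $\varphi$ has a model $\mcJ$ in which $R^{\mcJ}$ is finite for every $R\in\SigmaCWA$.

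For the forward direction I would take any $\mcJ\in\rest(\varphi,\mcI,\Qcwa)$: as $\mcI$ is a finite database and $R^{\mcJ}=R^{\mcI}$ for each closed $R$, all closed predicates have finite extension in $\mcJ$. For the converse, given a model $\mcJ$ with $R^{\mcJ}$ finite for all $R\in\SigmaCWA$, I would let $\mcI$ be the restriction of $\mcJ$ to $\SigmaCWA$; this is a finite legal instance with $\mcI\subseteq\mcJ$ and $R^{\mcI}=R^{\mcJ}$ for every closed $R$, hence $\mcJ\in\rest(\varphi,\mcI,\Qcwa)$. Taking contrapositives and recalling $\semantics=\rest$, the problem $\emptiness$ answers \emph{yes} (all intended model sets empty) exactly when $\mixedunsat(\varphi,\SigmaCWA)$ answers \emph{yes}.

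This yields reduction~(1) at once: from $(\varphi,\mcF)$ output $(\varphi,\SigmaCWA)$, leaving both $\varphi$ and $\Qdet$ untouched; the map is clearly polynomial and preserves $\varphi\in\Lth$. For reduction~(2), from a $\mixedunsat$ instance $(\varphi,\Sigma)$ I would output $(\varphi,\mcF)$ with $\mcF=(\Sigma,\Qcwa,\emptyset,\emptyset)$, where $\Qcwa$ contains one atomic query per predicate of $\Sigma$, so that $\SigmaCWA=\Sigma$; correctness is again the core lemma. The one extra obligation is the promise that the produced configuration lies in $\recognition(\parTL,\parCAQ,\parDempty,\parFempty)$, i.e.\ that $\mcF$ is genuinely a focusing solution. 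This is immediate: with $\Qfix=\emptyset$ the first clause of Definition~\ref{def:focusing} reads ``if $\rest\neq\emptyset$ then $\rest\neq\emptyset$'', and with $\Qdet=\emptyset$ the second clause is vacuous, so \emph{every} such configuration is automatically a focusing solution.

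The step I expect to require the most care is the treatment of closed predicates lying outside the schema $\Sigma$: such an $R$ has $R^{\mcI}=\emptyset$ for every legal $\mcI$, so closedness forces $R^{\mcJ}=\emptyset$ rather than merely finite, and the core lemma as stated no longer applies verbatim. In reduction~(1) I would absorb this by a linear-time preprocessing that appends to $\varphi$ an emptiness axiom ($A\sqsubseteq\bot$ for a concept, $\exists r.\top\sqsubseteq\bot$ for a role) for each such $R$ and then drops it from the finiteness signature; this stays within the DLs considered and does not change the set of legal instances of interest. When the closed predicates are already contained in $\Sigma$ (the typical case, and exactly the situation arising in reduction~(2)) no preprocessing is needed and the core lemma applies directly.
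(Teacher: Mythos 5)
Your proposal is correct and follows essentially the same route as the paper: both reductions hinge on the observation that, once $\Qfix$ and $\Qdet$ are trivialized, a legal instance with nonempty intended-model set exists iff $\varphi$ has a model in which every closed predicate has a finite extension, so $\emptiness$ and $\mixedunsat$ coincide under the translation between $\Qcwa$ and the finiteness signature $\Sigma$. You are in fact more careful than the paper's (very terse) proof, which silently identifies $\Qcwa$ with the finiteness signature and never addresses closed predicates lying outside the schema --- the edge case you explicitly patch with emptiness axioms (though note your patch is phrased for DLs while the proposition quantifies over arbitrary $\Lth$).
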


Recall that a fragment $\Lth$ of first-order logic has the
\emph{finite model property (FMP)}, if any satisfiable theory
$\varphi\in \Lth$ has a finite model. Clearly, for any
$\Lth$ with the FMP, $\mixedsat(\Lth)$ corresponds to ordinary
satisfiability in $\Lth$, i.e. $(\varphi,\Sigma)$ is a positive
instance of $\mixedsat(\Lth)$ iff $\varphi$ is satisfiable. Here, we
look at some fragments that do not have the FMP: we concentrate on
mixed satisfiability in DLs that support inverse
roles, and simultaneously allow to express functionality of
roles. These features cause the loss of the FMP, and make mixed
satisfiability non-trivial.

\begin{example} 
  Consider the following  ontology~$\,\Oo$: 
  \begin{align*}
   A&\ISA \exists r. B\,, &   B&\ISA \exists r. A\,,   &  \exists p^{-}\ISA \exists p\,,\\ 
   A&\ISA \exists p \sqcap \neg \exists .{p^{-}}\,, &  \{d\}&\ISA A\,,     &   \func(p^{-})\,. 
  \end{align*}
  Let $c_1,c_2,c_3,\ldots$ be an enumeration of $\const$ with \mbox{$c_1=d$.} It is easy to see that $(\Oo,\{A,B,r\})$ is a positive instance of 
  $\mixedsat$ for $\ALCHOIF$, which is witnessed by a model
  $\mcI$ with $A^{\mcI}=\{c_1\}$, $B^{\mcI}=\{c_2\}$,
  $r^{\mcI}=\{(c_1,c_2)\}$, and
  $p^{\mcI}=\{(c_{i},c_{i+1})\mid i\in\nat\}$. Note that in such
  $\mcI$, only $p$ has an infinite extension. Observe  that
  $\Oo$  forces $p$ to be infinite, i.e. $(\Oo,\Sigma)$ is a
  negative instance of $\mixedsat$ for $\ALCHOIF$ for any $\Sigma$ that
  contains $p$.
 \end{example}

 We next study mixed satisfiability for the DL $\ALCHOIF$ and its
 sublogic $\ALCHIF$, and show that the problem for these DLs is
 complete for \nexptime and \exptime, respectively. We then argue that
 for some important DLs of the $\DLLITE$ family the complexity can be
 lowered significantly.

 \smallskip

We first observe that, in all DLs considered here, role names can be
eliminated in polynomial time from $\Sigma$ in instances
$(\Oo,\Sigma)$ while preserving mixed satisfiability.  To see this,
suppose $(\Oo,\Sigma)$ is an instance of mixed satisfiability with
$\Oo$ in one of the above DLs, and suppose $\Sigma'$ is the set of
role names in $\Sigma$. We let
 \[\Oo'=\Oo\cup\left\{\top \ISA \forall r.A \,, \;\top \ISA \forall
r^{-}.A \mid r \in\Sigma'\right\}\,,\] where $A$ is a fresh concept
name that does not occur in $\Oo$. Intuitively, we use $A$ to collect
the domain and the co-domain of every role name in $\Sigma$. It is
easy to see that $(\Oo,\Sigma)$ is a positive instance of mixed
satisfiability iff $(\Oo,\{A\}\cup \Sigma\setminus \Sigma')$ is a
positive instance of mixed satisfiability.

 Keeping in mind the above observation, our goal next is to provide an
 algorithm to decide, given a pair $(\Oo,\Sigma)$, where $\Oo$ is an
 ontology expressed in $\ALCHOIF$ and $\Sigma$ is a set of
 \emph{concept names}, whether there exists a model $\mcI$ of $\Oo$
 such that $A^{\mcI}$ is finite for all $A\in \Sigma$.  Our algorithm
 for $\ALCHOIF$, as well as the algorithms for the remaining DLs,
 build on the previously developed methods for finite model
 reasoning in DLs. Specifically, there exist algorithms that can solve
 $\mixedsat(\ALCHOIF)$ and $\mixedsat(\ALCHOIF)$ for instances
 $(\Oo,\Sigma)$, where $\Sigma$ must be the set of \emph{all} concept
 and roles names in $\Oo$. For us the most relevant
 are~\cite{DBLP:journals/iandc/LutzST05,DBLP:journals/logcom/Pratt-Hartmann07,DBLP:journals/jolli/Pratt-Hartmann05},
 which present algorithms based on reductions to \emph{integer
   programming}. Our algorithm and the presentation can be described
 as follows. We define the notions of a \emph{tile} $T$ for $\Oo$ and
 a \emph{mosaic} $N$ for $(\Oo,\Sigma)$. Note that DLs considered here
 support only unary and binary relations, and thus models here can be
 naturally seen as directed graphs with labels on nodes (capturing the
 content of concept names), and labels on edges (describing the
 extensions of role names).  Intuitively, a tile $T$ is a (finite and
 small) description of a single domain element together with its
 (relevant) neighborhood in a model of $\Oo$. A tile can be seen as a
 building block for constructing models of $\Oo$. A mosaic $N$ is a
 multiset of tiles that has to satisfy certain coherence
 conditions. The conditions ensure that a desired model of $\Oo$ can
 be assembled by instantiating tiles in $N$ (according to the tile
 multiplicities given by $N$). In the final step, we show that
 deciding the existence of a mosaic for $(\Oo,\Sigma)$ can be reduced
 to a variant of integer programming problem, which, as we argue, can
 be solved algorithmically and yields worst-case optimal upper bounds
 for $\ALCHOIF$ and $\ALCHIF$. We concentrate here on ontologies in normal form.


  \smallskip
 We start with the formal definition of tiles.

 \begin{definition}[Tiles]
   Let $\Oo$ be  an $\ALCHOIF$ ontology in normal form. A \emph{type $T$ (for $\Oo$)}
   is any subset of $\simpleconcepts(\Oo)$
   such that $\top\in T$ and
   $\bot\not \in T$. We use $\mathsf{Types}(\Oo)$ to denote the set of
   types for $\Oo$. A \emph{tile $\tau$ (for $\Oo$)} is any
   tuple $\tau= (T,\rho)$, where $T\in \mathsf{Types}(\Oo)$,  $\rho$
   is a set of pairs $(R,T')$ with $R\subseteq \roles(\Oo)$ and $T'\in
   \mathsf{Types}(\Oo)$, and such 
   that the following conditions~are~satisfied:
   \begin{enumerate}
   \item $|\rho|\leq |\Oo|$;


   \smallskip\item If $B_1\sqcap \ldots \sqcap B_{k-1}
   \ISA B_{k}\sqcup\ldots \sqcup B_m \in \Oo$ and \linebreak
      \mbox{$ \{B_1,\ldots, B_{k-1}\}\subseteq T$}, then $
      \{B_{k}, \ldots,  B_m \}\cap T\neq\emptyset$;

   \smallskip\item If $A\ISA \exists r. B\in \Oo$ and $A\in T$, then there is
     $(R,T')\in \rho$ such that $r\in R$ and $B\in T'$;
   \smallskip\item For all $(R,T')\in \rho$, the following hold:
     \begin{enumerate}
     \smallskip\item If $A\ISA \forall r. B\,{\in}\,\Oo$, $ A\in T$ and $r\in R$,
       then $B\in T'$;
     \smallskip\item If $A\ISA \forall r. B\,{\in}\,\Oo$, $ A\in T'$ and
       $r^{-}\,{\in}\,R$, then $B\,{\in}\,T$;
       \smallskip\item If $r\ISA s \in \Oo$ and $ r\in R$, then $s\in R$;
       \smallskip\item If
       $r\ISA s \in \Oo$ and $ r^{-}\in R$, then $s^{-}\in R$.
     \end{enumerate}
    
   \smallskip\item If $\func(r)\in\Oo $, then
     $|\{(R,T)\in \rho\mid r\in R \}|\leq 1$.

   \end{enumerate}
   \smallskip
   We use $\tiles(\Oo)$ to denote the set of all tiles for $\Oo$.
 \end{definition}
 
 Intuitively, a tile $\tau= (T,\rho)$ for $\Oo$ describes an element
 $e$ that participates in basic concepts given by $T$, and whose
 neighborhood is described (at least partially) by $\rho$. Each
 $(R,T')\in \rho$ can be seen as a labeled edge outgoing from $e$. The
 sets $T$ and $\rho$ form a configuration that is consistent with the
 statements in $\Oo$ (from the `perspective' of $e$):
 $e$~participates in suitable simple concepts (Condition~(2)), there
 is a proper witness edge for all existential inclusions
 $A\ISA \exists r.B$ that are `triggered' by $e$ (Condition~(3)),
 all edges outgoing from $e$ satisfy all universal inclusions
 $A\ISA \forall r.B$ and all role inclusions (Condition~(4)), and the
 edges described by $\rho$ are compatible with the functionality
 assertions in $\Oo$ (Condition~(5)).

 \smallskip

 We now formally define mosaics, which are finite representations of
 desired models. Intuitively, a mosaic tells us how many instances of
 different tiles we need in order to construct a desired model. Since
 in our setting some tiles might have to be instantiated infinitely
 many times, we need a value to talk about the cardinality of a
 countable infinite set. In dealing with this, we
 closely follow~\cite{DBLP:journals/jolli/Pratt-Hartmann05}, which shows
 how an algorithm for (fully) finite satisfiability in $\mathcal{C}^{2}$
 can be turned into an algorithm for (fully) unrestricted satisfiability. We let
 $\natstar=\nat\cup\{\alnull\}$. Since mosaics will involve linear
 inequalities, the usual ordering $<$, and the arithmetic operations
 $+$ and $\cdot$ on $\nat$ need to be extended to accommodate the new
 element $\alnull$. In particular, $n < \alnull$ for all $n\in
 \nat$. We let $\alnull \cdot \alnull = \alnull + \alnull =
 \alnull$. We let $\alnull + n = n + \alnull = \alnull $ for all
 $n\in \nat$. We let $\alnull \cdot n = n \cdot \alnull = \alnull $
 for all $n\in \nat$ with $n>0$. Finally, we let
 $\alnull \cdot 0 = 0\cdot \alnull = 0$.
 
 \begin{definition}[Mosaics]\label{def:mosaic} Assume an ontology
   $\Oo$, and a set $\Sigma$ of concept names. Given a set $R$ of
   roles, we let $R^{-}= \{r^{-}\mid r\in R\}$. A \emph{mosaic} for
   $(\Oo,\Sigma)$ is a function $N:\tiles(\Oo)\rightarrow \natstar$
   that satisfies the following conditions:
   \begin{enumerate}

   \item \label{item:nominals} For every nominal
     $\{c\}\in \simpleconcepts(\Oo)$ we have:
     \[\sum_{\substack{(T,\rho)\in\tiles(\Oo) \,\land \\ \{c\}\in T}}
       N((T,\rho)) =1 \,;\]

   \item \label{item:sometile} The following inequation is
     satisfied: \[\sum_{\tau\in\tiles(\Oo)} N(\tau) \geq 1 \,;\]

   \item \label{item:finiteness} For every $A\in \Sigma$ and every
     tile $(T,\rho)\in \tiles(\Oo)$ with $A\in T $, we have
     $N((T,\rho))\neq \alnull$;

     \medskip
   \item \label{item:successor} For all $(T,\rho)\in\tiles(\Oo) $
     and $(R,T')\in \rho $ the following implication holds: if $N((T,\rho))> 0$, then there is some $\rho'$ such that  $(T',\rho')\in \tiles(\Oo)$ and $N((T',\rho'))>0$;

     \medskip
      \item\label{item:one-way-matching} For every pair
     $T,T'\in \mathsf{Types}(\Oo)$ and every
     $R\subseteq \roles(\Oo)$ with  $\func(r^{-})\in\Oo$, the following inequation is satisfied:
     \[\sum_{ \substack{(T,\rho)\in \tiles(\Oo)\,\land \\ (R,T')\in
           \rho } }N((T,\rho)) \leq
       \sum_{\substack{\substack{(T',\rho')\in \tiles(\Oo)\,\land
             \\(R^-,T)\in \rho' }}}N((T',\rho')) \;.
     \]

   \end{enumerate}
 \end{definition}
 
 The conditions in Definition~\ref{def:mosaic} are geared to ensure
 that a mosaic $N$ for $(\Oo,\Sigma)$ witnesses the existence of a
 model $\mcI$ for $\Oo$ where each concept name from $\Sigma$ has a
 finite extension. Condition (\ref{item:nominals}) tells that in a
 candidate for $\mcI$ there is exactly one element that satisfies a
 nominal $\{c\}$ (which must be the constant $c$ itself). Condition
 (\ref{item:sometile}) tells that at least one tile needs to be
 realized in a model of $\Oo$. Condition (\ref{item:finiteness})
 ensures that tiles $(T,\rho)$ with $\Sigma\cap T\neq \emptyset $ are
 allowed to be instantiated only finitely many times. Condition
 (\ref{item:successor}) ensures that for tile $(T,\rho)$ that will be
 instantiated at some element $e$, we can also instantiate tiles to
 provide neighbors for $e$ as prescribed by $\rho$. Finally, to
 understand the Condition~\ref{item:one-way-matching}, let us take a pair
 $T,T'\in \mathsf{Types}(\Oo)$ and a set $R\subseteq \roles(\Oo)$, and
 let us view a candidate for $\mcI$ as labeled graph. Suppose the
 graph has $n$ nodes that are labeled with $T$, and having an
 $R$-labeled edge to a node labeled with $T'$. If
 $\func(r^{-})\in\Oo$, then clearly there must exists at least $n$ nodes
 that are labeled with $T'$, and having an $R^-$-labeled edge to a
 node labeled with $T$.

 The conditions  placed on mosaics correctly characterize mixed
 satisfiability in $\ALCHOIF$ (see Appendix for a proof).
 
\begin{theorem}\label{thm:mosaic}
 Assume an $\ALCHOIF$ ontology $\Oo$, and a set $\Sigma\subseteq
 \conceptnames$. Then the following statements  are equivalent: 
  \begin{enumerate} 
  \item $\Oo$ has a model $\mcI$ such $A^{\mcI}$ is finite for all
     $A\in \Sigma$.
  \item There exists a mosaic $N$ for $(\Oo,\Sigma)$.
  \end{enumerate}
\end{theorem}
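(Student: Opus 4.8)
The plan is to prove both implications via the standard correspondence between a model and the multiset of local neighbourhoods (\emph{tiles}) it realises: one reads a mosaic off a model, and conversely assembles a model from a mosaic by instantiating tiles and wiring edges.

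\textbf{From (1) to (2).} Given a model $\mcI$ of $\Oo$ with $A^{\mcI}$ finite for all $A\in\Sigma$, I would read off a tile for every element. For $e\in\adom{\mcI}$ let its type be $T_e=\{B\in\simpleconcepts(\Oo)\mid e\in B^{\mcI}\}$, and let $\rho_e$ collect the bundle $(R_{e,e'},T_{e'})$, where $R_{e,e'}=\{s\in\roles(\Oo)\mid (e,e')\in s^{\mcI}\}$, for every neighbour $e'$ that either witnesses a triggered existential inclusion $A\ISA\exists r.B$ or is reached by $e$ along some functional role $p$ (i.e.\ $\func(p)\in\Oo$ and $p\in R_{e,e'}$). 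There are at most $|\Oo|$ existential inclusions and at most $|\Oo|$ functionality assertions, each contributing at most one recorded neighbour, so $|\rho_e|$ stays linearly bounded in $|\Oo|$ as required. Setting $\tau_e=(T_e,\rho_e)$, one checks that $\tau_e\in\tiles(\Oo)$: conditions (2)--(4) of the tile definition hold because $\mcI\models\Oo$, and condition (5) holds because $\mcI$ satisfies the functionality assertions. Now define $N(\tau)=|\{e\in\adom{\mcI}\mid\tau_e=\tau\}|\in\natstar$, taking value $\alnull$ when the set is countably infinite. Mosaic conditions (1)--(4) are immediate: (1) from the standard name assumption, since each nominal $\{c\}$ is realised exactly by $c$; (2) from $\adom{\mcI}\neq\emptyset$; (3) from finiteness of $A^{\mcI}$ for $A\in\Sigma$; and (4) because a realised bundle points to a realised type. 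The only delicate point is the one-way matching condition (5): fixing $R$ with some $r\in R$, $\func(r^-)\in\Oo$, and types $T,T'$, I would argue that the map sending each $e$ of type $T$ with $(R,T')\in\rho_e$ to its $R$-neighbour $e'$ is injective (this is exactly functionality of $r^-$, as each $e'$ has a unique $r$-predecessor), and that $e'$ records the reciprocal bundle $(R^-,T)$ because that edge runs along the functional role $r^-$; summing over fibres yields the inequality.

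\textbf{From (2) to (1).} Conversely, given a mosaic $N$, I would build a model by instantiating each tile $\tau=(T,\rho)$ into a set $X_\tau$ of $N(\tau)$ fresh elements (countably many when $N(\tau)=\alnull$), naming the unique realiser of each nominal $\{c\}$ by $c$ as licensed by condition (1). An element $x\in X_\tau$ is placed in $B^{\mcI}$ exactly for $B\in T$. It remains to install, for every $x\in X_\tau$ with $\tau=(T,\rho)$ and every bundle $(R,T')\in\rho$, an $R$-labelled edge from $x$ to some element of type $T'$; condition (4) guarantees that a tile with first component $T'$ is instantiated, so such targets exist. Edges whose bundle $R$ contains no inverse-functional role are placed freely: reusing targets is harmless, and forward functionality is controlled by tile condition (5), which caps the relevant out-degree at one. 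Tile conditions (4a)--(4d) ensure every installed edge is compatible with the universal and role inclusions, so $\mcI$ satisfies all inclusions of $\Oo$; tile condition (3) supplies existential witnesses and tile condition (2) the simple inclusions. Finally, mosaic condition (3) makes $A^{\mcI}$ a finite union of finite sets for $A\in\Sigma$, hence finite.

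\textbf{Main obstacle.} The crux, and the place where mosaic condition (5) is used, is the installation of edges whose bundle contains an inverse-functional role $r$ (with $\func(r^-)\in\Oo$): such edges must be placed so that every target receives at most one incoming $r$-edge. I would handle them by a reciprocal injective matching grouped by the triple $(R,T,T')$, matching the demanding $T$-elements carrying $(R,T')$ to distinct offering $T'$-elements carrying $(R^-,T)$. A single physical edge then realises both $x$'s bundle $(R,T')$ and $y$'s bundle $(R^-,T)$, since $(x,y)\in s^{\mcI}$ for $s\in R$ is the same fact as $(y,x)\in t^{\mcI}$ for $t\in R^-$. Injectivity of the matching, together with tile condition (5) applied at the targets (each target carries at most one bundle mentioning $r^-$), guarantees that $\func(r^-)$ is respected. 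The inequality in mosaic condition (5) is precisely a Hall-type count ensuring enough offering elements exist; the subtlety worth care is the interaction of finite and $\alnull$-valued multiplicities, where the matching is immediate when the offering side is infinite and is otherwise a finite injection read off the inequality. Assembling these matchings over all triples, together with the free placement of the remaining edges, yields the desired model and closes the equivalence.
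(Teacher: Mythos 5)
Your proposal is correct and follows essentially the same route as the paper's proof: in one direction you extract a tile from each element (recording existential witnesses and functional-role neighbours) and count realisations, verifying the one-way matching inequality via injectivity from inverse functionality; in the other you instantiate each tile $N(\tau)$ times and wire edges by an injective (or bijective) matching read off condition (5), placing the non-functional edges freely. The only cosmetic difference is that the paper separates the doubly-functional case (bijection) from the merely inverse-functional case (injection), whereas you treat them uniformly; this does not change the substance of the argument.
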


Due to Theorem~\ref{thm:mosaic}, mixed satisfiability in $\ALCHOIF$ reduces
to deciding the existence of a mosaic. To reason about mosaics, we
 employ (an extension of) integer programming. In particular, we
consider linear inequations of the~form
\begin{equation}
  a_1\cdot x_1+\cdots+a_n\cdot x_n + c \leq b_1\cdot
  y_1+\cdots+b_m\cdot y_m\label{eq:linear}
\end{equation}
where $a_1,\ldots,a_n,b_1,\ldots,b_m$ are positive integers, $c$ is a
(possibly negative) integer, and $x_1,\ldots,x_n,y_1,\ldots,y_m$ are
variables. In case $c\geq 0$, the inequation \eqref{eq:linear} is a
called \emph{positive}.  An \emph{ordinary inequation system} is a
pair $(V,\mathcal{E})$, where $\mathcal{E}$ is a set of linear
inequations of the form \eqref{eq:linear}, and $V$ is the set of
variables that appear in $\mathcal{E}$. The notion of a
\emph{solution} $S: V\rightarrow \nat$ for $(V,\mathcal{E})$ is the
standard one. The definition of solutions \emph{over} $\nat$ can be
lifted to solutions \emph{over} $\natstar$ in the obvious way, i.e.,
we will consider solutions to $(V,\mathcal{E})$ that are functions of
the form $S: V\rightarrow \nat^{*}$. An \emph{enriched inequation
  system} is a tuple $(V,\mathcal{E},F,I)$, were $(V,\mathcal{E})$ is
an ordinary inequation system, $F\subseteq V$, and $I$ is a set of
\emph{implications}, which are expressions of the form
\[y_1+\cdots +y_m>0 \Rightarrow x_1+\cdots + x_n>0\] with
$y_1,\ldots,y_k,x_1,\ldots,x_n\in V$. A solution $S$ (over $\nat$ or
over $\natstar$) to $(V,\mathcal{E},F,I)$ is a solution to
$(V,\mathcal{E})$ that additionally satisfies the following
conditions:
\begin{enumerate}
\item $S(x)\neq \alnull$ for all $x\in F$;
\item $S(x)>0$ implies $S(x_1)+\cdots + S(x_n)>0$ for all \\
  $x>0 \Rightarrow x_1+\cdots + x_n>0$ in $I$.
\end{enumerate}

The following result on the complexity of reasoning with enriched
inequation systems can be shown (see Appendix).

  \begin{theorem}\label{thm:enriched-ineq}
    Deciding the existence of a solution for an enriched inequation
    system $\mathcal{H}=(V,\mathcal{E},F,I)$ is feasible in
    non-deterministic polynomial time in the size of $\mathcal{H}$. If
    $\mathcal{E}$ contains only positive inequations, the problem is
    solvable in polynomial time in the size of $\mathcal{H}$.
  \end{theorem}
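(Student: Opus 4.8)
The plan is to reduce solvability of an enriched system $\mathcal{H}=(V,\mathcal{E},F,I)$ to feasibility of an \emph{ordinary} integer inequation system, handling the two nonstandard ingredients---the extra value $\alnull$ and the implications $I$---by an initial guessing step for the \np\ bound. I rely throughout on the classical fact that feasibility of an ordinary system $(V,\mathcal{E})$ over $\nat$ is in \np: a feasible such system always admits a solution whose entries have bit-length polynomial in its size, so the values may be guessed and checked. Concretely, for the \np\ bound I would nondeterministically guess a \emph{profile} $\sigma\colon V\to\{0,+,\infty\}$ subject to $\sigma(x)\neq\infty$ for all $x\in F$, with the intended meaning $S(x)=0$, $0<S(x)<\alnull$, and $S(x)=\alnull$ respectively. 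Each inequation of $\mathcal{E}$ is then evaluated at the level of $\sigma$ using the arithmetic of $\natstar$: if some right-hand variable has profile $\infty$ the inequation is discarded as trivially true; if all right-hand variables are finite but some left-hand variable has profile $\infty$, the guess is rejected; otherwise the inequation is retained over the variables of finite profile. Likewise, each implication of $I$ is checked directly: whenever a premise variable has profile in $\{+,\infty\}$ we demand that some conclusion variable has profile in $\{+,\infty\}$.

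After these checks the only remaining obligation is to realize the finite part of $\sigma$, i.e. to find nonnegative integers for the variables in $\sigma^{-1}(\{0,+\})$ satisfying the retained inequations, vanishing on $\sigma^{-1}(0)$, and being at least $1$ on $\sigma^{-1}(+)$. This is exactly feasibility of an ordinary integer inequation system with box constraints forcing each variable to be $0$ or at least $1$, which is in \np. Folding this into the initial guess yields the \np\ upper bound.

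For the refined bound when $\mathcal{E}$ contains only positive inequations, I would replace the guessing of the finite values by a polynomial-time computation resting on two observations. First, positive inequations have the same integer and rational feasibility: if rationals $x^\ast$ satisfy $\sum_i a_i x_i + c\le\sum_j b_j y_j$ with $c\ge 0$, then for every integer $\lambda\ge 1$
\[
\sum_i a_i(\lambda x^\ast_i) + c \;\le\; \sum_i a_i(\lambda x^\ast_i) + \lambda c \;\le\; \sum_j b_j(\lambda y^\ast_j)\,,
\]
so clearing denominators turns a rational solution into an integer one, and feasibility over the finite variables is decidable by linear programming. Second, the solutions of a positive system are closed under pointwise addition (again because $c\le 2c$ when $c\ge 0$), hence their supports are closed under union and there is a unique maximal support, computable by polynomially many linear-programming calls.

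It remains to reconcile these observations with $\alnull$ and with $I$, which is the delicate part. Making a variable $\alnull$ is admissible only if it lies outside $F$ and every inequation using it on the left also uses an $\alnull$-variable on the right, so the admissible infinite sets are closed under union and have a greatest element obtainable as a fixpoint; but turning a variable to $\alnull$ also \emph{activates} every implication premise it occurs in, possibly demanding that some conclusion variable be nonzero, while dually an implication whose conclusion is forced to zero (no conclusion variable lying in the maximal support) forces its premise variables to zero. I would therefore compute, by interleaving a greatest fixpoint for the infinite variables with a least fixpoint propagating \emph{forced-zero} variables, a consistent partition into infinite, forced-zero, and free variables, and then run one final maximal-support linear-programming test on the free part. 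The main obstacle is precisely proving that this combined fixpoint is sound and complete---that an $\natstar$-solution respecting $I$ exists if and only if the fixpoints stabilize without forcing a variable of $F$ (or a provably positive variable) to zero---while keeping every step polynomial; the \np\ correctness, by contrast, is routine once the profile faithfully simulates $\natstar$-arithmetic.
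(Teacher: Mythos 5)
Your \np{} argument is sound and is essentially a streamlined version of what the paper does. Where you guess a three-valued profile in $\{0,+,\infty\}$ and simulate the $\natstar$ arithmetic of each inequation directly, the paper first eliminates $\alnull$ by introducing a companion variable $x^{\infty}$ for every $x\in V$ (adding $y_i^{\infty}$ to every right-hand summand $b_i\cdot y_i$, the constraints $x^{\infty}=0$ for $x\in F$, and, for each inequation, an implication propagating ``infiniteness'' from left-hand to right-hand variables), thereby reducing to a system whose solutions live over $\nat$; it then resolves each implication nondeterministically by adding either ``premise sum $=0$'' or ``some conclusion variable $>0$'' and invokes the standard \np{} bound for ordinary integer inequation systems. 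Your profile guess carries the same information as the paper's two guesses combined, and your case analysis of how an inequation evaluates under the profile is a faithful simulation of the $\natstar$ semantics (the coefficients being positive, no $\alnull\cdot 0$ subtlety arises), so this half is complete.

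The \ptime{} half is where the routes genuinely diverge, and where your write-up has a real gap. You correctly identify the ingredients --- scaling shows that integer and rational feasibility coincide for positive inequations, solutions are closed under pointwise addition, hence there is a maximal support and a forced-zero least fixpoint can absorb the implications --- but you explicitly leave unproven the soundness and completeness of the \emph{combined} fixpoint that interleaves the greatest fixpoint for the $\alnull$-variables with the forced-zero propagation, and that is exactly the lemma your route stands or falls on. The paper's order of operations makes this issue disappear: the $x^{\infty}$-transformation preserves positivity (the new constraints $x^{\infty}=0$ and the modified inequations all have nonnegative constants, and the infiniteness bookkeeping becomes ordinary implications), so in the positive case one never reasons about $\alnull$ at all; a single forced-zero fixpoint over the transformed system, with one linear-programming feasibility test per round, suffices, and its correctness is the standard argument for positive systems with implications (closure under addition makes the maximal-support solution satisfy every implication whose premises have not been zeroed out). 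If you apply that reduction first and then run your fixpoint, your argument goes through; as written, the polynomial-time claim is unproved at precisely the step you yourself flag as the main obstacle.
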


  Assume an $\ALCHOIF$ ontology $\Oo$, and a set
  $\Sigma\subseteq \conceptnames$. Due to Theorem~\ref{thm:mosaic},
  checking the existence of a model $\mcI$ of $\Oo$ where $A^{\Sigma}$
  is finite for all $A\in \Sigma$ reduces to checking the existence of
  a mosaic for $(\Oo,\Sigma)$. It is easy to see that the latter can
  be reduced to checking the existence of a solution to an
  exponentially sized enriched inequation system
  $(V^*,\mathcal{E}^*,F^*,I^*)$. Intuitively, $V^*$ is obtained by
  taking a variable $x_\tau$ for every $\tau\in \tiles(\Oo)$. The
  inequations in $\mathcal{E}^{*}$ are obtained directly from the
  expressions in points (\ref{item:nominals}), (\ref{item:sometile})
  and (\ref{item:one-way-matching}) of Definition~\ref{def:mosaic}, by
  replacing $N(\tau)$ with the variable $x_\tau$ for all
  $\tau\in\tiles(\Oo)$.  The set $I^*$ is obtained directly from point
  (\ref{item:successor}). That is, for all $(T,\rho)\in\tiles(\Oo) $ and
  $(R,T')\in \rho $, $I^*$ contains
  \[x_{(T,\rho)}>0 \Rightarrow x_{\tau^1}+\cdots + x_{\tau^m}> 0\,,\]
  where
  $\{\tau^1,\ldots , \tau^m\} =\{(T',\rho')\in\tiles(\Oo)\mid \mbox{
    for some }\rho'\}$. Finally,
  $F^*=\{x_\tau\mid \tau = (T,\rho)\in \tiles(\Oo)\land \Sigma\cap
  T\neq\emptyset \}$. We observe here that in the case $\Oo$ is an 
  $\ALCHIF$, we can ignore point (\ref{item:nominals}) in the
  construction of $ \mathcal{E}^{*}$, which then results in a set that
  contains only positive inequations.

As the above encoding requires exponential time, the complexity
results in Theorem~\ref{thm:enriched-ineq} lead to the following
complexity bounds for $\ALCHOIF$ and $\ALCHIF$.

\begin{theorem}\label{thm:empt-complexity} The following hold:
  \begin{enumerate}
    \itemsep0.2em 
  \item $\mixedsat(\parTALCHOIF)$ is \nexptime-complete. 
  \item $\mixedsat(\parTALCHIF)$ is \exptime-complete. 
  \item $\emptiness(\parTALCHOIF,\parCAQ, \parDany,\parFempty)$ \\ is
    \conexptime-complete. 
  \item
    $\emptiness(\parTALCHIF,\parCAQ, \parDany,\parFempty)$ \\ is
    \exptime-complete.
  \end{enumerate}
\end{theorem}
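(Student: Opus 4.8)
The plan is to split the four claims into the two mixed-satisfiability bounds and the two emptiness bounds, and to derive the latter mechanically from the former through Proposition~\ref{prop:mixed-reduction}. For the upper bounds on $\mixedsat(\ALCHOIF)$ and $\mixedsat(\ALCHIF)$, I would first apply the polynomial preprocessing that eliminates role names from $\Sigma$ (the observation preceding the definition of tiles) and normalises $\Oo$, so that Theorem~\ref{thm:mosaic} reduces mixed satisfiability of $(\Oo,\Sigma)$ to the existence of a mosaic. I then use the translation of mosaics into an enriched inequation system $(V^*,\mathcal{E}^*,F^*,I^*)$ sketched after Theorem~\ref{thm:enriched-ineq}. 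The crucial bookkeeping step is that $\tiles(\Oo)$ is only \emph{singly} exponential: a type is one of $2^{|\simpleconcepts(\Oo)|}$ sets, and a tile attaches a set $\rho$ of at most $|\Oo|$ pairs drawn from a singly exponential pool, so $|\tiles(\Oo)|=2^{O(|\Oo|^2)}$. Hence the system has singly exponential size and is producible in exponential time, and feeding it to the nondeterministic polynomial-time algorithm of Theorem~\ref{thm:enriched-ineq} yields the \nexptime\ bound for $\ALCHOIF$. For $\ALCHIF$ one additionally observes that the only non-positive inequations of $\mathcal{E}^*$ come from the nominal Condition~(\ref{item:nominals}) of Definition~\ref{def:mosaic} (its ``$\leq 1$'' half needs the constant $-1$); since $\ALCHIF$ has no nominals this condition disappears, $\mathcal{E}^*$ becomes positive, and the deterministic polynomial-time case of Theorem~\ref{thm:enriched-ineq} gives the \exptime\ bound.

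For the matching lower bounds I would note that taking $\Sigma=\emptyset$ turns $\mixedsat$ into ordinary (unrestricted) satisfiability, as $(\varphi,\emptyset)$ is a positive instance exactly when $\varphi$ has a model. Unrestricted satisfiability is already \nexptime-hard for the fragment $\dlname{ALCOIF}$ of $\ALCHOIF$---the classical jump caused by combining nominals, inverse roles, and functionality---and \exptime-hard for $\ALCI\subseteq\ALCHIF$, so both hardness results transfer verbatim and meet the upper bounds.

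The two emptiness claims then follow from Proposition~\ref{prop:mixed-reduction}. Its reduction~(1) gives $\emptiness(\parTALCHOIF,\parCAQ,\parDany,\parFempty)\leq_p\mixedunsat(\ALCHOIF)$, which is in \conexptime\ by the first claim, while its reduction~(2) gives $\mixedunsat(\ALCHOIF)\leq_p\emptiness(\parTALCHOIF,\parCAQ,\parDempty,\parFempty)$, establishing \conexptime-hardness. The lower bound passes from the $\parDempty$ to the $\parDany$ variant because $\emptiness$ never inspects $\Qdet$---the set $\semantics(\Oo,\mcF,\mcI)=\rest(\Oo,\mcI,\Qcwa)\cap\fix(\Oo,\mcI,\Qfix)$ is independent of it---and every configuration $(\Sigma,\Qcwa,\emptyset,\emptyset)$ is trivially a focusing solution, so the constructed instances are legal under both parameterisations. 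The $\ALCHIF$ case is identical, using that \exptime\ is closed under complement so that $\mixedunsat(\ALCHIF)$ is \exptime-complete.

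I expect the upper-bound pair to be where the care is needed: the hard content lives in Theorems~\ref{thm:mosaic} and~\ref{thm:enriched-ineq}, so the risk is purely in the assembly. I must confirm that $\tiles(\Oo)$ is singly and not doubly exponential, and that the positive-versus-general inequation split tracks the presence of nominals exactly; a slip in either would push the $\ALCHOIF$ bound up to \twoexptime\ or the $\ALCHIF$ bound up to \nexptime, collapsing the intended separation between the two logics. The lower bounds, by contrast, reduce to citing known hardness once the $\Sigma=\emptyset$ observation is made.
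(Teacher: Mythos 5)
Your proposal is correct and follows essentially the same route as the paper: eliminate closed roles, reduce to mosaic existence via Theorem~\ref{thm:mosaic}, encode mosaics as a singly-exponential enriched inequation system and apply Theorem~\ref{thm:enriched-ineq} (noting that only the nominal condition introduces non-positive inequations, so the $\ALCHIF$ case falls into the polynomial-time fragment), inherit lower bounds from unrestricted satisfiability by taking $\Sigma=\emptyset$, and transfer everything to $\emptiness$ through Proposition~\ref{prop:mixed-reduction}. Your explicit size bound on $\tiles(\Oo)$ and the remark that $\emptiness$ is insensitive to $\Qdet$ are details the paper leaves implicit, but they do not change the argument.
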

\noindent
The lower bounds for the above problems are inherited from general
satisfiability in $\ALCHOIF$~\cite{Tobies2000} and
$\ALCHIF$~\cite{Schild1991}.

\medskip We now turn our attention of the $\DLLITE$ family of DLs,
and, in particular, to the DLs $\DLLITEBOOL$ and $\DLLITEFUNC$, which are
important members of this family without the FMP.  
Their syntactic restrictions open the way to different algorithms
and allow us to obtain further complexity results. 

\begin{theorem} The following are true:
  \begin{enumerate}
    \itemsep0.2em
  \item $\mixedsat(\parTLITEBOOL)$ is \np-complete.
  \item $\mixedsat(\parTLITEFUNC)$ is in \ptime. 
  \item $\emptiness(\parTLITEBOOL,\parCAQ, \parDany,\parFempty)$ \\ is
    \conp-complete. 
  \item  $\emptiness(\parTLITEFUNC,\parCAQ, \parDany,\parFempty)$ \\ is
    \ptime-complete.
  \end{enumerate}
\end{theorem}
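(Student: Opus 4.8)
The plan is to reduce all four statements to mixed (un)satisfiability and then analyse $\mixedsat$ for the two DL-Lite fragments directly. By Proposition~\ref{prop:mixed-reduction}, $\emptiness(\parTL,\parCAQ,\parDany,\parFempty)$ and $\mixedunsat(\Lth)$ are polynomially interreducible: I would use direction~(1) for the upper bounds and direction~(2) for the lower bounds. Note that the value of $\semantics(\varphi,\mcF,\mcI)$, and hence the answer to $\emptiness$, does not depend on $\Qdet$, so the $\parDempty$-instances produced by direction~(2) are equally instances of the $\parDany$-version. Thus it suffices to show that $\mixedsat(\parTLITEBOOL)$ is in \np and $\mixedsat(\parTLITEFUNC)$ is in \ptime, together with the matching hardness; the bounds for $\emptiness$ then follow, with the \conp and \ptime upper bounds from direction~(1) and the \conp and \ptime lower bounds from direction~(2). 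Throughout I would first apply the role-elimination observation from Section~\ref{sec:emptiness} to assume $\Sigma\subseteq\conceptnames$.

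\textbf{Upper bounds.} The key idea is that the generic mosaic characterisation (Theorem~\ref{thm:mosaic}), which produces exponentially many tiles, can be replaced by a \emph{polynomially} sized enriched inequation system in DL-Lite. The restrictions are exactly what enables this: existential inclusions have the form $B_1\ISA\exists r.\top$ and universal inclusions the form $\top\ISA\forall r.B_2$, so the type of an $r$-successor is forced only by the role-inclusion-closed set $U_r=\bigcap\{B:\top\ISA\forall r.B\in\Oo\}$, \emph{independently of the source}, and functional roles have no subroles. Hence a tile need not record the full type of each neighbour, only the element's own type plus the roles on which it sends and receives edges. For $\DLLITEBOOL$ I would combine this with a small-model argument: only polynomially many distinct realised types are needed (one witness per existential inclusion, plus the nominals), so a nondeterministic algorithm can \emph{guess} this polynomial set of types (each a satisfying assignment of the Boolean inclusions) and check feasibility of the resulting polynomial enriched system, whose variables count type multiplicities, whose constraints encode existential demand, the range sets $U_r$, role inclusions, functionality, and the forward/backward balance of functional inverses in the spirit of condition~(\ref{item:one-way-matching}) of Definition~\ref{def:mosaic}, and whose set $F$ marks the $\Sigma$-types that must be finite. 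By Theorem~\ref{thm:enriched-ineq} feasibility is in \np. For $\DLLITEFUNC$ there are no Boolean operators and no nominals, so concept membership propagates deterministically: the relevant types are \emph{computed} rather than guessed, there are no nominal-counting equations, and all inequations are positive; Theorem~\ref{thm:enriched-ineq} then yields solvability in \ptime.

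\textbf{Lower bounds.} Taking $\Sigma=\emptyset$ turns $\mixedsat$ into ordinary satisfiability, and $\DLLITEBOOL$ captures propositional logic: an inclusion $B_1\sqcap\cdots\ISA B_k\sqcup\cdots$ is a clause, and adding $\{d\}\ISA\top$ forces a single witnessing element, so propositional satisfiability reduces to $\mixedsat(\parTLITEBOOL)$. This gives \np-hardness; dually $\mixedunsat(\parTLITEBOOL)$ is \conp-complete, and direction~(2) of Proposition~\ref{prop:mixed-reduction} transfers \conp-hardness to $\emptiness(\parTLITEBOOL,\parCAQ,\parDany,\parFempty)$.

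The remaining and most delicate point is \ptime-hardness of $\emptiness(\parTLITEFUNC,\parCAQ,\parDany,\parFempty)$. Purely terminological reasoning in $\DLLITEFUNC$ is too weak for this, since the binary inclusions $B_1\ISA B_2$ only propagate along an implication graph; the hardness must instead come from the finiteness/functional-inverse interaction. As the example in Section~\ref{sec:emptiness} shows, existentials together with $\func(r^-)$ can force an infinite chain, and whether infinity is forced through a given concept is the outcome of a monotone fixed-point computation. I would therefore reduce a \ptime-complete problem (monotone circuit value, or equivalently path-systems accessibility) to $\mixedunsat(\parTLITEFUNC)$, encoding each gate by a gadget in which an element is forced to generate an infinite functional chain exactly when the gate evaluates to true, so that finiteness of a distinguished $\Sigma$-concept fails iff the output is $1$; direction~(2) then carries the hardness to $\emptiness$. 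The main obstacle is simulating the \emph{conjunctive} behaviour of AND-gates by role functionality and inverses alone, since $\DLLITEFUNC$ forbids left-hand conjunction; the \ptime upper bound above guarantees that such a reduction is tight.
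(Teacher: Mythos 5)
Your overall skeleton---reducing $\emptiness$ to $\mixedunsat$ via Proposition~\ref{prop:mixed-reduction}, eliminating closed roles, and then analysing $\mixedsat$ for the two fragments, with \np- and \conp-hardness inherited from propositional satisfiability---matches the paper. But there are three concrete gaps. First, for the \np upper bound for $\DLLITEBOOL$, the crux is not that ``one witness per existential inclusion plus the nominals'' suffices: the enriched inequation system built from the simplified tiles $(T,R)$ still has exponentially many variables, and the reason one may guess a polynomial-support solution is the integer Carath\'eodory bound of \cite{EISENBRAND2006564} (Lemma~3 of \cite{DBLP:journals/bsl/Pratt-Hartmann08}), applied after observing that there are only polynomially many inequations. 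Your semantic small-model claim is exactly the statement that needs proof, and it is not obvious in the presence of the cardinality-balance constraints for (inverse-)functional roles (conditions (6)--(7) of Definition~\ref{def:dllite-mosaic}); the paper also needs extra guessing steps to remove the conditional constraints of condition (5) and the value $\alnull$ before the bound applies.

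Second, your \ptime upper bound for $\DLLITEFUNC$ takes a genuinely different route from the paper, which does not go through positive inequation systems at all but reduces mixed satisfiability to \emph{ordinary} satisfiability by cycle reversion in the style of \cite{DBLP:conf/esws/Rosati08,DBLP:conf/kr/GarciaLS14}: one computes the set $\Sigma^*$ of conjunctions forced to be finitely realized and reverses all $\Sigma^*$-cycles. Your claim that ``the relevant types are computed rather than guessed'' and are polynomially many is precisely where the difficulty sits: keeping the $\Sigma$-concepts finite forces witnesses to be reused, reuse merges types, and controlling which merged types arise is exactly what cycle reversion accomplishes---it is not delivered by deterministic propagation along the implication graph. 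Third, and most seriously, you do not prove \ptime-hardness in item~(4): you concede that you cannot simulate AND-gates without left-hand conjunction, and ``the upper bound guarantees the reduction is tight'' is not an argument. In the paper this hardness is obtained for the Horn variant with conjunctions on the left-hand side, where it is inherited from ordinary satisfiability (essentially HORNSAT); without such a source of conjunctive behaviour your monotone-circuit-value plan remains an unexecuted programme, so the lower bound of item~(4) is missing from your proof.
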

The upper bounds for $\parTLITEBOOL$ above are obtained by modifying
our reduction to (enriched) integer programming, while the lower
bound is inherited from satisfiability in propositional logic. The
upper bound for $\DLLITEFUNC$ can be obtained by applying the
\emph{cycle reversion} technique~\cite{DBLP:conf/esws/Rosati08}. In
particular, using polynomial time computation mixed satisfiability in
$\DLLITEFUNC$ can be reduced to ordinary satisfiability, which is
known to be tractable.

The above results on the $\emptiness$ problem deal with the case where
$\parFempty$. We can additionally deal with the case $\parFAQ$ in DLs
that allow to freeze the interpretation of concept and role names to
some given extensions, enabling us to reduce the case where
$\Qfix\neq\emptyset$ to the setting where $\Qfix=\emptyset$.  This can
be done in $\ALCHOIF$ as shown in the proof of
Theorem~\ref{thm:recognition}, and the same can be done in
$\DLLITEBOOL$. From this we can infer the following upper bounds.


\begin{theorem} The following hold:
  \begin{enumerate}
  \item $\emptiness(\parTALCHOIF,\parCAQ, \parDany,\parFAQ)$ \\ is in
    $\ptime^{\nexptime}$. \smallskip
  \item $\emptiness(\parTLITEBOOL,\parCAQ, \parDany,\parFAQ)$ \\ is in
    $\ptime^{\np}$.
  \end{enumerate}
  \end{theorem}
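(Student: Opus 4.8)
The plan is to reduce each statement to its fixing-free counterpart, i.e.\ to $\emptiness(\parTALCHOIF,\parCAQ,\parDany,\parFempty)$ and $\emptiness(\parTLITEBOOL,\parCAQ,\parDany,\parFempty)$, which we already know to be $\conexptime$-complete (Theorem~\ref{thm:empt-complexity}) and $\conp$-complete, and to charge the reduction to a polynomial-time computation equipped with a satisfiability oracle for the underlying DL. Note first that, by Definition~\ref{def:focusing} and the definition of $\emptiness$, whether $\semantics(\Oo,\mcF,\mcI)=\emptyset$ holds for every legal $\mcI$ depends only on $\Sigma$, $\Qcwa$ and $\Qfix$: the determined queries are irrelevant, so the parameter $\parDany$ is harmless and can be dropped throughout.

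The core of the reduction is to eliminate the fixed atomic queries exactly as in the proof of Theorem~\ref{thm:recognition}: for each $q\in\Qfix$ I would compute the certain extension $\answer{\Oo,\emptyset}{q}$ and then extend $\Oo$ by axioms that cap the queried predicate to this extension, turning fixing into an ordinary ontology constraint without altering $\semantics(\Oo,\mcF,\mcI)$ for any legal $\mcI$. For a concept query $A(x)$ with $\answer{\Oo,\emptyset}{A(x)}=\{a_1,\dots,a_k\}$ the cap is simply $A\ISA\{a_1\}\sqcup\dots\sqcup\{a_k\}$ (the reverse inclusion holding automatically, as each $a_i$ is a certain answer); this is available in $\ALCHOIF$ and, after naming the nominal disjunction by a fresh concept, also in $\DLLITEBOOL$. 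In $\ALCHOIF$ the role case is handled verbatim by the $\exists r.\lnot\{\dots\}$ gadget of Theorem~\ref{thm:recognition}.

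The delicate point, and the step I expect to be the main obstacle, is capping a role in $\DLLITEBOOL$, where neither $\exists r.C$ with $C\neq\top$ nor guarded universals $B\ISA\forall r.C$ with $B\neq\top$ are available. Here I would first establish a structural fact: since every universal restriction on a role must carry $\top$ on the left, a pair $(a,b)$ can be forced into $r$ only by globally pinning the target (via $\top\ISA\forall r.\{b\}$ plus an existential witness) or, dually, by globally pinning the source (via $\top\ISA\forall r^{-}.\{a\}$). Hence $\answer{\Oo,\emptyset}{r(x,y)}$, when non-empty, is always a product $S\times\{b\}$ or $\{a\}\times T$ over named individuals, which I would verify by a short case analysis ruling out forced non-product extensions. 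Such a product is cappable in $\DLLITEBOOL$: to enforce $r\subseteq S\times\{b\}$ I add $\top\ISA\forall r.\{b\}$ together with $\top\ISA\forall r^{-}.U$, where the fresh concept $U$ is axiomatized to name $\bigsqcup_{a\in S}\{a\}$; the dual case is symmetric, and the (typical) empty extension is just $\top\ISA\forall r.\bot$. I would then check that these inclusions freeze $r$ to exactly its certain extension, so that $\semantics$ is preserved.

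Finally I would do the complexity bookkeeping. Computing each $\answer{\Oo,\emptyset}{q}$ reduces to polynomially many checks over the named individuals of $\Oo$: for $A(x)$ one tests, for each nominal $a$, whether $\Oo\cup\{\{a\}\ISA\lnot A\}$ is unsatisfiable; for $r(x,y)$ one tests, for each pair of nominals, whether $\Oo$ has a model omitting that edge (for $\DLLITEBOOL$ this last test is carried out inside the $\np$ satisfiability procedure by constraining the guessed model, keeping it in $\conp$). Each such check is in $\conexptime$ for $\ALCHOIF$ and in $\conp$ for $\DLLITEBOOL$, so all certain extensions, and hence the preprocessed $\parFempty$ instance, are produced in polynomial time with a $\nexptime$ (resp.\ $\np$) oracle. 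One further oracle call then decides the resulting $\parFempty$ instance via Theorem~\ref{thm:empt-complexity} (resp.\ the $\conp$ bound for $\emptiness(\parTLITEBOOL,\parCAQ,\parDany,\parFempty)$), using $\conexptime\subseteq\ptime^{\nexptime}$ and $\conp\subseteq\ptime^{\np}$. This yields the claimed $\ptime^{\nexptime}$ and $\ptime^{\np}$ upper bounds.
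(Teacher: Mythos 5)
Your overall strategy is the one the paper intends: eliminate the fixed atomic queries by freezing each fixed predicate to its certain extension $\answer{\Oo,\emptyset}{q}$ (computed with polynomially many oracle calls over the named individuals), thereby reducing to the $\parFempty$ case, and then spend one more oracle call on the resulting $\conexptime$- resp.\ $\conp$-complete emptiness instance. The observation that $\Qdet$ is irrelevant, the treatment of $\ALCHOIF$ via the gadget of Theorem~\ref{thm:recognition}, and the oracle bookkeeping are all fine.

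However, the step you yourself single out as delicate --- freezing a role in $\DLLITEBOOL$ --- rests on a structural claim that is false. You argue that $\answer{\Oo,\emptyset}{r(x,y)}$ must be a product $S\times\{b\}$ or $\{a\}\times T$ because universal restrictions are only available with $\top$ on the left. This ignores role inclusions, which $\DLLITEBOOL$ does admit (the $\mathcal{H}$ in $\DLLITEBOOL$; only \emph{functional} roles are barred from having subroles). Take $\Oo=\{r_1\ISA r,\; r_2\ISA r,\; \top\ISA\forall r_1.\{b_1\},\; \top\ISA\forall r_2.\{b_2\},\; \{a_1\}\ISA\exists r_1.\top,\; \{a_2\}\ISA\exists r_2.\top\}$ with $a_1,a_2,b_1,b_2$ pairwise distinct: then $\answer{\Oo,\emptyset}{r(x,y)}=\{(a_1,b_1),(a_2,b_2)\}$, which is not a product. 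Your capping axioms $\top\ISA\forall r.U$ and $\top\ISA\forall r^{-}.V$ can only confine $r$ to a product $V\times U$, which unavoidably also admits the non-certain pair $(a_1,b_2)$; the guarded universal $\{a_1\}\ISA\forall r.\{b_1\}$ that would express the needed constraint is precisely what condition~(ii) of the $\DLLITEBOOL$ definition forbids. So for such inputs your reduction does not preserve $\semantics(\Oo,\mcF,\mcI)$, and the second item of the theorem is not established. Repairing this requires a different idea --- e.g., freezing the relevant subroles individually and arguing about how $r$ itself can still be populated, or handling frozen roles directly inside the mosaic construction for $\DLLITEBOOL$ --- rather than a per-role product cap. (To be fair, the paper itself only asserts that ``the same can be done in $\DLLITEBOOL$'' without detail, so you have correctly located the real difficulty; your proposed resolution just does not close it.)
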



\section{Related Work}\label{sec:related-work}

\subsubsection*{Module Extraction} Focusing as defined in this paper is
related to \emph{module extraction}, which has been studied
extensively in DLs as a tool to facilitate the development and use of
very large ontologies. Various notions of modularity and algorithms
for module extraction have been proposed in the
literature~\cite{Grau:2008:MRO:1622655.1622664,DBLP:journals/ai/KonevL0W13,
  DBLP:journals/jair/RomeroKGH16,Stuckenschmidt:2009:MOC:1571642,Kontchakov20101093, DelVescovo:2011:MSO:2283696.2283770}. Important
technical tools in modularity are various algorithms related to
\emph{conservative extensions}, \emph{inseparability}, \emph{uniform
  interpolation}, and \emph{forgetting}, which have been studied in DLs
but are also classic problems in other areas of logic (see,
e.g.,~\cite{Lutz:2011:FUI:2283516.2283563,DBLP:journals/jair/BaaderBL16,DBLP:journals/ai/BotoevaKRWZ16}). Roughly
speaking, a fragment $\mathcal{O}_1$ of an ontology $\mathcal{O}_2$ is
called a \emph{module} if the terms defined in $\mathcal{O}_1$ have
precisely the meaning they have in the larger ontology
$\mathcal{O}_2$. More precisely, one usually requires that
$\mathcal{O}_1$ and $\mathcal{O}_2$ agree on the entailment of
inclusions over a given signature $\Sigma$. In this way, an
application whose scope is limited to the entities in $\Sigma$ can
safely use $\mathcal{O}_1$ instead of the full $\mathcal{O}_2$. The
difference between the mentioned works and our work is that
focusing---unlike module extraction---will in general change the
meaning of terms. To see this, consider a (rather trivial) disaster
management ontology
$\mathcal{O}=\{\mathsf{Disaster}\,{\equiv}\,\mathsf{Flood} {\sqcup}
\mathsf{Drought}\}$. Suppose we want to focus on floods, and thus we
naturally expect a focusing configuration to lead to the entailment of
$\mathsf{Disaster} \equiv \mathsf{Flood} $.  This can be achieved via
a focusing solution
$\mcF=(\{\mathsf{Flood}\},\{\mathsf{Flood}\},\{\mathsf{Drought}\},\emptyset)$. Since
the desired entailment does not hold in $\Oo$, module extraction
cannot help us in this case. As a method that does change the meaning
of terms, we mention \cite{DBLP:conf/semweb/ChenLMW17}, where the
authors consider the problem of selecting $n$ inclusions from an input
ontology that preserve as much entailments as possible.

\vspace*{-.3em}
\subsubsection*{Closed Predicates} Combining closed-world and
open-world reasoning is a recognized challenge both in database and AI
research, and has received significant attention in the literature.
The use of \emph{closed predicates} is a particular way to overcome
the
challenge~\cite{DBLP:conf/ijcai/LutzSW15,DBLP:conf/ijcai/LutzSW13,DBLP:journals/entcs/FranconiIS11}. Note
that Definition~\ref{def:closed-queries} generalizes the idea of
\emph{closed predicates} in DLs.  Closing extensions of predicates is
similar in spirit to
\emph{circumscription}~\cite{DBLP:journals/ai/McCarthy80}, but instead
of minimizing the inference of new tuples in selected predicates, such
inferences are prohibited altogether. Circumscription has been studied
both for expressive and lightweight
DLs~\cite{DBLP:journals/jair/BonattiLW09,DBLP:journals/jair/BonattiFS11}. Note
that closed predicates, or other kinds of statements to assert
information completeness have also been studied in databases 
(see, e.g.,
\cite{DBLP:conf/lics/BenediktBCP16,Fan:2010:RIC:1862919.1862924,Abiteboul:1998:CAQ:275487.275516}). In
particular, \cite{Fan:2010:RIC:1862919.1862924} studies completeness
assertions made using queries, and explores reasoning about databases and queries in the presence of such~assertions. 

\vspace*{-.3em}
\subsubsection*{Query Emptiness} The nullability problem studied in
Section~\ref{sec:recogn} is closely related to the \emph{query
  emptiness} problem studied in~\cite{DBLP:journals/jair/BaaderBL16},
and the so-called \emph{schema-level positive query implication}
($\exists$PQI) problem studied
in~\cite{DBLP:conf/lics/BenediktBCP16}. For Boolean
queries, the 3 problems effectively collapse (here and in the
mentioned papers, the problems are studied for different
languages). For non-Boolean queries, there is a slight
divergence. Consider the ontology $\Oo=\{\{c\}\ISA \exists r.A\}$, the
instance query $A(x)$, and suppose the signature of legal databases is
$\Sigma=\emptyset$. In the sense of
\cite{DBLP:journals/jair/BaaderBL16, DBLP:conf/lics/BenediktBCP16},
this yields a positive instance, i.e., there is a database (the empty database) in which
the certain answer to $A(x)$ is empty. In our setting, this is a
negative instance of the nullability problem (the extension of $A$
always has an element, still we cannot identify it via a constant). Note
that the undecidability result for nullability with $\ELIbot$
ontologies already holds for simple Boolean CQs of the form
$\exists x\, A(x)$ (see Appendix). This contributes to the study
initiated in~\cite{DBLP:conf/lics/BenediktBCP16}, which showed
undecidability results for disjunctive linear tuple generating
dependencies (TGDs) and linear TGDs with constants. Since $\ELIbot$
can be translated into guarded TGDs without
constants (and using only predicates of arity at most two) 
but extended with constraints, this provides a new class of
constraints for which $\exists$PQI~is~undecidable.

\vspace*{-.3em}
 \subsubsection*{Finite Model Reasoning} Reasoning about finite models of
 DL ontologies has received some (albeit
 limited) attention~\cite{thesis-1996,DBLP:journals/iandc/LutzST05,DBLP:conf/esws/Rosati08,DBLP:conf/kr/GarciaLS14,Rudolph16,GogaczIM18}. Since
  DLs are closely related to the  fragments  $\mathcal{C}^{1}$ and $\mathcal{C}^{2}$ of first-order logic with counting quantifiers, the work by
 Pratt-Hartmann on the complexity of finite and unrestricted
 satisfiability in these logics is particularly
 relevant~\cite{DBLP:journals/jolli/Pratt-Hartmann05,DBLP:journals/bsl/Pratt-Hartmann08}.  The
 inequations we use in Section~\ref{sec:emptiness} are inspired by
 \cite{DBLP:journals/iandc/LutzST05}, but we use some tricks
 from~\cite{DBLP:journals/jolli/Pratt-Hartmann05} (in addition to our
 own) to deal with mixed satisfiability.

\section{Discussion}
\label{sec:discussion}

We have introduced \emph{focusing}, which makes it possible to reuse the
knowledge in an ontology as a basis for the on-demand design of data-centric
applications. 
The selected complexity results we have provided are not meant to paint a
complexity landscape, or to identify the best formalisms to be used for
focusing. Rather, they constitute a preliminary study of the limits and
possibilities of the focusing framework, and many questions
remain to be answered.

We have briefly mentioned as an open issue the design of strategies to
find preferred focusing solutions, which also calls for  suitable
ways to capture preferences of the designer. 
There are also many challenges related to reasoning about the
queries in focusing specifications: e.g.\,to derive additional queries that
may be closed or fixed, or to use the closed and fixed assumptions for query
optimization. 

The specification of queries that are complete or fixed is  a way to add knowledge
to the ontology, and this knowledge could be leveraged in many ways. 
For example, it is well known that CQs with negation are undecidable in the
presence of ontologies~\cite{DBLP:journals/ws/Gutierrez-Basulto15}, but over 
our enriched ontologies, non-trivial queries with negation 
admit algorithms. This applies, e.g., to CQs with negation where each variable that occurs in a negative atom
also occurs  in a positive  atom over a closed predicate. 
Moreover, one could imagine a scenario where one may use queries with
negation to specify the scope of the desired system, and the focusing engine
could take that into account and search for focusing solutions where suitable
predicates are closed as to guarantee decidability of query answering. 

We would also like to understand whether in some cases, fixing predicates has
a significant effect on the complexity of reasoning, for example, if the more
specific ontology that is implicitly built by fixing is in a logic with lower
complexity (this  could happen, for example, if we fixed a predicate from every
disjunction, making the specific part of the ontology effectively Horn). 
Are there ways to decide whether this is the case,  and if so, can this be
effectively leveraged by algorithms?
 



\bibliographystyle{ACM-Reference-Format}


\appendix
\onecolumn

\section{Missing Proofs of Section~\ref{sec:recogn}}

\newcommand{\up}{\mathsf{up}}
\newcommand{\head}{\mathsf{Head}}
\newcommand{\nohead}{\mathsf{NoHead}}

\newcommand{\noheadright}{\mathsf{NoHeadRight}}
\newcommand{\noheadleft}{\mathsf{NoHeadLeft}}

\newcommand{\myright}{\mathsf{right}}
\newcommand{\orig}{\mathsf{Orig}}

We now argue that allowing roles names to be used to specify
completeness queries leads to undecidability of the nullability problem. This holds for the DL
$\ELIbot$, which is a well-known \emph{Horn} DL. 
\begin{theorem}\label{thm:undecidablity}
  $\qempty(\parTELIbot,\parCAQ,\parQIQ)$ is undecidable. Moreover,
  $\qempty(\parTELIbot,\parCAQ,\parQCQ)$ is undecidable even for
  Boolean CQs of the form $\exists x.A(x)$, where $A$ is a concept
  name.
\end{theorem}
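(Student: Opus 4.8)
The plan is to reduce the halting problem for deterministic Turing machines to the complement of $\qempty(\parTELIbot,\parCAQ,\parQIQ)$, encoding a run on an $\mathbb{N}\times\mathbb{N}$ grid whose horizontal role $h$ points to the next tape cell and whose vertical role $v$ points to the next time step. I would use concept names $S_a$ for tape symbols, $Q_s$ for control states (with a distinguished accepting state $Q_{\mathsf{acc}}$), and a concept $\mathsf{Halt}$, and take the query to be $q=\exists x.\,\mathsf{Halt}(x)$; this is already a Boolean CQ of the form $\exists x.A(x)$, so it settles the sharper second claim, and the plain instance query $\mathsf{Halt}(x)$ settles the first. A key design decision for soundness is to keep all content concepts \emph{open} and \emph{outside} $\Sigma$: the legal instances $\mcI$ may then supply only the grid geometry (the $h$- and $v$-edges), never the Turing-machine content, so an adversarial instance cannot simply assert $Q_{\mathsf{acc}}$ to force $\mathsf{Halt}$.

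The content would be generated entirely by the $\ELIbot$ ontology $\Oo$. Existence of neighbours is demanded by $\top\sqsubseteq\exists h.\top$ and $\top\sqsubseteq\exists v.\top$; uniqueness of the content of each cell is enforced by disjointness constraints $S_a\sqcap S_b\sqsubseteq\bot$ and $Q_s\sqcap Q_{s'}\sqsubseteq\bot$; the local transition relation is written, using $\sqcap$, $\exists h$, $\exists h^-$ and $\exists v$, so that a cell together with its two horizontal neighbours at one time level forces the symbol/state of the corresponding cell one $v$-step later; the initial configuration is planted at a closed origin marker; and $Q_{\mathsf{acc}}\sqsubseteq\mathsf{Halt}$ pushes the accepting state into $\mathsf{Halt}$. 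Since $\mathsf{Halt}$ occurs on no right-hand side except this inclusion, $\mathsf{Halt}$ is forced non-empty precisely when the propagated content reaches $Q_{\mathsf{acc}}$.

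The heart of the construction, and the step I expect to be the main obstacle, is forcing every $\mcJ\in\rest(\Oo,\mcI,\Qcwa)$ to carry a \emph{genuine} grid — in particular to make $h$ and $v$ behave functionally and to make the square $h;v$ agree with $v;h$ — even though $\ELIbot$ offers neither functionality, nor number restrictions, nor $\forall$, nor nominals. This is exactly where closed roles do the work: I would place $h,v$ (and any auxiliary companions) into $\Qcwa$, so that $h^\mcJ=h^\mcI$ and $v^\mcJ=v^\mcI$ in every model. The existentials can then only be met by edges already in $\mcI$, so the instance is forced to furnish the whole (necessarily infinite) grid, while consistency prunes the malformed ones. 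The delicate point is that $\ELIbot$ cannot say outright that a node has a unique successor; the strategy is to arrange the transition axioms and $\bot$-constraints (written with inverse roles) so that any $\mcI$ in $\rest$ whose geometry deviates from an honest grid — a missing edge, a doubled edge, or a non-commuting corner — either triggers a disjointness constraint and collapses $\rest$ to $\emptyset$ (making $\mcI$ vacuously not a counterexample), or else leaves enough slack to realise a model with $\mathsf{Halt}^\mcJ=\emptyset$. Getting this case analysis to cover \emph{all} instances is where the real effort lies.

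With grid enforcement in hand the two directions are routine. If the machine halts, I take $\mcI$ to be the grid realising its run; determinism of the machine together with the Horn, functional character of the forced propagation makes the content of every $\mcJ\in\rest$ coincide with the actual computation, so $Q_{\mathsf{acc}}$, and hence $\mathsf{Halt}$, is reached in \emph{every} model, giving $\answer{\mcJ}{q}\neq\emptyset$ throughout — a negative instance of nullability. Conversely, if the machine never halts, then for each legal $\mcI$ with $\rest\neq\emptyset$ the genuine computation never enters $Q_{\mathsf{acc}}$, so I can exhibit a model $\mcJ\in\rest$ with $\mathsf{Halt}^\mcJ=\emptyset$, showing that $q$ can always be nulled and hence that nullability holds. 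As $q=\exists x.\,\mathsf{Halt}(x)$ has the stated shape and $\mathsf{Halt}(x)\in\IQ$, the single construction yields both parts of the theorem.
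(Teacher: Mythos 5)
Your overall strategy---a reduction from Turing-machine halting that lays the run out on a grid whose two roles $h$ and $v$ are closed, with the query reporting the accepting state---is the same one the paper uses. But the proposal has a genuine gap exactly where you flag it: you never supply the mechanism that handles legal instances whose $h$/$v$-geometry is not an honest grid, and in $\ELIbot$ (no functionality, no $\forall$, no negation) there is no way to make such instances inconsistent, so the dichotomy you hope for (``collapse $\rest(\Oo,\mcI,\Qcwa)$ to $\emptyset$, or leave enough slack to null the query'') is asserted rather than established. The paper's key idea is that one does not need to rule malformed geometries out at all; instead the \emph{query itself} is turned into a grid-detector. It adds an \emph{open} auxiliary role $r$ and open concepts $X,Y_1,Y_2,Z$, the inclusions $\top \ISA \exists r.\bigl(X \sqcap (\exists v.\exists h.\, Y_1) \sqcap (\exists h.\exists v.\, Y_2)\bigr)$ and $Y_1\sqcap Y_2\ISA Z$, and takes $q=Z(x)$ (or $\exists x.\,Z(x)$). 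Since $v$ and $h$ are closed, the $X$-witness must land on a node of $\mcI$ that already has both successors; if $\mcI$ is grid-like, every $v$-then-$h$ endpoint from that node coincides with every $h$-then-$v$ endpoint, so $Y_1$ and $Y_2$ are forced to collide and $Z$ is nonempty in every model, whereas if $\mcI$ is not grid-like one can place $X$ at a violating corner and $Y_1,Y_2$ at distinct endpoints, yielding a model with $Z^{\mcJ}=\emptyset$. Combined with the observation that a consistent grid-like instance must encode a terminating run (non-final states force closed successors, so a non-halting computation cannot be completed), this gives both directions. Without some such gadget your argument does not go through.

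A secondary divergence: the paper closes \emph{all} content predicates (states, symbols, head markers), so the instance must supply the entire computation and the consistency requirement $\rest(\Oo,\mcI,\Qcwa)\neq\emptyset$ does the verification; you keep the content open and generate it from the ontology starting at a closed origin marker. That choice creates an extra soundness obligation you do not discharge---you would have to show that no malformed but consistent geometry can spuriously force $Q_{\mathsf{acc}}$ through your propagation axioms---and it buys nothing, since the grid-enforcement gadget is needed either way.
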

\begin{proof}
  We provide a reduction from the \emph{halting problem} for
  (deterministic) \emph{Turing Machines (TMs)}. Assume a TM
  \[M=(\Gamma,Q,q_0,q_{\mathit{yes}},q_{\mathit{no}},\delta)\] with
  the alphabet $\Gamma=\{0,1\}$, the state set $Q$, the initial state
  $q_0\in Q$, the accepting state $q_{\mathit{yes}}\in Q$, the
  rejecting state $q_{\mathit{no}}\in Q$, and the transition function
  $\delta:Q\setminus\{q_{\mathit{yes}},q_{\mathit{no}}\} \times
  \Gamma\cup\{b\} \rightarrow Q\times \Gamma\cup\{b\} \times
  \{+1,-1\}$. The problem is to decide whether $M$ terminates on the
  empty word $\epsilon$; that is, starting from the one-way infinite
  tape that contains in every cell only the blank symbol $b$.  We
  assume that the machine never comes back to the initial
  state $q_0$, and never attempts to move to the left of the initial
  position of the head. We will now construct an instance
  $(\Oo,\Sigma, \Qcwa,q)$ of $\qempty(\parTELIbot,\parCAQ,\parQIQ)$
  such that $(\Oo,\Sigma, \Qcwa,q)$ is a negative instance iff the
  machine $M$ terminates on $\epsilon$. In the construction of $\Oo$,
  we use the following symbols to encode runs of TMs:
\begin{itemize}
\item We use role names $\up$ and $\myright$ to encode a grid on which
  the computation of $M$ can be represented. 
  \item We use the concept name $\head$ to indicate the position of the read-write head of $M$.
  \item We use the concept name $\noheadright$ and $\noheadleft$  to indicate that the
    read-write head is not here or to the right (resp.\,left)  of a given tape cell.

\item We use the concept name $Q_q$ for every state $q\in Q$.
\item We use the concept names $S_{0},S_{1},S_{b}$ to indicate the
  content of a tape cell.
\item We use the concept names $X,Y_1, Y_2, Z$ to perform a trick, which will become clear later.
\item We use an auxiliary role name $r$.
\end{itemize}

Our next goal is to obtain the ontology $\Oo$. Its construction is
split into two parts. In the first part, we simply define inclusions
that simulate the computation of $M$ on a grid given by the roles
$\up$ and $\myright$. The (extension of the) role  $\myright$
corresponds to the tape of $M$, while the (extension of the) role
$\up$ corresponds to time. To make the presentation easier, we allow
infinite instances over $\Qcwa$; modifying the below reduction to fit
precisely the definition of nullability (which requires finite
instances) is not too dificult. The following inclusions of $\Oo$ are
self-explanatory (assuming that $\up$ and $\myright$ form a grid,
which is the tricky part here):   

\begin{enumerate}[(A)]
  \itemsep.8em

\item $\top \ISA  \exists r. (
  Q_{q_0}\sqcap \head)$
\item $Q_q \ISA (\exists \up .\top) \sqcap (\exists \myright .\top)$ for all $q\in Q\setminus\{q_{\mathit{yes}},q_{\mathit{no}}\}  $

\item $ \exists \myright^{-}.\head \ISA \noheadright$

 \item $ \exists \myright.\head \ISA \noheadleft$

 \item $\noheadright \ISA \nohead $

 \item $\noheadleft  \ISA \nohead $

 \item $\nohead \sqcap \head \ISA \bot$  

\item $\exists \myright^-.\noheadright \ISA \noheadright $
  
\item $\exists \myright.\noheadright \ISA \noheadleft $

    
    \item $\exists \myright. Q_q \ISA  Q_q $ for all $q\in Q$

       \item $\exists \myright^-. Q_q \ISA  Q_q $ for all $q\in Q$

       \item $Q_{q_{0}}\ISA S_{b}$

         \item $Q_q\sqcap Q_s\ISA \bot$ for all $q,s\in Q$

    \item $\exists \up^{-}. (S_{\sigma}\sqcap \nohead) \ISA S_{\sigma}$ for all $\sigma\in \{0,1,b\}$

    \item $\exists \up^{-}. (S_{\sigma} \sqcap Q_q \sqcap \head)\ISA S_{\sigma'}\sqcap Q_{q'} $  for all
      $(q,\sigma)\in Q\setminus\{q_{\mathit{yes}},q_{\mathit{no}}\}
      \times \Gamma\cup\{b\} $ such that  $\delta(q,\sigma)=(q',\sigma',D)$

       \item $\exists \myright^-.(\exists \up^{-}. (S_{\sigma} \sqcap Q_q \sqcap \head))\ISA     \head$  for all
      $(q,\sigma)\in Q\setminus\{q_{\mathit{yes}},q_{\mathit{no}}\}
      \times \Gamma\cup\{b\} $ such that  $\delta(q,\sigma)=(q',\sigma',+1)$

       \item $\exists \myright.(\exists \up^{-}. (S_{\sigma} \sqcap Q_q \sqcap \head))\ISA     \head$  for all
      $(q,\sigma)\in Q\setminus\{q_{\mathit{yes}},q_{\mathit{no}}\}
      \times \Gamma\cup\{b\} $ such that  $\delta(q,\sigma)=(q',\sigma',-1)$

      \end{enumerate}

      We let $\Qcwa$ be the set of concepts and roles mentioned in the
      above inclusions, except $r$. Consider an instance $\mcI$ over
      $\Qcwa$. We say $\mcI$ is \emph{grid-like} in case 
       $\{\up (c,d),\myright(d,e)\}\subseteq\mcI$ and $\{\myright(c,d'),\up(d',e')\}\subseteq\mcI$ imply
      $e=e'$. It is easy to see that the above encoding is
      correct on grid-like instances: (a) if $M$ halts, then there is a
      grid-like instance $\mcI$ over $\Qcwa$ such that $\mcI$ satisfies
      every inclusion above, and (b) if there exists a grid-like
      instance $\mcI$ over $\Qcwa$ that satisfies all the above
      inclusions, then $M$ halts. We need a tool to ensure that our instances are grid-like. For this we use further
      inclusions, which will involve predicates that are not part of
      $\Qcwa$. These predicates are the concept names $X,Y_1,Y_2,Z$, and the
      role name $r$. We add the following inclusions~to~$\Oo$:

\begin{enumerate}[(1)]
    \item  $\top \ISA \exists r. (X \sqcap (\exists \up.\exists \myright. Y_1)\sqcap (\exists \myright.\exists \up. Y_2))$

 

\item $ Y_1\sqcap Y_2 \ISA Z$

  \end{enumerate}
  The inclusions above can be explained as follows. We use $X,Y_1,Y_2$
  as ``flags''. The inclusion (1)  says that (i) $X$ needs to be
  arbitrarily placed somewhere in the candidate to be a grid-like
  structure, (ii)  $Y_1$ needs to placed at some up-right neighbor of
  (the placement of) $X$, and  (iii)  $Y_2$ needs to placed at some
  right-up neighbor of (the placement of) $X$.
  The inclusion
  (2) just says that if a location is flagged with $Y_1$ and $Y_2$, then
  also $Z$ is placed there.

  Finally, the query $q$ is $q=Z(x)$, or alternatively
  $q=\exists x.Z(x)$.

  It is not difficult to see that the above is a correct
  reduction. Indeed, if $M$ halts, then we can easily translate a
  terminating run of $M$ into a grid-like instance $\mcI$ over $\Qcwa$
  such that $Z^{\mcJ}\neq \emptyset$ for all models $\mcJ$ of $\Oo$
  that agree with $\mcI$ on the extensions of the predicates in
  $\Qcwa$. 
  For the other direction, assume we have an instance $\mcI$ over
  $\Qcwa$ that forces the extension of $Z$ to be non-empty in all
  relevant models of $\Oo$.  This implies that $\mcI$ is grid-like,
  and thus a terminating run of $M$ can be extracted. To see this a
  bit more formally, suppose that $\mcI$ is not grid-like. Then there
  exists some
  $\{\up (c,d),\myright(d,e),\myright(c,d'),\up(d',e')\}\subseteq\mcI$
  such that $e\neq e'$. Extend $\mcI$ to a model of $\Oo$ by adding
  $X(c),Y_1(e),Y_1(e')$ (and additionally $r(t,c)$ for all constants
  $t$). One arrives at a contradiction: the extension of $Z$ in this instance is empty.
%
%
%
%
%
%
%
%
%
%
%
\end{proof}

Next, we show that more radical restriction of the logic can restore
the decidability of the nullability problem with closed roles.  

\begin{theorem}
$\qempty(\parTLITEOBOOL,\parCAQ,\parQAQ)$ is in  $\conexptime^\np$.
 \end{theorem}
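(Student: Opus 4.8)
The plan is to reduce $\qempty(\parTLITEOBOOL,\parCAQ,\parQAQ)$ in polynomial time to the first problem $\qempty(\parTALCHOI,\parCIQ,\parQAQ)$, which Theorem~\ref{thm:nullability} places in $\conexptime^\np$; this is the reduction promised in the main text. The only feature not already covered by the first problem is that a closed atomic query may now close a \emph{role} name, whereas closed instance queries only close concepts. The structural fact that makes role closure eliminable here is that in $\DLLITEOBOOL$ every occurrence of a role $r$ in the ontology has one of the shapes $B\ISA\exists r.\top$, $\top\ISA\forall r.B$, or their inverses, so whether such an inclusion holds in a model depends only on the \emph{domain} $\{c\mid (c,d)\in r^{\mcJ}\text{ for some }d\}$ and the \emph{range} $\{d\mid (c,d)\in r^{\mcJ}\text{ for some }c\}$ of $r$, never on the exact set of $r$-edges. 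This is precisely what breaks for $\ELIbot$, where existentials $B_1\ISA\exists r.B_2$ with $B_2\neq\top$ make the pairing of edges relevant, and it is ultimately the source of the undecidability in part~\ref{thm:nullability:undec}.

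For each closed role name $r$ I would introduce two fresh concept names $\mathsf{Dom}_r$ and $\mathsf{Ran}_r$, add the $\ALCHOI$ inclusions $\mathsf{Dom}_r\equiv\exists r.\top$ and $\mathsf{Ran}_r\equiv\exists r^-.\top$, keep $r$ in the ontology but \emph{open}, declare $\mathsf{Dom}_r$ and $\mathsf{Ran}_r$ closed in place of $r$ (closed concepts in $\Qcwa$ are carried over unchanged), and replace $r$ by $\mathsf{Dom}_r,\mathsf{Ran}_r$ in the signature of legal instances. Open roles, whether or not they belong to the schema, are left untouched, and I would keep $q'=q$ whenever $q$ is a concept atom or an open-role atom. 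This produces an $\ALCHOI$ ontology $\Oo'$, a closed instance-query set $\Qcwa'$, and a signature $\Sigma'$, all of polynomial size, so the reduction runs in polynomial time and the complexity bound follows from Theorem~\ref{thm:nullability}.

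For correctness I would set up a correspondence between legal instances and their models. A legal target instance whose $\mathsf{Dom}_r$ is non-empty while $\mathsf{Ran}_r$ is empty (or vice versa) is inconsistent, since $\mathsf{Dom}_r\equiv\exists r.\top$ and $\mathsf{Ran}_r\equiv\exists r^-.\top$ force the two to be simultaneously empty or non-empty; such instances have $\rest(\Oo',\mcI',\Qcwa')=\emptyset$ and thus satisfy nullability vacuously. Every other target instance corresponds to an original instance $\mcI$ obtained by installing \emph{any} $r$-relation with the prescribed domain and range, and conversely. I would then exhibit mutually inverse maps $\mcJ\mapsto\mcJ'$ (keep all concept and open-role extensions, set $\mathsf{Dom}_r,\mathsf{Ran}_r$ to the domain and range of $r^{\mcJ}$) and $\mcJ'\mapsto\mcJ$ (replace $r^{\mcJ'}$ by the fixed $r^{\mcI}$, legitimate exactly because the $\DLLITEOBOOL$ role inclusions see only domain and range) between $\rest(\Oo,\mcI,\Qcwa)$ and $\rest(\Oo',\mcI',\Qcwa')$. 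These bijections preserve every concept extension and every open-role extension, hence both $\rest$-nonemptiness and the existence of a witness with empty $q$-answer transfer in both directions, giving the two problems the same answer.

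The crux to get right is this role-elimination equivalence, and in particular the need for \emph{all} the defining restrictions of $\DLLITEOBOOL$: the filler $\top$ in existentials and the tautological left-hand side in value restrictions (so that role axioms factor through domain and range), together with the absence of role hierarchies and of functionality (which would otherwise couple distinct roles or distinct edges). The only leftover is the degenerate query $q=r(x,y)$ whose own role $r$ is closed; then $\answer{\mcJ}{q}=r^{\mcI}$ is already fixed by the instance, so nullability reduces to testing whether some consistent legal instance carries an $r$-edge --- a single satisfiability check that is comfortably within the $\conexptime^\np$ budget and can be handled separately.
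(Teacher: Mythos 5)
Your proposal is correct and matches the paper's own proof in all essentials: the paper likewise reduces to the concept-only nullability problem for an $\ALCOI$ ontology by introducing, for each closed role $r$, fresh closed concept names axiomatized as $A_r\equiv\exists r.\top$ and $A_{r^-}\equiv\exists r^-.\top$, and then transferring witnesses back and forth using the fact that $\DLLITEOBOOL$ constrains roles only through their domains and ranges. The only (immaterial) differences are bookkeeping: the paper keeps the closed roles in the target signature and argues that one may assume w.l.o.g.\ that no new $r$-edges appear, whereas you drop $r$ from the signature and argue that the edge set can be re-chosen freely subject to the prescribed domain and range.
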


 \begin{proof}
We give a polynomial reduction to
$\qempty(\parTALCOI,\parCIQ,\parQAQ)$. Consider a $\DLLITEOBOOL$
ontology $\Oo$, a signature $\Sigma \subseteq\conceptnames(\Oo)
\cup\rolenames(\Oo)$, a set of queries $\Qcwa \subseteq \AQ$, and a
query $q \in \AQ$. Let $\Sigma_{\mn{R}}$ be the set of roles $r$ such
that $r\in\Qcwa$ or $r^-\in \Qcwa$.  For each $r\in \Sigma_{\mn{R}}$,
introduce a fresh concept name $A_r$ axiomatized as $A_r \equiv
\exists r. \top$; that is, $A_r$ is intended to contain origins of all
$r$-edges. We also include axiom $ A_r \ISA \exists r. A_{r^-}$, which
says that each origin of $r$ has an origin of $r^-$ among its
$r$-successors. The resulting ontology $\Oo'$ is not in
$\DLLITEOBOOL$, but it is in $\ALCOI$. Let $\Qcwa'$ be obtained by
removing from $\Qcwa$ all role queries, and adding all queries of the
form $A_r(x)$ for $r\in \Sigma_{\mn{R}}$. Let $\Sigma' = \Sigma \cup
\{A_r \mid r\in \Sigma\cap\Sigma_{\mn{R}}\}$.
We claim that $q$ is
nullable wrt.~$\Oo$, $\Sigma$, and $\Qcwa$ iff it is nullable
wrt.~$\Oo'$, $\Sigma'$, and $\Qcwa'$.

For any instance $\Mm$ over the signature of $\Oo$, let $\widehat \Mm$
be obtained by extending $\Mm$ minimally so that
$(A_r)^{\widehat{\Mm}} = (\exists r.\top)^{\Mm}$. Note that if
$\Mm\models \Oo$, then $\widehat{\Mm} \models \Oo'$.

Suppose there is $\Ii$ over $\Sigma$ such that $\rest(\Ii,\Oo,\Qcwa)\neq \emptyset$
and $\answer{\Jj}{q} \neq \emptyset$ for all $\Jj \in
\rest(\Ii,\Oo,\Qcwa)$.  We claim that $\Ii' =\widehat{\Ii}$ is a
counter-witness for the nullability of $q$ wrt. $\Oo'$, $\Sigma'$, and
$\Qcwa'$. If $\Jj \in \rest(\Ii,\Oo,\Qcwa)$ then $\widehat{\Jj} \in
\rest(\Ii',\Oo',\Qcwa')$, so $\rest(\Ii',\Oo',\Qcwa')\neq\emptyset$. Suppose $\answer{\Mm'}{q}=\emptyset$ for
some $\Mm' \in \rest(\Ii',\Oo',\Qcwa')$. Because each $A_r$ is closed,
only elements that had an outgoing $r$-edge in $\Ii$ will have new
outgoing $r$-edges in $\Mm'$. Hence, after removing all new $r$-edges,
$\Mm'$ is still a model of $\Oo'$. Consequently, we can assume
wlog.~that there are no new $r$-edges in $\Mm'$ for all $r\in
\Sigma_{\mn{R}}$; that is, $r^{\Mm'} = r^{\Ii}$ for all $r\in
\Sigma_{\mn{R}}$. Let $\Mm$ be the restriction of $\Mm'$ to the
signature of $\Oo$. Clearly $\Mm \in \rest(\Oo, \Ii, \Qcwa)$ and
$\answer{\Mm}{q}=\emptyset$.

Conversely, suppose there is $\Ii'$ over $\Sigma'$ such that $\rest(\Oo', \Ii',\Qcwa')\neq
\emptyset$ and $\answer{\Jj'}{q} \neq \emptyset$ for all $\Jj' \in
\rest(\Oo', \Ii',\Qcwa')$. Take some $\Jj' \in \rest(\Oo',
\Ii',\Qcwa')$. Let $\Ii''$ be obtained from $\Ii'$ by including all
$r$-edges present in $\Jj'$ for all $r\in\Sigma_{\mn{R}}\cap \Sigma'$. Note that
\[\Jj' \in \rest(\Oo', \Ii'',\Qcwa') \subseteq \rest(\Oo',
\Ii',\Qcwa')\,.\] Hence, wlog. we can assume that all
these edges are already present in $\Jj'$; that is, $r^{\Ii'} =
r^{\Jj'}$ for all $r\in\Sigma_{\mn{R}}\cap \Sigma'$. Let $\Ii$ and
$\Jj$ be the restrictions to the signature of 
$\Oo$ of the instances $\Ii'$ and $\Jj'$. It holds that
$\Jj\in\rest(\Oo,\Ii,\Qcwa)$. Suppose that there exists $\Mm\in
\rest(\Oo,\Ii,\Qcwa)$ with $\answer{\Mm}{q} = \emptyset$. Let $\Mm'
= \widehat{\Mm}$. We have $\Mm'\in \rest(\Oo',\Ii',\Qcwa')$ and
$\answer{\Mm'}{q} = \emptyset$. 
%
\end{proof}







\section{Missing Proofs of Section~\ref{sec:entailment}}


\begin{proof}[Proof of Lemma~\ref{lem:unravelling}]
The second condition trivially implies the first one. For the converse
implication, assume there is some counter model $\Jj$ for $q$.  Using
a standard unraveling technique, we will inductively construct from
$\Jj$ a tree extension $\Tt$ of $\Ii$ together with a homomorphism $g$
from $\Tt$ to $\Jj$, and show that $\Tt \in \rest(\dlkb,\Ii,\Qcwa)$.
Let $\Tt_0$ be the subinstance of $\Jj$ induced by $\adom{\Ii}$ and
let $g_0$ be the identity homomorphism from $\Tt_0$ to $\Jj$.  Assume
$\Tt_i$ and $g_i$ are defined. To define $\Tt_{i+1}$ and $g_{i+1}$,
for each element $d \in \adom{\Tt_i}$ and each axiom $K \sqsubseteq
\exists r. B$ that is not yet satisfied in $d$, consider an element $e
\in B^{\Jj}$ such that $(g_i(d), e) \in r^\Jj$.  If $e \in\adom{\Ii}$
then $\Tt_{i+1}$ extends $\Tt_i$ by adding an $s$-edge from $d$ to $e$
for every super role $s$ of $r$.  Otherwise, we add a fresh copy $e'$
of $e$ and an $s$-edge from $d$ to $e'$, for every super role $s$ of
$r$, and extend $g_i$ to $g_{i+1}$ by setting $g_{i+1}(e')=e$.
Clearly, $\Tt$ is a tree-extension of $\Ii$ such that $\Tt \models
\dlkb$. Because $\Ii \subseteq \Tt$ and $\Tt$ maps homomorphically to
$\Jj$ we have $\answer{\Ii}{q'} \subseteq \answer{\Tt}{q'} \subseteq \answer{\Jj}{q'}$, for every
$q' \in \Qcwa$. Because $\answer{\Jj}{q'} = \answer{\Ii}{q'}$, it
follows that $\answer{\Tt}{q'} = \answer{\Ii}{q'}$. An analogous
argument proves that $\Tt \nvDash q$.
\end{proof}

\section*{Missing Proofs of Section~\ref{sec:emptiness}}

\begin{proof}[Proof of Proposition~\ref{prop:mixed-reduction}]
  Assume $(\varphi,\mcF)\in\recognition(\parTL, \parCAQ, \parDany)$,
  where $\mcF= (\Sigma,\Qcwa,\Qdet,\emptyset)$. Then $(\varphi,\mcF)$
  is a positive instance of $\emptiness(\parTL,\parCAQ,\parDany)$ iff
  $(\varphi,\Qcwa)$ is a positive instance of  $\mixedunsat(\parTL)$.

  Assume an instance $(\varphi,\Sigma)$ of
  $\mixedunsat(\parTL)$. Then
  $(\varphi,(\Sigma,\Sigma,\emptyset,\emptyset))$ is trivially a
  positive instance of $\recognition(\parTL,
  \parCAQ)$, and further
  $(\varphi,\Sigma)$ is a positive instance of
  $\mixedunsat(\parTL)$ iff
  $(\varphi,(\Sigma,\Sigma,\emptyset,\emptyset))$ is a positive
  instance of $\emptiness(\parTL, \parCAQ)$.
\end{proof}

\begin{proof}[Proof of Theorem~\ref{thm:mosaic}]
  \emph{(from (1) to (2))} Assume a model $\mcI$ of $\dlkb$ where all
  the concept names in $\Sigma$ have a finite extension. We can assume
  that $\adom{\Ii}$ is finite or countably infinite. We define a
  mosaic $N$ for $(\dlkb,\Sigma)$.  For an element
  $e\in\adom{\Ii}$, let
  \[\mathit{ut}(e)=\{\top\}\cup \{B\in\simpleconcepts(\dlkb)\mid e\in
  B^{\mcI} \}\,.\] For a pair of elements $e,e'\in\adom{\Ii}$, let
  \[\mathit{bt}(e,e')=\{r\in \roles(\dlkb) \mid (e,e')\in
  r^{\mcI}\}\,.\] Assume an element $e\in \adom{\Ii}$. Since $\mcI$ is
  a model of $\dlkb$, for all $\alpha = A\ISA \exists r. B\in \dlkb$
  with $A\in \mathit{ut}(e)$, there exists an element $e_{\alpha}$
  such that $(e,e_{\alpha})\in r^{\mcI}$ and $e_{\alpha}\in
  B^{\mcI}$. Let $e_{\alpha_1},\ldots,e_{\alpha_n}$ be the set such
  elements for each axiom $\alpha_i$ of the form $ A\ISA \exists r. B$
  occurring in $\dlkb$. Let $F$ be the set of elements
  $e'\in \adom{\Ii}$ such that $(e,e')\in r^{\mcI}$ for some
  $\func(r)\in \dlkb$. Let
  $S=\{e_{\alpha_1},\ldots,e_{\alpha_n}\}\cup F$. We next define the
  tile extracted for the element $e$ as follows:
  \[ \mathit{tile}(e)=\big
    (\mathit{ut}(e),\{(\mathit{bt}(e,e'),\mathit{ut}(e'))\mid e'\in
    S\}\big ).\] Note that for a given element $e$, there can be
  several ways to extract a tile $\mathit{tile}(e)$, because
  $e_{\alpha_1},\ldots,e_{\alpha_n}$ can be chosen in multiple
  ways. It is not difficult to check that $\mathit{tile}(e)$ is indeed
  a proper tile. Let $N $ be the function such that, for each
  $\tau \in \tiles(\dlkb)$,
  \[N(\tau)=|\{e\in \adom{\Ii}\mid \tau = \mathit{tile}(e) \}|\,.\] Note
  that $N(\tau)=\alnull$ in case $\{e\in \adom{\Ii}\mid \tau =
  \mathit{tile}(e) \} $ is infinite. It is not difficult to check that
  $N$ is a proper mosaic. 
  
  \smallskip
  
  \emph{(from (2) to (1))} Assume there exists a mosaic $N$ for
  $(\dlkb,\Sigma)$. We show how to construct a model $\mcI$ of $\dlkb$
  such that all predicates from $\Sigma$ have a finite extension. 
Let $\Delta = \big\{ (\tau,i)\in \tiles(\dlkb)\times\mathbb{N} \mid 0  
  < i\leq N(\tau) \big\}.$
For each $(\tau,i) \in \Delta$ we take a designated constant, which in a
slight abuse of notation,  we denote simply as $(\tau,i)$.  

  For simple concepts
  $B\in \simpleconcepts(\dlkb)\setminus\{\top,\bot\}$, we let
  $B^{\mcI}=\{((T,\rho),i)\in\Delta\mid B\in T\}$. It now
  remains to define the extensions of roles. First, we deal with roles
  $r$ such that both $\func(r)\in \dlkb$ and $\func(r^-)\in \dlkb$,
  and for this we employ the inequation (\ref{item:one-way-matching})
  in two directions. Intuitively, the role $r$ needs to connect an
  element $e $ with some ``unary type'' $T$ to some element $e'$ with
  some unary type $T'$ via some ``binary type'' $R$ that contains
  $r$. For each such triple $T,T',R$, due to the equation
  (\ref{item:one-way-matching}), there is a bijection $f$ between the
  sets $A$ and $B$, where
  \[A=\big\{((T,\rho),i)\in \Delta\mid (R,T')\in \rho \big\}
    \qquad \text{and} \qquad 
  B=\big\{((T',\rho'),i)\in \Delta\mid (R^{-},T)\in \rho' \big\}\,.\] For
  every $e\in A$, let $(e,f(e))\in s^{\mcI}$ iff $s\in R$. Now we
  consider each role $r$ such that $\func(r^{-})\in \dlkb$ but
  $\func(r)\not\in \dlkb$.  Again, the role $r$ connects an element
  $e$ with some unary type $T$ to some element $e'$ with some unary
  type $T'$ via some binary type $R$ that contains $r$. For each such
  triple $T,T',R$, due to the equation (\ref{item:one-way-matching}),
  there is an injective function $f$ from the set $A$ to the set $B$, where
  \[A=\big\{((T,\rho),i)\in \Delta\mid (R,T')\in \rho \big\}
    \qquad \text{and}  \qquad
    B=\big\{((T',\rho'),i)\in \Delta\mid (R^{-},T)\in \rho' \big\}\,.\]
  For every $e\in A$, let $(e,f(e))\in s^{\mcI}$ iff $s\in R$. Remains
  to define the remaining role extensions.  
  In particular, pick a pair $T,T'\in \mathsf{Types}(\dlkb)$ and a set
  $R\subseteq \roles(\dlkb)$ that does not contain any functional or
  inverse functional roles of $\dlkb$. Let
  \[A=\big\{((T,\rho),i)\in \Delta\mid (R,T')\in \rho \big\}\,.\] If
  $A=\emptyset$, then do nothing. If $A\neq\emptyset$, then we know by
  condition (\ref{item:successor}) that this is some $\rho'$ such that
  $e'=((T',\rho),1)\in \Delta$. For every $e\in A$, we let
  $(e,e')\in s^{\mcI}$ iff $s\in R$.
  This finishes the construction of $\mcI$. It is not too difficult to
  check that $\mcI$ is indeed a model of $\dlkb $ such that all
  predicates from $\Sigma$ have a finite extension.
\end{proof}

  \begin{proof}[Proof of Theorem~\ref{thm:enriched-ineq}]
  Given the above reduction, in order to prove the \nexptime upper
  bound, it remains to argue that checking the existence of a solution
  $S$ to any $(V,\mathcal{E},F,I)$ is feasible in non-deterministic
  polynomial time. For the \exptime upper bound, it suffices to show
  that the existence of $S$ can be decided in polynomial time in case
  $\mathcal{E}$ contains only positive inequations. Consider a system
  \[\mathcal{H}^*=(V^*,\mathcal{E}^*,V^*,I^*)\,.\] Checking the existence
  of a solution to $\mathcal{H}^*$ is solvable in non-deterministic
  polynomial time, because in non-deterministic polynomial time we can
  build an ordinary inequation system
  $(V^*,\mathcal{E}^*\cup\hat{\mathcal{E}})$ such that $\mathcal{H}^*$
  has a solution iff $(V^*,\mathcal{E}^*\cup\hat{\mathcal{E}})$ has a
  solution. The inequations $\hat{\mathcal{E}}$ are obtained from the
  implications $y_1+\cdots +y_m>0 \Rightarrow x_1+\cdots + x_n>0$ in
  $I^*$ by non-deterministically adding one of the inequations
  $y_1+\cdots +y_m = 0$, $x_1 > 0,\ldots, x_n > 0 $ to
  $\hat{\mathcal{E}}$. If $\mathcal{E}^*$ is a set of positive
  inequations, then checking the existence of a solution to
  $\mathcal{H}^*$ is feasible in polynomial time.

  \medskip We next argue that a given $(V,\mathcal{E},F,I)$ can be
  transformed in polynomial time into an enriched inequation system
  $(V',\mathcal{E}',V',I')$, while preserving the existence 
  of a solution. The resulting system still contains implications, but
  its solutions must be over $\nat$. We let $V'$ be obtained by adding
  to $V$ a fresh variable $x^{\infty}$ for every $x\in V$. The
  inequations in $\mathcal{E}'$ are obtained from the inequations in
  $\mathcal{E}$ by replacing every summand $b_i\cdot y_i$ that occurs
  on the right-hand-side of ``$\leq $'' of an inequation by the
  summand $b_i\cdot y_i + y_i^{\infty}$.  Second, for every $x\in F$,
  we add $x^{\infty}=0$. For example, an inequation
  $2\cdot x \leq 2\cdot y+z$ is replaced by
  $2\cdot x \leq 2\cdot y + y^{\infty}+z + z^{\infty}$.  The set $I'$
  is obtained from $I$ in two steps. First, we replace in every
  implication every summand $b_i\cdot y_i$ by the summand
  $b_i\cdot y_i + y_i^{\infty}$. Second, we add to $I'$ the
  implication
\begin{equation}
      x_1^{\infty}+\cdots+ x_n^{\infty} >0  \Rightarrow  y_1^{\infty}+\cdots+y_m^{\infty} >0
  \end{equation}
  for every inequation of the form \eqref{eq:linear} in
  $\mathcal{E}$. It is not difficult to see that a solution $S$ to
  $\mathcal{H}$ can transformed into a solution $S'$ for
  $\mathcal{H}'$. Let $S$ be a solution to $\mathcal{H}$, and let
  $B = 1 + c + \sum_{x\in F} c \cdot S(x)$, where $c$ is the maximal
  value among the constants that appear in $\mathcal{E}$. We let $S'$ be defined as follows:
   \begin{equation}
  S'(x) = \begin{dcases}
    S(x), & x\in V \land   S(x) \in \nat \\
    0, & x\in V \land   S(x)=\alnull \\
     0, & x=y^{\infty} \land S(y)\in \nat \\
     B, & x=y^{\infty} \land S(y) =\alnull
  \end{dcases}
\end{equation}
It is easy to see that $S'$ is a solution to $\mathcal{H'}$. For the
 other direction, assume $S'$ is an arbitrary solution to
 $\mathcal{H'}$. We define $S:V'\rightarrow \nat^{*}$ as follows:
   \begin{equation}
  S(x) = \begin{dcases}
    S'(x), &   S'(x^{\infty}) = 0 \\
    \alnull, & S'(x^{\infty}) > 0 
  \end{dcases}
\end{equation}
It is not difficult to see that that $S$ is a solution to $\mathcal{H}$.
      \end{proof}

\subsection*{Mixed Satisfiability in $\DLLITEBOOL$}

Let us assume a $\DLLITEBOOL$ ontology $\dlkb$, and let $\Sigma$ be a set of
concept names we want to keep finite. We
   use $\ISA^{*}_{\dlkb}$ to denote the reflexive transitive closure of
   $\{(r,s),(r^-,s^-)\mid r\ISA s\in\dlkb\}$. 
A set $T$ of
 simple concepts is an \emph{$r$-sink} (w.r.t.~$\dlkb$) if $A\in T$
for all $\top \ISA \forall s. A \in \dlkb$ such that
$r\ISA_{\dlkb}^{*} s$.  A \emph{tile} for $\dlkb$ is a pair
$\tau=(T,R)$, where $T$ is a set of simple concepts, and
$R\subseteq \roles(\dlkb)
$
such that
\begin{enumerate}
  \itemsep0.6em
  \item for all  $B_1\sqcap B_2 \ISA B_3\sqcup B_4 \in \dlkb$,  
     $ \{B_1,B_2\}\subseteq T$ implies 
     $ \{B_3,B_4\}\cap T\neq\emptyset$;
  
 \item for all $A\ISA \exists r. \top \in \dlkb$, $A\in T$ implies  $r\in R$ 
   \item  $T$ is an $r$-sink for all $r^{-}\in R$
\end{enumerate}
We use $\tiles(\dlkb)$ to denote the set of all tiles for $\dlkb$. A
tile $(T,R)$ is an $r$-sink, if $T$ is an $r$-sink. We let $\mathrm{root}((T,R))=T $. We now want to use
mosaics to check the existence of a model for $\dlkb$ that keeps finite
the extensions of the predicates in $\Sigma$.

\begin{definition}\label{def:dllite-mosaic}
A \emph{mosaic} for
$(\dlkb,\Sigma)$ is a function
$N:\tiles(\dlkb)\rightarrow \mathbb{N}\cup \{\alnull\} $ satisfying the
following:
\begin{enumerate}
  \itemsep0.6em
   \item There is a tile $\tau\in \tiles(\dlkb)$ such that $N(\tau)>0$.

     \item For every nominal $\{o\}$ that appears in $\dlkb$, there is a tile $(T,R)$ such that  $\{o\}\in T$ and  $N((T,R))>0$.

   \item For every pair of tiles $\tau_1,\tau_2\in \tiles(\dlkb)$ and every nominal $\{o\}$ that appears in $\dlkb$, $\{o\}\in \mathrm{root}(\tau_1)\cap  \mathrm{root}(\tau_2)$,  $N(\tau_1)>0$ and $N(\tau_2)>0$ imply $\tau_1=\tau_2$ and $N(\tau_1)=1$.

   \item For every $A\in \Sigma$ and every tile $\tau\in \tiles(\dlkb)$ with $A\in \mathrm{root}(\tau) $, we have $N(\tau)\neq \alnull$.

 \item For all tiles $\tau= (T,R) $ and all $r\in R$, if $N(\tau)> 0$,
   then there exists a tile $\tau'$ such that $N(\tau')>0$ and $\tau'$
   is an $r$-sink.

    \item For every $r$ with $\func(r),\func(r^-)\in \dlkb$, we have
    \[\sum_{\substack{(T,R)\in \tiles(\dlkb) \,\land \\
                  r\in R}}N((T,R)) \quad =\quad  \sum_{\substack{(T,R)\in \tiles(\dlkb)\,\land \\
                  r^{-}\in R} }N((T,R)) \;.
    \]

   \item For every $r$ with $\func(r^-)\in \dlkb$ but $\func(r)\not\in \dlkb$, we have
    \[\sum_{\substack{(T,R)\in \tiles(\dlkb)\,\land \\
                  r\in R}}N((T,R)) \quad \leq \quad \sum_{\substack{(T,R)\in \tiles(\dlkb)\,\land \\
                  T  \mbox{ is an $r$-sink }} }N((T,R)) \;.
    \]

    \end{enumerate}
  \end{definition}

  Mosaics as above correctly characterize mixed satisfiability in $\DLLITEBOOL$: 
  \begin{theorem}
  Let $\dlkb$ be a $\DLLITEBOOL$ ontology and let $\Sigma \subseteq
  \conceptnames(\Oo) $. Then the following are equivalent:
  \begin{enumerate}[(A)]
  \item $\dlkb$ has a model $\mcI$ such   $A^{\mcI}$ is finite for each 
    $A\in \Sigma$.
    \medskip
  \item There exists a mosaic $N$ for $(\dlkb,\Sigma)$ as in
    Definition~\ref{def:dllite-mosaic}.
 \end{enumerate}
\end{theorem}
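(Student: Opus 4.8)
The plan is to prove the two implications separately, following closely the tile/mosaic argument already carried out for $\ALCHOIF$ in the proof of Theorem~\ref{thm:mosaic}, but exploiting the $\DLLITE$ restrictions to cut down the bookkeeping. Recall that here a tile is merely a pair $(T,R)$ recording the simple concepts $T$ realized at an element together with the set $R$ of roles along which it has an outgoing edge; unlike in $\ALCHOIF$, there is no need to record the types of the neighbours inside $R$, because in $\DLLITEBOOL$ existentials have the shape $A\ISA\exists r.\top$ and universals the shape $\top\ISA\forall s.B$, so the only demand on a target of an $r$-edge is the purely local $r$-sink condition. This is exactly what Definition~\ref{def:dllite-mosaic} is designed to track.

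For the direction $(A)\Rightarrow(B)$ I would start from a model $\mcI$ of $\dlkb$ in which every $A\in\Sigma$ has a finite extension and read off, for each $e\in\adom{\mcI}$, the tile $\mathit{tile}(e)=(\mathit{ut}(e),R(e))$, where $\mathit{ut}(e)=\{\top\}\cup\{B\in\simpleconcepts(\dlkb)\mid e\in B^{\mcI}\}$ and $R(e)=\{r\in\roles(\dlkb)\mid (e,e')\in r^{\mcI}\text{ for some }e'\}$. That each $\mathit{tile}(e)$ is a genuine tile follows from $\mcI\models\dlkb$: the disjunction condition holds because $\mcI$ validates the Boolean inclusions; $r\in R(e)$ whenever an existential $A\ISA\exists r.\top$ fires at $e$; and if $r^{-}\in R(e)$ then $e$ is the target of an $r$-edge, so $e\in B^{\mcI}$ for every $\top\ISA\forall s.B$ with $r\ISA^{*}_{\dlkb}s$, i.e.\ $\mathit{ut}(e)$ is an $r$-sink. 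Setting $N(\tau)=|\{e\in\adom{\mcI}\mid \mathit{tile}(e)=\tau\}|\in\natstar$, I then verify the mosaic conditions: nonemptiness and the nominal conditions are handled exactly as in Theorem~\ref{thm:mosaic} (each nominal is realized by a unique constant under the \emph{SNA}); condition~(4) is immediate since any tile carrying a concept of $\Sigma$ is realized by only finitely many elements; and the functionality conditions~(6)--(7) follow by counting the edges of $r^{\mcI}$, which $\func(r),\func(r^{-})$ turn into a partial bijection between the $r$-carrying and $r^{-}$-carrying elements (giving the equality), and which $\func(r^{-})$ alone turns into a partial injection of the $r$-carrying elements into the $r$-sinks (giving the inequality).

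For the converse $(B)\Rightarrow(A)$ I would build the domain $\Delta=\{(\tau,i)\mid 0<i\le N(\tau)\}$, one constant per copy of each tile, and set $B^{\mcI}=\{((T,R),i)\in\Delta\mid B\in T\}$ for every simple concept $B$. The real work is defining the roles. For $r$ with $\func(r),\func(r^{-})\in\dlkb$, the equality in condition~(6) supplies a bijection between the copies carrying $r$ and those carrying $r^{-}$, and I install matching $r$- and $r^{-}$-edges along it. For $r$ with only $\func(r^{-})\in\dlkb$, the inequality in condition~(7) supplies an injection of the $r$-carrying copies into the $r$-sinks, fixing those edges. For every remaining role I route each source to one fixed positive $r$-sink copy, whose existence is guaranteed by condition~(5). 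I then check that the resulting $\mcI$ is a model: the Boolean inclusions hold by tile condition~(1); the existentials by tile condition~(2) together with~(5); the universals because every $r$-edge target was chosen to be an $r$-sink; and functionality by the (in)jectivity of the assignments. Finiteness of each $A\in\Sigma$ is precisely condition~(4), and conditions~(2)--(3) make each nominal the extension of a single constant.

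I expect the main obstacle to be this role-definition step in $(B)\Rightarrow(A)$: the three regimes (doubly functional, singly inverse-functional, and unconstrained roles), together with their inverses, must be glued so that every universal and functionality assertion is met and every required $r$-edge reaches an $r$-sink, without inadvertently creating an extra edge that violates some $\func$ assertion. The equality and inequality of conditions~(6)--(7) are exactly what make the needed bijections and injections available, so the crux is to organise the construction role-by-role while keeping the $r$- and $r^{-}$-extensions mutually consistent; a secondary subtlety, treated uniformly as in Theorem~\ref{thm:mosaic}, is the handling of nominals under the SNA.
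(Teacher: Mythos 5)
Your proposal is correct and follows essentially the same route as the paper, whose own proof of this theorem consists of the single remark that it is analogous to the proof of Theorem~\ref{thm:mosaic}; your tile-extraction argument for $(A)\Rightarrow(B)$ and your three-regime edge construction (bijection from condition~(6), injection from condition~(7), routing to a fixed positive $r$-sink via condition~(5)) for $(B)\Rightarrow(A)$ are exactly the adaptation of that proof that the paper has in mind. In fact you supply considerably more detail than the paper does.
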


\begin{proof} The proof is analogous to the proof of Theorem~\ref{thm:mosaic}.
\end{proof}

The argument for the following theorem is very similar to the argument
for Theorem~\ref{thm:empt-complexity}, where a \nexptime upper bound
was shown. The key difference here is that the inequation systems that
result from Definition~\ref{def:dllite-mosaic} have only polynomially
many inequations in the size of the input ontology. The only challenge
to obtain an \np upper bound is the fact that individual inequations might
have exponentially many variables. However, using the result
in~\cite{DBLP:journals/bsl/Pratt-Hartmann08} (which is in essence due to~\cite{EISENBRAND2006564}), we know that
we can concentrate on solutions in which only polynomially many
variables take non-zero values. 

\begin{theorem}
  It is  an \np-complete problem to decide, given $(\dlkb,\Sigma)$ with $\dlkb$ in $\DLLITEBOOL$,
  whether $\dlkb$ has a model $\mcI$ such $A^{\mcI}$ is finite for all
  predicates $A\in \Sigma$.
\end{theorem}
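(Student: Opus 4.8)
The plan is to prove the two bounds separately. For \np-hardness I would inherit the lower bound from propositional satisfiability, exactly as already asserted for $\mixedsat(\parTLITEBOOL)$ in the main text: an arbitrary Boolean formula is expressible as a single $\DLLITEBOOL$ concept inclusion built from $\sqcap$ and $\sqcup$, and taking $\Sigma=\emptyset$ (so that finiteness imposes no constraint and mixed satisfiability collapses to ordinary satisfiability) makes satisfiability of the formula equivalent to the problem at hand.

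For the \np upper bound I would start from the mosaic characterization of the preceding theorem: $\dlkb$ has a model in which every $A\in\Sigma$ has a finite extension iff there exists a mosaic for $(\dlkb,\Sigma)$ in the sense of Definition~\ref{def:dllite-mosaic}. As in the $\ALCHOIF$ case, I would reduce the existence of a mosaic to the existence of a solution of an enriched inequation system $(V,\mathcal{E},F,I)$, with one variable $x_\tau$ per tile $\tau\in\tiles(\dlkb)$, translating the (in)equations of Definition~\ref{def:dllite-mosaic} into $\mathcal{E}$, the successor condition into the implications $I$, and the finiteness condition into $F$. The decisive difference from $\ALCHOIF$ is that here the number of (in)equations and implications is only \emph{polynomial} in $|\dlkb|$: conditions (1)--(3) contribute one constraint per nominal, condition (5) aggregates to a single implication $\sum_{(T,R):\,r\in R}x_{(T,R)}>0 \Rightarrow \sum_{\tau'\ \text{an}\ r\text{-sink}}x_{\tau'}>0$ per role $r$ (the consequent depends only on $r$), and conditions (6)--(7) give one (in)equation per functional role. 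The number of variables, however, is exponential, since tiles range over subsets of $\simpleconcepts(\dlkb)$.

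The main obstacle is precisely this exponential number of variables: I cannot write the system down, nor apply the generic \np procedure of Theorem~\ref{thm:enriched-ineq} to it directly. The remedy exploits that there are only polynomially many constraints. I would first eliminate the $\natstar$-valued unknowns using the $x^{\infty}$ transformation from the proof of Theorem~\ref{thm:enriched-ineq}, obtaining an equivalent system over $\nat$ with implications, still with polynomially many (in)equations and implications and only twice as many variables. Given any solution $S$ of this system, I would fix one witness variable in the consequent of each implication that $S$ makes active, and pass to the \emph{pure} inequation system obtained by replacing each implication either by the single inequation asserting its fixed witness to be positive or by the inequation forcing its premise to vanish. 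This pure system has polynomially many inequations with polynomial-bit coefficients, and $S$ is one of its solutions, so the Carath\'eodory-type bound for integer cones of Eisenbrand and Shmonin~\cite{EISENBRAND2006564}, in the form used by Pratt-Hartmann~\cite{DBLP:journals/bsl/Pratt-Hartmann08}, yields a solution $S'$ of the pure system --- and hence of the implication system --- in which only polynomially many variables are nonzero and all values have polynomial bit-length.

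This delivers the \np algorithm: nondeterministically guess a polynomially large set of tiles (each a polynomial-size pair $(T,R)$ with $T\subseteq\simpleconcepts(\dlkb)$ and $R\subseteq\roles(\dlkb)$), assign each a value of polynomial bit-length or the marker $\alnull$, and set all remaining variables to $0$. It then suffices to verify in polynomial time that each guessed pair is a legal tile, that no tile carrying a concept of $\Sigma$ is assigned $\alnull$, and that every one of the polynomially many conditions of Definition~\ref{def:dllite-mosaic} holds, each sum being taken only over the guessed tiles. Soundness and completeness of this guess follow from the chain model $\leftrightarrow$ mosaic $\leftrightarrow$ solution together with the small-support bound established above, giving membership in \np; combined with the hardness argument this proves \np-completeness.
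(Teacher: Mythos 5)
Your proposal is correct and follows essentially the same route as the paper: both reduce mosaic existence to an integer inequation system with only polynomially many constraints but exponentially many variables, resolve the conditional constraints and the $\alnull$ values by polynomially many nondeterministic choices, and then invoke the Eisenbrand--Shmonin bound (via Lemma~3 of Pratt-Hartmann) to restrict attention to solutions supported on polynomially many tiles, which can be guessed and verified in polynomial time. The only cosmetic difference is that the paper directly guesses the set $G$ of roles admitting a sink tile and a small set $X$ of $\alnull$-valued variables, whereas you aggregate the implications per role and reuse the $x^{\infty}$ doubling transformation from Theorem~\ref{thm:enriched-ineq}; both devices achieve the same linearization.
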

\begin{proof}
  To obtain the
  upper-bound, we use a procedure that non-deterministically generates
  a polynomially sized integer inequation system, which then can be
  checked for the existence of a solution in non-deterministic
  polynomial time.  The generation of the inequations has 3 steps,
  which overcome 3 challenges.

  \begin{enumerate}[(A)]
  \item This step deals with the conditional inequalities in point (5)
    of Definition~\ref{def:dllite-mosaic}. The ``conditionals'' can be
    eliminated by guessing a set $G$ of precisely the roles for which
    a sink tile will exists. Given a guess $G$, we take the following inequations:
    \begin{enumerate}[-]
    \item for all $r\in G$, \[\sum_{\substack{\tau \in
            \tiles(\dlkb)\,\land \\
            \tau \mbox{ is an $r$-sink }}} x_{\tau} > 0\]
    \item for all $r\not\in G$,
      \[\sum_{\substack{\tau =(T,R)\in \tiles(\dlkb)\,\land \\
            r\in R}} x_{\tau} =0
      \]
    \end{enumerate}

  \item The above already leads us to an almost ordinary set
    $\mathcal{E}$ of integer inequations. We need to deal with the
    special value $\alnull$. We argue that by using a small guess we
    can obtain an inequation system over pure integers. Note that we
    have a small number of equations: we have at most
    $n=3\times 2\times |\rolenames(\dlkb)|$ inequations (we count that
    the equation from point (7) as two inequations). We guess a set
    $X$ of variables from the equations such that $|X|\leq n$ (we
    in fact need less than that), and for each $x_\tau\in X$ we have
    that the root of $\tau$ does not have a predicate from $\Sigma$.
    Intuitively, we assume that the variables from $X$ will have the
    value $\alnull$. Based the guessed $X$, we can reduce our
    inequation system.  If a variable from $X$ occurs on the
    right-hand-side of an inequation, then we can drop the inequation.
    If a variable from $X$ occurs on the left-hand-side of an
    inequation, but there is no variable in $X$ that occurs on the
    right-hand-side, then we know that the guess of $X$ was wrong
    (there exist no mosaic under such guess of $X$). We have that the
    original system is satisfiable with the special value $\alnull$ iff
    there exists $X$ as above such that the reduced system has a plain
    integer solution.

  \item After guessing the small $G$ and $X$, we can concentrate on an
    ordinary inequation system, but the problem is that it has
    exponentially many variables. However, due to Lemma 3
    in~\cite{DBLP:journals/bsl/Pratt-Hartmann08}, the inequation
    system has a solution iff it has a solution where only a small
    number (polynomial in the number of inequations) of variables have
    non-zero values. That means we can guess those variables, and
    consider an inequation system over those variables only, which is
    small as desired.
  \end{enumerate}

  The lower bound is inherited  from propositional logic. 
\end{proof}

\subsection*{Mixed Satisfiability in Horn DLs with functionality}
By generalizing the cycle reversion technique for finite model reasoning in Horn description logics~\cite{DBLP:conf/esws/Rosati08,DBLP:conf/kr/GarciaLS14}, we will reduce the problem of mixed satisfiability
to (unrestricted) satisfiability. This yields a series of results for Horn DLs with functionality: $\DLLITE_{\mathit{core}}^{\mathcal{F}}$, 
$\DLLITE_{\mathit{Horn}}^{\mathcal{F}}$, and Horn-$\ALCIF$, subsuming $\ELIbot$ with functionality. 
Since Horn-$\ALCIF$ subsumes all the above mentioned DLs, we will present the approach assuming a Horn-$\ALCIF$
ontology $\Oo$.

\smallskip
We start by recalling how the cycle reversion technique works in the case of finite model reasoning in Horn-$\ALCIF$.

\begin{definition}
A \emph{finmod cycle in $\Oo$} is a sequence 
$K_1,r_1, \dots,r_{n-1}, K_n$,
with  $n \geq 1$,  $K_1, \dots, K_n$ conjunctions of concept names, and  $r_0, \dots ,r_n$ roles  such that:
\begin{itemize}
 \item[-] $K_1 = K_n$,
 \item[-] $ K_i \sqsubseteq_\Oo \exists r_i. K_{i+1}$, for  $1 \leq i < n-1$, and 
 \item [-] $K_{i+1} \sqsubseteq_\Oo (\leq 1 \ r^-_{i} \ A)$, for some $A \in K_i$, 
\end{itemize}
where $K \sqsubseteq_\Oo C$ is a short-hand for $\Oo \models K \sqsubseteq C$, for some concept $C$. 

\smallskip \noindent Further, we say that a finmod cycle $K_1,r_1, \dots,r_{n-1}, K_n$ in $\Oo$ is \emph{reversed} if 
$\Oo \models K_{i+1} \sqsubseteq \exists r^-_i. K_i$ and $\Oo \models K_i \sqsubseteq (\leq 1 \ r_{i} \ K_{i+1})$.
 \end{definition}
%
%

We then have the announced  result.

\begin{theorem}[\cite{DBLP:conf/kr/GarciaLS14}]\label{th:finsat}
 Let $\Oo$ be a Horn-$\ALCIF$ ontology. An ontology $\widehat{\Oo} \supseteq \Oo$ can be constructed such that 
 \begin{itemize}
 \item the size of $\widehat \Oo$ is bounded by  $\mn{poly}(|\Oo|)$, 
  \item all finmod cycles  in $\Oo$ are reversed in $\widehat\Oo$,  and  
  \item 
 $\Oo$ is finitely satisfiable iff   $\widehat\Oo$ is satisfiable. 
 \end{itemize}
\end{theorem}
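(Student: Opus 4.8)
The plan is to construct $\widehat\Oo$ by a \emph{saturation} process and then prove that, once all finmod cycles are reversed, finite and unrestricted satisfiability coincide. First I would set $\widehat\Oo_0 = \Oo$ and repeatedly search for a finmod cycle $K_1, r_1, \dots, r_{n-1}, K_n$ in the current ontology that is not yet reversed; for each edge of such a cycle I add the two axioms $K_{i+1} \sqsubseteq \exists r_i^-. K_i$ and $K_i \sqsubseteq (\leq 1\ r_i\ K_{i+1})$, and I continue until no unreversed finmod cycle remains. The fixpoint of this process is $\widehat\Oo$, and by construction $\widehat\Oo \supseteq \Oo$ with all finmod cycles reversed.

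For the polynomial size bound, the key observation is that the conjunctions $K_i$ occurring in a finmod cycle are not arbitrary: each $K_{i+1}$ must be the filler of an entailed existential $K_i \sqsubseteq_\Oo \exists r_i. K_{i+1}$, and in a Horn DL these fillers range over a polynomially bounded set of \emph{relevant} conjunctions determined by the existential restrictions syntactically present in $\Oo$. I would represent each relevant conjunction $K$ by a fresh concept name $A_K$ axiomatized to be equivalent to $K$, so that the added axioms only ever mention polynomially many concepts and roles. Since each added axiom has a fixed shape over this polynomial vocabulary, only polynomially many distinct axioms can arise, which simultaneously guarantees termination and bounds $|\widehat\Oo|$ by $\mn{poly}(|\Oo|)$.

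For correctness I would route the biconditional through $\widehat\Oo$ in two steps. The first, easier step is that every axiom added during saturation is a \emph{finite-model} consequence of $\Oo$: in any finite model realizing $K_1$, following a finmod cycle of functional-inverse existentials and counting along the chain forces, by a pigeonhole argument, both the back-existential and the forward functionality to hold. Hence $\Oo$ and $\widehat\Oo$ have exactly the same finite models, so $\Oo$ is finitely satisfiable iff $\widehat\Oo$ is. It then remains to establish the core lemma: \emph{if all finmod cycles of an ontology are reversed, then satisfiability implies finite satisfiability.} Here I would build the canonical forest-shaped model of $\widehat\Oo$ by chasing its existential restrictions, and then fold it into a finite structure. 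Because every cycle is reversed, each infinite existential chain is invertible — the reversing axioms make the relevant $r_i$-edges functional in both directions along the cycle — so periodic points can be identified without violating any functionality, inverse-role, or existential constraint. The resulting quotient is a finite model of $\widehat\Oo$, hence of $\Oo$.

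The main obstacle I anticipate is twofold. First, the polynomial bound: naively, conjunctions of concept names are exponentially many, so the argument that only polynomially many \emph{relevant} fillers arise, and that saturation therefore stabilizes after polynomially many steps, is the delicate combinatorial heart of the construction and must be justified carefully from the Horn structure. Second, the finite-model construction in the core lemma requires verifying that the identification of periodic points respects all interacting functionality, inverse-role, and existential axioms at once; checking that reversed cycles genuinely permit a consistent wrap-around, and that the fold introduces no spurious violation, is where the bulk of the technical work lies.
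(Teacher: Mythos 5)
You should first be aware that the paper does not prove this statement at all: it is imported from \cite{DBLP:conf/kr/GarciaLS14} and used as a black box (the appendix merely recalls the definitions and reproduces the consequence-driven calculus of that reference in Figure~\ref{fig:rules}). So there is no in-paper proof to compare against; your proposal is a from-scratch reconstruction of the cited cycle-reversion technique. Its architecture is the right one: saturate by adding the reversing axioms; observe that each added axiom is a finite-model consequence of $\Oo$ --- your pigeonhole argument is exactly the standard soundness argument, since along a cycle the injections induced by the inverse at-most restrictions force $|K_1^\Ii|\leq\cdots\leq|K_n^\Ii|=|K_1^\Ii|$, hence bijections, hence both the back-existentials and the forward functionality --- so that $\Oo$ and $\widehat\Oo$ have the same finite models; and reduce the biconditional to the core lemma that a satisfiable ontology with all its finmod cycles reversed is finitely satisfiable. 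This decomposition is correct.

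The two places you flag as ``delicate'' are, however, genuine gaps rather than routine verifications, and they are precisely where the cited paper does its work. First, the polynomial bound: your claim that the fillers of entailed existentials range over polynomially many ``relevant'' conjunctions is asserted, not argued, and the device of introducing fresh names $A_K\equiv K$ presupposes the very bound it is meant to establish. Since the cycle conditions are phrased via entailment $\sqsubseteq_\Oo$ rather than syntax, a priori exponentially many conjunctions of concept names qualify. What actually makes the count polynomial --- and what the paper's own later definitions quietly do (cf.\ Definition~\ref{def:SigmaStar}, which restricts to ``conjunctions occurring in $\Oo$'') --- is fixing in advance the polynomially many conjunctions syntactically present in $\Oo$ as the only admissible cycle nodes; then there are polynomially many edges $(K,r,K')$, hence polynomially many reversal axioms, and the saturation terminates for free, even though \emph{deciding} which edges exist costs an $\sqsubseteq_\Oo$ test (exponential time) per edge, so the output is small while the computation is not. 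You need to make this restriction explicit; without it your termination and size argument does not go through. Second, the core lemma (reversed $+$ satisfiable $\Rightarrow$ finitely satisfiable) is the bulk of \cite{DBLP:conf/kr/GarciaLS14}: one must show that every infinite branch of the canonical model eventually repeats a type, that the reversed axioms make the wrap-around consistent with all functionality and inverse-role constraints simultaneously, and that the quotient violates no at-most restriction elsewhere. You correctly identify this as where the technical work lies, but you do not supply it, so as it stands the proposal is a sound plan with its two hardest steps left open.
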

So, in a nutshell, $\widehat \Oo$ above is obtained by adding axioms to $\Oo$ that ensure that all the finmod cycles in $\Oo$ are reversed. 



%

\medskip 
Now, we proceed to extend this result for deciding mixed satisfiability. We start by introducing some additional notions. Let $\Sigma \subseteq \mn{N_C}(\Oo)$, we say that  a model $\Ii$ of $\Oo$ is a \emph{$\Sigma$-finite model of $\Oo$} if $A^\Ii$ is finite for every $A \in \Sigma$,  and that a type $\tau$ is \emph{finitely realized in $\Ii$} if the set of elements in $\Ii$ realizing $\tau$ is finite.   The first step will then be  to compute the set of all types that are finitely realized in some model $\Sigma$-finite model. We will show that to compute this set is enough to compute a set of conjunctions that determine such types. Then, as a second step,  by reversing all cycles in $\Oo$ in which those ``partially specified'' types participate, we will obtain the desired  reduction from mixed satisfiability to satisfiability. 

In what follows, we will deliberately confuse conjunctions of concept names and sets of concept
names in what follows. 

\begin{definition}\label{def:SigmaStar}
Let $\Sigma^*$ be the smallest set of conjunctions occurring in $\Oo$ such that: 
\begin{itemize}
 \item $K \in \Sigma^*$, for every $K \sqsubseteq_\Oo A \in \Oo$ such that $A \in \Sigma$;
 \item $K \in \Sigma^*$, for every $K' \in \Sigma^*$ such that 
 $K \sqsubseteq_\Oo \exists r. K'$ and  
 $A \sqsubseteq_\Oo (\leq 1 \ r^- \ K )$. 
 
\end{itemize}

\end{definition}

The following lemma establishes that $\Sigma^*$ precisely determines all the types that have to be finitely realized in a 
$\Sigma$-finite model of $\Oo$.
\begin{lemma}
Let $\tau$ be a realizable $\Oo$-type $\tau$. $\tau$ is finitely realized in  every model 
 $\Sigma$ finite model $\Ii$ of $\Oo$  iff $K \subseteq \tau$ for some $K \in \Sigma^*$. 
 \end{lemma}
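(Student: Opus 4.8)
The plan is to prove the two implications separately, relying on the forest-model property of Horn-$\ALCIF$ and on an injectivity argument that turns the backward-functionality clauses of Definition~\ref{def:SigmaStar} into cardinality inequalities between sets of realizers.

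For the ``if'' direction I would argue by induction on the construction of $\Sigma^*$ that for every $K\in\Sigma^*$ and every $\Sigma$-finite model $\Ii$ of $\Oo$ the set of elements $x$ with $K\subseteq\mn{tp}_\Ii(x)$ is finite; the claim for $\tau$ then follows, since its realizers form a subset of the realizers of any $K\subseteq\tau$. In the base case $K\sqsubseteq_\Oo A$ with $A\in\Sigma$, so every realizer of $K$ lies in $A^\Ii$, which is finite by assumption. In the inductive case $K$ enters $\Sigma^*$ through some $K'\in\Sigma^*$ with $K\sqsubseteq_\Oo\exists r.K'$ and $K'\sqsubseteq_\Oo({\le}\,1\,r^-.A)$ for some $A\in K$; here each $K$-realizer $x$ picks an $r$-successor $y_x$ realizing $K'$, and the backward-functionality clause forces $x\mapsto y_x$ to be injective, since two distinct $K$-realizers pointing to the same $y$ would give $y$ two $r$-predecessors of type $A$ (note $A\in K$, so every $K$-realizer is an $A$-realizer). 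Hence the number of $K$-realizers is bounded by the number of $K'$-realizers, which is finite by the induction hypothesis.

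For the ``only if'' direction I would prove the contrapositive: assuming $K\not\subseteq\tau$ for all $K\in\Sigma^*$, I construct a single $\Sigma$-finite model of $\Oo$ in which $\tau$ is realized infinitely often. Starting from a $\Sigma$-finite model of $\Oo$ (available in the $\Sigma$-satisfiable case), I would isolate a finite \emph{core} carrying all realizers of $\Sigma^*$-types, and then attach infinitely many fresh, tree-shaped realizers of $\tau$, routing every existential demand $\exists r.K'$ with $K'\in\Sigma^*$ into the shared core rather than to a private successor. The point is that, since $\tau$ contains no member of $\Sigma^*$, the closure rule of Definition~\ref{def:SigmaStar} was never triggered along $\tau$, so for each such demand we have $K'\not\sqsubseteq_\Oo({\le}\,1\,r^-.A)$ for every $A\in\tau$; this is exactly what lets arbitrarily many $\tau$-realizers share one core element without breaking functionality. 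Propagating this observation down each fresh tree (where every pumpable type met is again outside $\Sigma^*$) keeps the $\Sigma$-predicates confined to the finite core while producing infinitely many copies of $\tau$.

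The main obstacle will be the second direction, and within it the verification that the assembled structure is a genuine model of $\Oo$: universal restrictions and role inclusions hold because every element still realizes a bona fide $\Oo$-type and edges only ever connect type-compatible pairs, but the functionality assertions are delicate, since they are the only axioms that could be falsified by redirecting many edges into the shared core. I expect the crux to be a clean lemma stating that a $({\le}\,1\,r^-.A)$ constraint is violated by this redirection only if it would have forced the source type into $\Sigma^*$ — i.e.\ that the closure defining $\Sigma^*$ is precisely calibrated to make the construction safe — after which the infinite realizability of $\tau$ in a $\Sigma$-finite model is immediate.
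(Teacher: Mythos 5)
Your proposal is correct and follows essentially the same route as the paper: the forward direction via the finiteness/injectivity consequences of the two $\Sigma^*$ closure rules (which the paper compresses into a single sentence, so your induction is actually the more explicit of the two), and the converse by contrapositive, building a $\Sigma$-finite model with infinitely many $\tau$-realizers through an unraveling that keeps a finite core containing all realizers of $\Sigma^*$-conjunctions and reroutes existential demands into that core. The point you flag as the crux --- that redirecting many edges into the shared core cannot violate a $(\leq 1\ r^-\ A)$ constraint unless the source conjunction would already have been pulled into $\Sigma^*$ --- is precisely the step the paper also leaves at the level of ``readily checked,'' so your sketch is no less complete than the paper's own proof.
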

 \begin{proof}
  The  direction $(\Leftarrow)$ follows from the definition of $\Sigma^*$ and the semantics of Horn-$\ALCIF$. 
  For the direction $(\Rightarrow)$, we show the contrapositive. So, we  assume that there is no $K \subseteq \tau$ as in the statement of the lemma. Let $\Ii$ be a $\Sigma$-finite model such that $\tau$ is finitely realized in $\Ii$. Then, we can construct a $\Sigma$-finite model 
  $\Jj$ from $\Ii$  such that $\tau$ is not finitely realized by creating countably many copies $d_1, d_2,\ldots $ of some fixed $d \in \adom{\Ii}$ with $\tp_\Ii(d)= \tau$. The construction of $\Jj$ relies on an special kind of unraveling, inductively defined as follows. The initial instance $\Jj_0$ contains a single 'distinguished' element $d_0$ such that $\tp_{\Jj_0}(d_0) = \tau$ and a copy of every element $f$ in $\Ii$ such that 
  $K \subseteq \tp_\Ii(f)$ for some $K \in \Sigma^*$ with  their unary types set as in $\Ii$. 
  We will denote this set of copies as $\Delta_{F}$. For the inductive step, start by setting $\Jj_{i+1}= \Jj_{i}$
  and then extend $\Jj_{i+1}$ as follows.  Include a  fresh copy $d_{i+1}$ of $d$ in $\Jj_{i+1}$, and for every element $e$ in $\Jj_{i}$ such that its type contains some $K$, 
$K \sqsubseteq \exists r. A \in \Oo$ and $e \notin (\exists r. A)^{\Jj_i}$ proceed as follows. Let $e'$ be an element in $\Ii$ witnessing this requirement (such $e'$ exists since $\Ii$ is a model of $\Oo$). 
 Then, if there is a copy of $e'$ in $\Delta_F$,  add an $r$-edge from $e$ to that copy.
  Otherwise,  add a fresh copy of $e'$ to $\Jj_{i+1}$ and an $r$-edge from $e$ to that copy. 

It can be readily  checked that the interpretation obtained in the limit is a model of $\Oo$. To see that it is 
a $\Sigma$-finite model,  it suffices to observe that (1)~$\Delta_F$ is finite, and (2)~ 
the types of the fresh witnesses (which are copied from the original $\Sigma$-finite model $\Ii$) added in the inductive steps do not contain any concept from $\Sigma$, as otherwise they would be in $\Delta_F$.
 \end{proof}
Before describing the generalized cycle reversion, we make a note on the complexity of computing $\Sigma^*$
in all the DLs considered.  
 \begin{proposition} $\Sigma^*$ can be computed
\begin{itemize}  
 \item in NLogSpace for $\DLLITE_{\mathit{core}}^{\mathcal{F}}$,
 \item in PTime for $\DLLITE_{\mathit{Horn}}^{\mathcal{F}}$,
 \item in ExpTime for Horn-$\ALCIF$.
\end{itemize}
 \end{proposition}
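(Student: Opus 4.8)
The plan is to read Definition~\ref{def:SigmaStar} as a least-fixpoint computation over the conjunctions that occur syntactically in $\Oo$, and to show that in all three logics the cost is dominated by the entailment tests $\sqsubseteq_\Oo$ that the two closure rules invoke. Write $\mathcal{K}$ for the set of conjunctions occurring in $\Oo$; since $|\mathcal{K}|\le|\Oo|$, the target $\Sigma^*$ lives inside a polynomially large universe. The base rule asks, for a candidate $K$, whether $\Oo\models K\sqsubseteq A$ for some $A\in\Sigma$, and the step rule asks whether some $K'\in\Sigma^*$ together with a role $r$ satisfies $\Oo\models K\sqsubseteq\exists r.K'$ and the at-most entailment required by the second bullet of Definition~\ref{def:SigmaStar}. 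These are entailments of concept inclusions whose right-hand sides are, respectively, a concept name, a qualified existential, and a qualified at-most restriction. I would first record that each reduces in the standard way to (polynomially many) subsumption/instance checks in the ambient DL: in $\DLLITE_{\mathit{core}}^{\mathcal{F}}$ the qualified existential $\Oo\models K\sqsubseteq\exists r.K'$ unfolds into reachability in the inclusion graph (the anonymous $r$-successor is forced into each conjunct of $K'$), and similarly for the other two forms. Hence these tests fall within the known complexity of reasoning, namely \nlogspace for $\DLLITE_{\mathit{core}}^{\mathcal{F}}$, \ptime for $\DLLITE_{\mathit{Horn}}^{\mathcal{F}}$, and \exptime for Horn-$\ALCIF$~\cite{DLNewBook}.

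For $\DLLITE_{\mathit{Horn}}^{\mathcal{F}}$ and Horn-$\ALCIF$ the stated bounds then follow from naive fixpoint evaluation. Initialise $\Sigma^*:=\emptyset$ and repeatedly apply the base and step rules until saturation. Because $|\mathcal{K}|\le|\Oo|$, saturation is reached after at most $|\Oo|$ rounds, each performing $\mathrm{poly}(|\Oo|)$ entailment tests; with a \ptime (resp.\ \exptime) decision procedure for the tests this yields a \ptime (resp.\ \exptime) algorithm overall.

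For $\DLLITE_{\mathit{core}}^{\mathcal{F}}$ the same recipe fails to fit in logarithmic space, since it explicitly stores the growing set $\Sigma^*$. The plan is to replace the fixpoint by a reachability query. Build a directed graph $G$ with vertex set $\mathcal{K}$, mark $K$ as a \emph{sink} when $\Oo\models K\sqsubseteq A$ for some $A\in\Sigma$, and place an edge $K\to K'$ exactly when the conditions of the step rule hold for the pair $(K,K')$ and some role $r$. A routine induction on the construction of $\Sigma^*$ shows that $K\in\Sigma^*$ iff some sink is reachable from $K$ in $G$. Reachability is in \nlogspace; the sink and edge predicates are each \nlogspace-decidable, being finite conjunctions and bounded existential quantifications (over $r$ and the relevant concept name) of \nlogspace entailment tests, and \nlogspace is closed under these operations. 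Since \nlogspace is closed under complementation (Immerman--Szelepcs\'enyi), these predicates may be invoked as subroutines inside the nondeterministic reachability search without leaving \nlogspace; guessing the path vertex by vertex and recomputing each edge on the fly therefore decides membership $K\in\Sigma^*$ in \nlogspace.

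The \ptime and \exptime cases are routine, so the main obstacle is precisely the last point: recasting an inherently transitive-closure (least-fixpoint) computation as a logarithmic-space decision. The resolution rests on two facts I would spell out carefully. First, the single-step dependency of the step rule is local enough to be recorded as a graph edge whose presence is an \nlogspace-decidable property of the pair $(K,K')$, so that membership in $\Sigma^*$ collapses to reachability in $G$. Second, one must verify that the qualified-existential and at-most entailments on the right-hand sides genuinely remain in \nlogspace for $\DLLITE_{\mathit{core}}^{\mathcal{F}}$ (each reducing to reachability in the usual inclusion graph); once both are established, the reachability reformulation closes the argument.
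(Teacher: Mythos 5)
Your proposal is correct and follows essentially the same route as the paper, whose own proof is a one-line remark that the bounds follow from the definition of $\Sigma^*$ together with the known complexity of deciding $\sqsubseteq_\Oo$ in each logic. You simply spell out the details the paper leaves implicit, most usefully the reformulation of the least fixpoint as graph reachability (with \nlogspace-decidable edge and sink predicates, using closure of \nlogspace under complement for the subroutine calls) needed to make the $\DLLITE_{\mathit{core}}^{\mathcal{F}}$ case fit in logarithmic space.
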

\begin{proof}
 The complexity bounds follow from the definition of $\Sigma^*$ and the known complexity of deciding
 $\sqsubseteq_\Oo$ for each of the DLs. 
 \end{proof}

With the computation of $\Sigma^*$ in place, we can now define what we will call \emph{$\Sigma^*$-cycle reversion}. Firs, we need  a straightforward generalization of finmod cycles. 
\begin{definition}\label{def:sigmaCycle}
Let $\Oo$ be a Horn-$\ALCIF$ ontology,   
and  $\Sigma^*$ a set of conjunctions occurring in $\Oo$. A $\Sigma^*$-cycle in $\Oo$ is a sequence
$$K_1,r_1,\dots,K_{n-1}, r_n, K_n \quad \text{such that}$$ 
\begin{itemize}
 \item[-] $K_1 \in \Sigma^*$, 
 \item [-] $K_1 = K_n$, 
 \item[-] $K_i \sqsubseteq_{\Oo} \exists r_i. K_{i+1}$, for  $1 \leq i < n-1$, and 
 \item [-] $K_{i+1} \sqsubseteq_{\Oo} (\leq 1 \ r^-_{i} \ A)$, for some $A \in K_i$. 
\end{itemize}
\end{definition}

For our purpose then it will be enough to reverse all $\Sigma^*$-cycles in $\Oo$ with $\Sigma^*$ as in Definition \ref{def:SigmaStar}. As for finmod cycles in Horn-\ALCIF, we can achieve this using a consequence-driven calculus~\cite{DBLP:conf/kr/GarciaLS14} (see figure~\ref{fig:rules} for reference), by adding a precondition in rule $\mathbf{R9}$ that $K_1 \in \Sigma^*$. The extended  ontology $\widehat \Oo$ in which all $\Sigma^*$-cycles are reversed can be obtained after an exponential number of rule applications.

The following lemma provides the reduction for Horn-$\ALCIF$. 
\begin{lemma}\label{lemma:reduc}
 Let $\widehat \Oo \supseteq \Oo$ be an ontology in which all $\Sigma^*$-cycles in $\Oo$ are reversed . Then 
 $\widehat \Oo$  is satisfiable iff $\Oo$ has a $\Sigma$-finite model. 
\end{lemma}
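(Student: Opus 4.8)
The plan is to prove Lemma~\ref{lemma:reduc} as the $\Sigma$-finite analogue of the classical cycle-reversion result Theorem~\ref{th:finsat}, handling the two implications separately. The direction ``$\Oo$ has a $\Sigma$-finite model $\Rightarrow$ $\widehat\Oo$ is satisfiable'' expresses the soundness of reversal and is the easier one; the converse carries the model-construction burden and I expect it to be the crux.

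For soundness, let $\Ii$ be a $\Sigma$-finite model of $\Oo$; I would show $\Ii \models \widehat\Oo$, whence $\widehat\Oo$ is satisfiable. Since $\widehat\Oo \setminus \Oo$ consists exactly of the axioms added when reversing $\Sigma^*$-cycles, and $\Ii$ already satisfies $\Oo$, it suffices to check that each reversed axiom holds in $\Ii$; following the calculus, the inductive step is a single cycle. Fix a $\Sigma^*$-cycle $K_1, r_1, \dots, r_{n-1}, K_n$ with $K_1 = K_n \in \Sigma^*$ and set $S_i = \{d \in \adom{\Ii} : K_i \subseteq \tp_\Ii(d)\}$. Applying the backward-closure clause of Definition~\ref{def:SigmaStar} along the cycle, starting from $K_n \in \Sigma^*$, gives $K_i \in \Sigma^*$ for every $i$, so by the lemma characterising finitely realized types (stated right after Definition~\ref{def:SigmaStar}) each $S_i$ is finite. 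The existential inclusions $K_i \sqsubseteq_\Oo \exists r_i. K_{i+1}$ supply a total map $f_i\colon S_i \to S_{i+1}$, and the number restriction $K_{i+1} \sqsubseteq (\leq 1\ r_i^-\ A)$ with $A \in K_i$ (so $S_i \subseteq A^\Ii$) forces each $f_i$ to be injective. As $S_n = S_1$ and all sets are finite, the chain $|S_1| \leq |S_2| \leq \dots \leq |S_n| = |S_1|$ collapses, so every $f_i$ is a bijection; bijectivity yields both the reversed existential $K_{i+1} \sqsubseteq \exists r_i^-. K_i$ and the functionality $K_i \sqsubseteq (\leq 1\ r_i\ K_{i+1})$ in $\Ii$, which are precisely the reversed axioms. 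Hence $\Ii \models \widehat\Oo$.

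For the construction direction, let $\Jj \models \widehat\Oo$; I would build a $\Sigma$-finite model $\Ii$ of $\Oo$ by the same hybrid unravelling used in the proof of the lemma following Definition~\ref{def:SigmaStar}, keeping a \emph{finite core} that carries all $\Sigma^*$-types and growing the remaining existentials into a possibly infinite $\Sigma$-free periphery. Concretely, I would let $\Delta_F$ consist of finitely many representatives, drawn from $\Jj$, of the realizable types extending some $K \in \Sigma^*$, wire them using the edges of $\Jj$, and then chase the existentials of $\Oo$: an axiom $K \sqsubseteq \exists r. K'$ with $K' \in \Sigma^*$ is satisfied inside $\Delta_F$ by reusing an existing core witness, whereas one with $K' \notin \Sigma^*$ is satisfied by a fresh tree successor copied from $\Jj$. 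Since any type containing a concept $A \in \Sigma$ extends $\{A\} \in \Sigma^*$ (first clause of Definition~\ref{def:SigmaStar}), the fresh peripheral elements carry no $\Sigma$-concept, so $A^\Ii \subseteq \Delta_F$ for every $A \in \Sigma$, giving $\Sigma$-finiteness. That $\Ii \models \Oo$ follows because concept and role inclusions transfer type-wise from $\Jj$, peripheral existentials are met by construction, and the core existentials into $\Sigma^*$-types can be satisfied without clashing with functionality assertions precisely because all $\Sigma^*$-cycles are reversed in $\widehat\Oo$: the reversed back-edges guarantee that the required $\Sigma^*$-typed predecessors and successors already sit in $\Delta_F$, so reusing a single witness respects every $(\leq 1\ r\ K')$ and $(\leq 1\ r^-\ K)$ constraint.

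I expect this construction direction to be the main obstacle. The delicate point is to close up the finite core so that every existential whose witness is forced to be $\Sigma^*$-typed is satisfied by reusing the finitely many core elements, while simultaneously honouring all (inverse-)functionality assertions and gluing consistently to the infinite, $\Sigma$-free periphery. Reversal of $\Sigma^*$-cycles is exactly what removes the obstruction, providing the back-edges that let the chase fold into $\Delta_F$ rather than generate an infinite $\Sigma^*$-chain; but verifying that no reuse triggers an unintended identification violating a functionality constraint, and that each reused witness realizes a type still realizable in $\Jj$, is where the careful bookkeeping will lie.
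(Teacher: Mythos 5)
Your proposal follows essentially the same route as the paper: the first direction is the standard finite-counting/bijection argument showing that each reversed-cycle axiom already holds in any $\Sigma$-finite model (which the paper compresses into the remark that every rule of the calculus is sound w.r.t.\ $\Sigma$-finite semantics), and the converse is the same construction of a finite core carrying all the $\Sigma^*$-types, glued to a possibly infinite $\Sigma$-free periphery of fresh witnesses, that the paper obtains via the special unravelling of the cited cycle-reversion proof. The bookkeeping you flag as delicate (reusing core witnesses without violating functionality) is exactly the part the paper also delegates to that cited construction, so there is no gap relative to the paper's own level of detail.
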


\begin{proof}
The  ($\Leftarrow$) direction follows since $\Oo \subseteq \widehat \Oo$ and every every rule is sound w.r.t. the $\Sigma$-finite model semantics. 

For the  ($\Rightarrow$) direction, let $\Ii$ be a model of $\Oo$ that realizes at least one type containing 
some conjunction in $\Sigma^*$ (otherwise the statement holds trivially). 
One can construct a finite interpretation $\Jj$ using 
the special unraveling in the proof of Theorem 3 in~\cite{DBLP:conf/kr/GarciaLS14}, by considering only the set of 
types realized in $\Ii$, that contain some conjunction from $\Sigma^*$. 

In order to extend $\Jj$ to a $\Sigma$-model of $\Oo$ it remains to add missing witnesses for existential restrictions in $\Oo$, this can be done as follows.  For every element $d$ in $\Jj$ whose type contains some $K$, such that  $K \sqsubseteq \exists r. A \in \Oo$ and $d \notin (\exists r. A)^{\Jj_i}$, let $e$ be an element in $\Ii$ witnessing this requirement, which exists since $\Ii$ is a model of $\widehat \Oo$. 
If the type of $e'$ contains some conjunction from $\Sigma^*$, then 
then add an $r$-edge from $e$ to some element $d'$ in $\Jj$ with $\tp_\Jj(d') = \tp_\Ii(e')$. This element exists by construction and moreover, adding this $r$-edge does not violates functionality. 

Otherwise, if the type of $e'$ does not contain any conjunction from $\Sigma^*$,
add a fresh element of $d'$ to $\Jj$ and an $r$-edge from $e$ to $d'$ and set $\tp_\Jj(d') = \tp_\Ii(e')$.
\end{proof}

 \newcommand{\Rule}[1]{\textup{\textbf{\textsf{\small\boldmath R#1}}}}
\newcommand{\qnrleq}[3]{\ensuremath{(\leqslant #1 \; #2 \; #3)}}
\newcommand{\qnrgeq}[3]{\ensuremath{(\geqslant #1 \; #2 \; #3)}}
\begin{figure}[t]
  \centering
  \begin{small}

\begin{align*}
&\Rule1~~
\frac{}
{K \sqcap A \sqsubseteq A} 
&&\Rule2~~
\frac{}{K \sqsubseteq \top} 
\\[3mm]
&\Rule3~~
\frac{K \sqsubseteq A_i \ \  \bigsqcap  A_i \sqsubseteq C }{K \sqsubseteq C} 
&&
\Rule4~~
\frac{K \sqsubseteq \exists r. K' \ \ 
   K' \sqsubseteq \forall r^-. A  }{K \sqsubseteq A}   
\\[3mm]
&\Rule5~~
\frac{K \sqsubseteq \exists r . K' \ \  K  \sqsubseteq \forall r. A  }{K \sqsubseteq \exists r . (K'\sqcap A)}
&&
\Rule6~~
\frac{K \sqsubseteq \exists r . K' \ \ K' \sqsubseteq \bot }{K \sqsubseteq \bot}
\end{align*}

\begin{align*}
&\Rule7~~
\frac{\begin{array}{r@{~}c@{~}l@{\quad}r@{~}c@{~}l@{\qquad}r@{~}c@{~}l}
K 	& \sqsubseteq & \exists r. K_1 & K   & \sqsubseteq & \exists r . K_2 & K & \sqsubseteq & \qnrleq 1 r A    \\[0.5mm]
 K_1 & \sqsubseteq & A    				& K_2 & \sqsubseteq & A &                      & &   
\end{array}}{ \begin{array}{r@{~}c@{~}l} K  &\sqsubseteq& \exists r. (K_1 \sqcap K_2) \end{array}} 
%
\\[3mm]
%
&\Rule8~~
\frac{\begin{array}{r@{~}c@{~}l@{\quad}r@{~}c@{~}l@{\quad}r@{~}c@{~}l}
K  & \sqsubseteq & \exists r. K'     &  K   & \sqsubseteq &A  & K' & \sqsubseteq & \qnrleq 1 {r^-} A  \\[0.5mm]
K' & \sqsubseteq & \exists r^-. K_1  & K_1 & \sqsubseteq &A &      &                               
\end{array}}
 {K \sqsubseteq \exists (r_1 \cup r_2^-). K' \qquad\quad K  \sqsubseteq   A_1 \quad \text{for any~} A_1 \in K_1 }
\\[3mm]
& \Rule9~~ 
\frac{
  \begin{aligned}
   K_i 		&\sqsubseteq  \exists r_i . K_{i +1}
   &K_{i +1}	&\sqsubseteq	 \qnrleq 1 {r^-_{i}} {A_i},       	
   &   K_{i} 	&\sqsubseteq 	A_i  
  \end{aligned}
  }
  {K_1 \sqsubseteq \exists r_0^-. K_0 \quad\quad K_0 \sqsubseteq \qnrleq 1 {r_0} {A_1} 
} \quad \begin{aligned} & i < n \\  & r_i \in r_i \\ & K_0 = K_n \end{aligned} 
\end{align*}
\end{small}
  \caption{Inference Rules}
  \label{fig:rules}
\end{figure}

We then obtain the following result, where the upper bound follows from Lemma~\ref{lemma:reduc} above  and the fact (unrestricted) satisfiability in Horn-$\ALCIF$ is \exptime-complete. The lower bound follows from the complexity of satisfiability in $\mathcal{ELI}_\bot$.

\begin{theorem}
 Mixed satisfiability in Horn-\ALCIF is \exptime-complete
\end{theorem}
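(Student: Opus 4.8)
The plan is to settle the two bounds separately, reusing the generalized cycle-reversion machinery developed above.

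For the \exptime upper bound I would proceed as follows. First, using the observation (made uniformly for all the DLs considered here) that role names can be removed from $\Sigma$ in polynomial time while preserving mixed satisfiability, I reduce to instances $(\Oo,\Sigma)$ with $\Sigma \subseteq \conceptnames(\Oo)$. I then compute the set $\Sigma^*$ of Definition~\ref{def:SigmaStar}; by the complexity bound established for $\Sigma^*$, this is feasible in \exptime for Horn-$\ALCIF$. Next I run the consequence-driven calculus of Figure~\ref{fig:rules} with the modified rule $\mathbf{R9}$ (carrying the extra precondition $K_1 \in \Sigma^*$), saturating to obtain an ontology $\widehat{\Oo} \supseteq \Oo$ in which exactly the $\Sigma^*$-cycles of $\Oo$ are reversed; as noted, the saturation terminates after an exponential number of rule applications and hence runs in \exptime. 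Lemma~\ref{lemma:reduc} then guarantees that $\widehat{\Oo}$ is satisfiable iff $\Oo$ has a $\Sigma$-finite model, i.e.\ iff $(\Oo,\Sigma)$ is a positive instance of mixed satisfiability, and unrestricted satisfiability in Horn-$\ALCIF$ is decidable in \exptime. Composing these steps gives membership in \exptime.

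For the lower bound, the key observation is that mixed satisfiability specializes to ordinary (unrestricted) satisfiability when $\Sigma = \emptyset$: the finiteness requirement then becomes vacuous, so a witnessing model is just any model of the ontology. Since $\ELIbot$ is a syntactic fragment of Horn-$\ALCIF$, it therefore suffices to transfer the \exptime-hardness of satisfiability in $\ELIbot$. If the hard problem is taken to be concept satisfiability rather than ontology consistency (which, under the present semantics, is trivial because the empty instance is always a model of a nominal-free ontology), I would force the target concept $A$ to be nonempty by appending the single axiom $\{c\} \sqsubseteq A$, using a nominal, which is available in Horn-$\ALCIF$; ordinary satisfiability of the extended ontology then captures satisfiability of $A$, and the reduction stays within the logic. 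This yields \exptime-hardness of mixed satisfiability in Horn-$\ALCIF$.

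The part requiring the most care is not the correctness of the reduction---that is discharged by Lemma~\ref{lemma:reduc}---but the complexity accounting that keeps the whole procedure in \exptime rather than \twoexptime. Because saturation may involve exponentially many rule applications and manipulate exponentially many derived conjunctions, one must avoid naively handing a possibly exponentially large $\widehat{\Oo}$ to a black-box \exptime satisfiability oracle, which would only give a double-exponential bound. The clean way to organize this is to let the same consequence-driven calculus that performs the $\Sigma^*$-cycle reversion also decide satisfiability of $\widehat{\Oo}$ within its \exptime saturation (reading off inconsistency from a derivation of $\bot$), so that reversion and satisfiability testing form a single exponential stage rather than two nested ones. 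Confirming that this integration is both sound with respect to the $\Sigma$-finite-model semantics and within the \exptime time bound is where the real work lies.
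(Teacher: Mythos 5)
Your upper bound follows the paper's route exactly: eliminate role names from $\Sigma$, compute $\Sigma^*$ as in Definition~\ref{def:SigmaStar}, reverse all $\Sigma^*$-cycles with the modified calculus, and combine Lemma~\ref{lemma:reduc} with \exptime satisfiability of Horn-$\ALCIF$. Your worry about the complexity accounting is fair---the paper only says the saturation takes exponentially many rule applications, and handing an exponentially large $\widehat\Oo$ to a black-box \exptime decider would overshoot---and your resolution (let the same consequence-driven saturation decide satisfiability by watching for derivations of $\bot$) is the standard and correct way to keep everything within one exponential stage; alternatively one observes, as in Theorem~\ref{th:finsat}, that the reversal only needs to add polynomially many axioms over conjunctions occurring in $\Oo$.

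The genuine gap is in your lower bound. Horn-$\ALCIF$ has no nominals: by the paper's naming convention, the absence of $\mathcal{O}$ means nominals are disallowed, so the axiom $\{c\}\sqsubseteq A$ is not expressible and your reduction produces an ontology outside the target logic, establishing hardness only for a strictly larger DL. The nonemptiness problem you are trying to patch is real under the paper's instance-based semantics (the empty instance models every nominal-free ontology, so plain ontology satisfiability is trivial), but the patch must stay inside Horn-$\ALCIF$. The repair the paper implicitly relies on is the nominal-free internalization of concept satisfiability: to decide whether a concept name $A$ is satisfiable w.r.t.\ an $\ELIbot$ ontology $\Oo$ (the \exptime-hard problem being invoked), add $\top \sqsubseteq \exists u. A$ for a fresh role name $u$; this axiom is itself in $\ELIbot$, and under the usual nonempty-domain convention it forces $A$ to be realized, so unrestricted---equivalently, with $\Sigma=\emptyset$, mixed---satisfiability of the extended ontology coincides with satisfiability of $A$ w.r.t.\ $\Oo$. (Under a strictly literal reading of the active-domain semantics even this does not exclude the empty model; that caveat applies equally to the paper's own one-line lower bound, whereas the appeal to a nominal is an error specific to your argument.)
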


\noindent
To obtain the complexity bounds for the other considered DLs, first note that Definition~\ref{def:sigmaCycle} applies to all of them. Then, note that for instance, in $\DLLITE_{\mathit{core}}^{\mathcal{F}}$ conjunction  considered for cycles are  either empty i.e., $\top$ or have a single concept name, and functionality assertions $\func(r)$ are equivalent to having the axiom $ \top  \sqsubseteq (\leq 1 \ r \ \top)$. Finally, note that for $\DLLITE_{\mathit{core}}$ ontologies, which are not able to express conjunction on the left side of axioms $\Sigma^*$ can be regarded simply as a superset of $\Sigma$.

Next, we assume that the unary type of an element in a model $\Oo$ indicates whether it belongs to the domain and range of a given role $r$. This consideration is  without loss of generality as every ontology $\Oo$ can be transformed into such an equisatisfiable ontology, by adding      
for every role  $r$ name (and its inverse $r^-$) occurring in $\Oo$ the following  axioms 
$A_r  \equiv  \exists r. \top $ and $A_{r^-} \equiv \exists r^-.  \top$. 
Now, identifying a $\Sigma^*$-cycle in $\DLLITE_{\mathit{Horn}}^{\mathcal{F}}$ can be done by representing the ontology as a directed graph in which the nodes are conjunctions of concept names occurring in the ontology,  and  where there is an edge between $K, K'$ if $K \sqsubseteq_{\Oo} \exists r$ and $\exists r^- \sqsubseteq_\Oo K'$  with $\func(r) \in \Oo$.

In the case of $\DLLITE_{\mathit{core}}^{\mathcal{F}}$ the graph is simpler. The nodes are the concept names representing the domain and range of roles in  the ontology, with an edge between $A_r$ and $A_{s^-}$ if 
$A_r \subseteq_\Oo \exists s^-$ and $\func(s^-) \in \Oo$.

Now, note that deciding $\sqsubseteq_\Oo$ can be done in polynomial time on the size of the ontology for  $\DLLITE_{\mathit{Horn}}^{\mathcal{F}}$ and in NLogSpace for $\DLLITE_{\mathit{core}}^{\mathcal{F}}$. Further,  deciding whether an edge in a directed graph belongs to a cycle in LogSpace. We then obtain the following result.

\begin{proposition}~\label{prop:litecycles}
 An ontology $\widehat \Oo \supseteq \Oo$ in which all $\Sigma^*$-cycles are reversed can be constructed 
 
 \begin{itemize}
  \item in \ptime,  for a $\DLLITE_{\mathit{Horn}}^{\mathcal{F}}$ ontology $\Oo$, and
  \item in \nlogspace for a $\DLLITE_{\mathit{core}}^{\mathcal{F}}$ ontology $\Oo$. 
 \end{itemize}

\end{proposition}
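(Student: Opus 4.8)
The plan is to replace the exponential fixpoint of the consequence-driven calculus of Figure~\ref{fig:rules} (which is what one is forced to run for full Horn-$\ALCIF$) by a single pass over a polynomially bounded directed graph, exploiting the two graph representations described immediately before the statement. The starting observation is that reversing a $\Sigma^*$-cycle $K_1,r_1,\dots,r_{n-1},K_n$ in the sense of Definition~\ref{def:sigmaCycle} amounts to guaranteeing, for each consecutive pair, the entailments $K_{i+1}\sqsubseteq\exists r_i^-.K_i$ and $K_i\sqsubseteq(\leq 1\; r_i\; K_{i+1})$, which is secured simply by adding the corresponding conclusions of rule \textbf{R9} of Figure~\ref{fig:rules}. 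Thus an edge of the graph must be reversed exactly when it lies on some $\Sigma^*$-cycle of $\Oo$. Crucially, $\Sigma^*$-cycles are defined relative to the \emph{fixed} relation $\sqsubseteq_\Oo$ and the \emph{fixed} set $\Sigma^*$ of Definition~\ref{def:SigmaStar}; since neither changes as we add reversal axioms, one pass suffices and no re-detection of cycles created in $\widehat{\Oo}$ is needed, in sharp contrast to the iterated calculus.

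First I would build the graph $G$ of the preceding discussion. For $\DLLITE_{\mathit{Horn}}^{\mathcal{F}}$ its nodes are the conjunctions occurring in $\Oo$, with an edge from $K$ to $K'$ whenever $K\sqsubseteq_\Oo\exists r$, $\exists r^-\sqsubseteq_\Oo K'$ and $\func(r)\in\Oo$; for $\DLLITE_{\mathit{core}}^{\mathcal{F}}$ the nodes are the domain/range markers $A_r$ introduced by the normalizing axioms $A_r\equiv\exists r.\top$, and the edges are given by the simpler core condition. Each adjacency test is a constant number of $\sqsubseteq_\Oo$-queries, and the number of candidate nodes is polynomial (the conjunctions syntactically present in $\Oo$ in the \emph{Horn} case; linearly many markers in the \emph{core} case), so $G$ itself has polynomial size. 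I would then compute $\Sigma^*$ and mark its nodes.

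Next I would identify the edges to reverse. An edge $(u,v)$ lies on a $\Sigma^*$-cycle iff $u$ and $v$ belong to a common nontrivial strongly connected component of $G$ that contains at least one $\Sigma^*$-node: inside a single component one can always route a closed walk through any prescribed edge and any prescribed $\Sigma^*$-node, and conversely any $\Sigma^*$-cycle places the edge together with a $\Sigma^*$-node in one component. This is purely a reachability test. For every edge passing it I would emit the two reversal inclusions, together with all of $\Oo$, onto the output, yielding $\widehat{\Oo}\supseteq\Oo$ of polynomial size; correctness then reduces to checking that $\widehat{\Oo}$ reverses precisely the $\Sigma^*$-cycles of $\Oo$, so that Lemma~\ref{lemma:reduc} applies.

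It remains to read off the stated bounds. For $\DLLITE_{\mathit{Horn}}^{\mathcal{F}}$ every ingredient is polynomial: the adjacency queries are \ptime instances of $\sqsubseteq_\Oo$, $\Sigma^*$ is \ptime-computable, and the strongly-connected-component analysis and the emission of axioms are \ptime, so the whole construction runs in \ptime. For $\DLLITE_{\mathit{core}}^{\mathcal{F}}$ each ingredient drops to \nlogspace: $\sqsubseteq_\Oo$ is in \nlogspace, $\Sigma^*$ (here merely a superset of $\Sigma$) is \nlogspace-computable, the marker graph is \nlogspace-describable, and membership of an edge in a suitable cycle is a reachability question in \nlogspace; since \nlogspace is closed under composition, $\widehat{\Oo}$ is produced in \nlogspace. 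The main obstacle I anticipate is the correctness argument of the third paragraph, namely establishing that the one-pass, component-based reversal genuinely coincides with the output of the calculus; the decisive point is that the DL-Lite syntactic restrictions keep the set of relevant left-hand conjunctions polynomial, so that, unlike in Horn-$\ALCIF$, no further rule applications can create $\Sigma^*$-cycles outside~$G$.
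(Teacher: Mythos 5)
Your proposal follows essentially the same route as the paper: represent $\Oo$ as the polynomial-size directed graph described before the statement (conjunctions resp.\ domain/range markers as nodes, edges determined by $\sqsubseteq_\Oo$-queries and functionality), detect the edges lying on $\Sigma^*$-cycles via reachability, and emit the \textbf{R9}-style reversal axioms, with the complexity read off from the cost of $\sqsubseteq_\Oo$, of computing $\Sigma^*$, and of graph reachability in each logic. Your extra elaborations (the SCC-with-a-$\Sigma^*$-node characterization of the edges to reverse, and the remark that only cycles of the original $\Oo$ need reversing so a single pass suffices) are correct and merely make explicit what the paper leaves implicit.
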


Proposition~\ref{prop:litecycles} and Lemma~\ref{lemma:reduc} yield the desired result:
\begin{corollary}
 Mixed satisfiability  is \ptime-complete for $\DLLITE_{\mathit{Horn}}^{\mathcal{F}}$, and complete for \nlogspace for $\DLLITE_{\mathit{core}}^{\mathcal{F}}$.  
\end{corollary}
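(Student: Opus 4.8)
The plan is to instantiate, for each of the two DL-Lite fragments, the same reduction-based scheme that was just used for Horn-$\ALCIF$: mixed satisfiability reduces to ordinary satisfiability of the cycle-reversed ontology $\widehat\Oo$ via Lemma~\ref{lemma:reduc}, so all that remains is to account for the cost of (i)~constructing $\widehat\Oo$ and (ii)~deciding its ordinary satisfiability. For the upper bounds I would combine Proposition~\ref{prop:litecycles} with the known complexity of (unrestricted) satisfiability in these logics. For $\DLLITE_{\mathit{Horn}}^{\mathcal{F}}$, Proposition~\ref{prop:litecycles} yields $\widehat\Oo$ in $\ptime$, and ordinary satisfiability of a $\DLLITE_{\mathit{Horn}}^{\mathcal{F}}$ ontology is decidable in $\ptime$ (it is a Horn DL, so consistency is testable by a polynomial-time completion procedure); composing two polynomial-time steps stays in $\ptime$. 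For $\DLLITE_{\mathit{core}}^{\mathcal{F}}$, Proposition~\ref{prop:litecycles} constructs $\widehat\Oo$ in $\nlogspace$ and ordinary satisfiability of $\DLLITE_{\mathit{core}}^{\mathcal{F}}$ ontologies is in $\nlogspace$, so it remains only to argue that the composition stays within $\nlogspace$.

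For the matching lower bounds I would observe that mixed satisfiability is a conservative generalisation of ordinary satisfiability: taking $\Sigma=\emptyset$ imposes no finiteness constraint, so an instance $(\Oo,\emptyset)$ is positive for mixed satisfiability exactly when $\Oo$ is satisfiable in the usual sense. Consequently every hardness result for ordinary satisfiability transfers verbatim. Satisfiability in $\DLLITE_{\mathit{Horn}}^{\mathcal{F}}$ is $\ptime$-hard, inherited from Horn reasoning via the standard encoding of propositional Horn satisfiability, giving $\ptime$-hardness; satisfiability in $\DLLITE_{\mathit{core}}^{\mathcal{F}}$ is $\nlogspace$-hard, inherited from $\DLLITE_{\mathit{core}}$ whose consistency problem is $\nlogspace$-complete via a reduction from directed reachability, giving $\nlogspace$-hardness. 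Together with the upper bounds this yields $\ptime$-completeness and $\nlogspace$-completeness, respectively.

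I expect the genuine difficulty to lie only in the $\nlogspace$ upper bound, namely in composing the two $\nlogspace$ stages without materialising $\widehat\Oo$, which in general is too large to store in logarithmic space. The key is that, as prepared in the discussion preceding Proposition~\ref{prop:litecycles}, the whole pipeline is expressible in terms of reachability in explicitly described graphs: deciding $\sqsubseteq_\Oo$, membership of a conjunction in $\Sigma^*$, detection of $\Sigma^*$-cycles (``does this edge lie on a cycle?''), and the final consistency test all reduce to directed reachability. Since $\nlogspace$ is closed under $\logspace$ reductions and, by Immerman--Szelepcs\'enyi, under complementation, these constantly many reachability queries can be interleaved on the fly, each recomputed from the input ontology whenever its answer is needed rather than stored. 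Making this interleaving explicit --- so that the reversing axioms of $\widehat\Oo$ are never written down but are instead simulated by on-demand reachability checks --- is the only step that requires care; once it is in place the stated bounds follow immediately from Lemma~\ref{lemma:reduc} and Proposition~\ref{prop:litecycles}.
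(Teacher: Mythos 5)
Your proposal matches the paper's own argument: the corollary is obtained exactly by combining Lemma~\ref{lemma:reduc} with Proposition~\ref{prop:litecycles} and the known complexity of unrestricted satisfiability, with lower bounds inherited from ordinary satisfiability (which, as you note, is the special case $\Sigma=\emptyset$). Your extra care about composing the two \nlogspace{} stages via on-the-fly reachability queries (rather than materialising $\widehat\Oo$) is a correct and welcome elaboration of a step the paper leaves implicit.
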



%
%
%
%
%
%
%


\end{document}